\definecolor{blue}{rgb}{0,0,1}
\definecolor{green}{rgb}{0,1,0}
\definecolor{orange}{rgb}{0.9,0.4,0}
\newtheorem{proposition}{Proposition}
\newtheorem{lemma}{Lemma}
\crefname{algocf}{Alg.}{Algs.}
\Crefname{algocf}{Algorithm}{Algorithms}
\newcommand*\widebar[1]{\overline{#1}}
\DeclarePairedDelimiterX{\kldivx}[2]{\big[}{\big]}{%
  #1\;\delimsize\|\;#2%
}
\newcolumntype{L}[1]{>{\raggedright\let\newline\\\arraybackslash\hspace{0pt}}m{#1}}
\newcolumntype{C}[1]{>{\centering\let\newline\\\arraybackslash\hspace{0pt}}m{#1}}
\newcolumntype{R}[1]{>{\raggedleft\let\newline\\\arraybackslash\hspace{0pt}}m{#1}}
\newlength{\myl}
\let\origequation=\equation
\let\origendequation=\endequation
\resizebox{\linewidth}{!}{\ensuremath{\BODY}}}  % False
\title{UPMRI: Unsupervised Parallel MRI Reconstruction via Projected Conditional Flow Matching}
\author{Xinzhe Luo, Yingzhen Li, Chen Qin
\IEEEcompsocitemizethanks{
  \IEEEcompsocthanksitem X. Luo and C. Qin are with Imperial College London Department of Electrical and Electronic Engineering and I-X, London, United Kingdom.
  E-mail: \{x.luo, c.qin15\}@imperial.ac.uk.
  \IEEEcompsocthanksitem Y. Li is with Imperial College London Department of Computing, London, United Kingdom.
  E-mail: yingzhen.li@imperial.ac.uk.
  \IEEEcompsocthanksitem This work was supported by the Engineering and Physical Sciences Research Council (EPSRC) UK grants TrustMRI[EP/X039277/1].
  }
  \thanks{Manuscript received ..; revised ..}
}
\begin{document}

\IEEEtitleabstractindextext{
\begin{abstract}
  Reconstructing high-quality images from substantially undersampled k-space data for accelerated MRI presents a challenging ill-posed inverse problem.
  While supervised deep learning has revolutionized this field, it relies heavily on large datasets of fully sampled ground-truth images, which are often impractical or impossible to acquire in clinical settings due to long scan times.
  Despite advances in self-supervised/unsupervised MRI reconstruction, their performance remains inadequate at high acceleration rates.
  To bridge this gap, we introduce UPMRI, an unsupervised reconstruction framework based on Projected Conditional Flow Matching (PCFM) and its unsupervised transformation.
  Unlike standard generative models, PCFM learns the prior distribution of fully sampled parallel MRI data by utilizing only undersampled k-space measurements. 
  To reconstruct the image, we establish a novel theoretical link between the marginal vector field in the measurement space, governed by the continuity equation, and the optimal solution to the PCFM objective. 
  This connection results in a cyclic dual-space sampling algorithm for high-quality reconstruction.
  Extensive evaluations on the fastMRI brain and CMRxRecon cardiac datasets demonstrate that UPMRI significantly outperforms state-of-the-art self-supervised and unsupervised baselines. 
  Notably, it also achieves reconstruction fidelity comparable to or better than leading supervised methods at high acceleration factors, while requiring no fully sampled training data.
\end{abstract}

\begin{IEEEkeywords}
  Inverse Problems, Magnetic Resonance Imaging, Generative Modeling, Flow Matching, Unsupervised Learning
\end{IEEEkeywords}

}

\maketitle
\IEEEdisplaynontitleabstractindextext
\IEEEpeerreviewmaketitle

\section{Introduction}
Magnetic Resonance Imaging (MRI) is a cornerstone of modern medical diagnostics, providing exceptional soft-tissue contrast without the use of ionizing radiation. 
However, a significant clinical limitation of MRI is its inherently slow data acquisition speed. 
The time required for a scan is directly proportional to the amount of data that must be acquired in the MRI raw data space, known as k-space. 
The development of multi-coil receiver arrays, replacing single coils with smaller, localized ones \citep{journal/mrm/roemer1990, journal/mrm/sodickson1997}, enabled accelerated data acquisition through parallel imaging. 
Nevertheless, reconstructing high-quality MRI from undersampled k-space measurements constitutes a challenging ill-posed inverse problem.
Key algorithms, SENSitivity Encoding (SENSE) \citep{journal/mrm/pruessmann1999} and GeneRalized Autocalibrating Partially Parallel Acquisitions (GRAPPA) \citep{journal/mrm/griswold2002}, laid the groundwork for this. 
By combining parallel imaging with sparsity-promoting terms, later compressed sensing (CS)-based methods achieve higher acceleration than traditional techniques \citep{journal/mrm/lustig2007, journal/spm/lustig2008}.

Building on this concept, data-driven models, particularly physics-informed ``unrolled'' neural networks that emulate classical iterative optimization methods and leverage prior knowledge encoded by convolutional neural networks (CNNs) \citep{conference/cvpr/ulyanov2018}, attain descent performance in terms of both reconstruction quality and speed \citep{journal/tmi/aggarwal2018, journal/mrm/hammernik2018}.
More recently, advancements in accelerated MRI reconstruction have been pursued through modern generative models.
These generative models, rather than focusing on learning a single point estimate, are designed to approximate the entire probability distribution of high-quality MR images \citep{journal/tmi/mardani2018, journal/tmi/tezcan2018, conference/iclr/song2022, journal/media/chung2022}.
This approach aids in solving the inverse problem by enabling sampling from the posterior distribution $p(\bm{x}\mid\bm{y})$, where $\bm{x}$ represents the target fully sampled image, and $\bm{y}$ includes the undersampled multi-coil k-space measurement.
Although these frameworks can attain high reconstruction accuracy, they require large datasets of fully sampled ground-truth images for training, which are expensive and often unavailable. 
For brevity, we refer to them as supervised methods.

Emerging self-supervised and unsupervised techniques aim to reduce the reliance on fully sampled MRI datasets during training \citep{arxiv/wang2025, journal/mrm/yaman2020, journal/tmi/millard2023, journal/bioeng/millard2024, conference/ipmi/luo2025}.
For example, the self-supervised  data underampling (SSDU) family \cite{journal/mrm/yaman2020, journal/tmi/millard2023, journal/bioeng/millard2024} introduced a method in which the observed undersampled k-space measurements are divided into two separate subsets. 
Then a model-based reconstruction network is trained to reconstruct one subset using the other.
More recently, our previous work \cite{conference/ipmi/luo2025} proposed an unsupervised prior learning framework for single-coil MRI reconstruction. 
Nevertheless, these methods fall short in accuracy when dealing with highly undersampled data, e.g., $8\times$ accelerated multi-coil MRI.

To address these challenges, we propose UPMRI, an unsupervised generative model for parallel MRI reconstruction that requires solely undersampled k-space data for training. 
Drawing inspiration from the generalized Stein's Unbiased Risk Estimator (GSURE) \citep{journal/ans/stein1981, journal/tsp/eldar2008}, we present the projected conditional flow matching (PCFM) objective alongside its unsupervised adaptation, which facilitates learning the prior distribution of fully sampled MRI using only undersampled data. 
Since a closed-form projection operator is intractable for parallel MRI, we employ a numerical approximation during training to project the flow matching objective onto the range space of the forward operator.
Furthermore, we derive a new connection between the probability flow in the measurement space under projection and the optimal solution to the proposed PCFM objective.
This facilitates a new inference strategy: a dual-space cyclic integration algorithm that seamlessly couples measurement-space consistency with image-space prior sampling. 
The contributions of this work are summarized as follows:
\begin{itemize}
  \item We propose PCFM, a novel objective that learns the prior distribution of fully sampled multi-coil MRI data using only undersampled measurements, eliminating the need for ground truth.
  \item We derive a theoretical connection between the optimal PCFM solution and probability vector fields in measurement space, enabling a robust dual-space cyclic reconstruction algorithm.
  \item We validate the framework on large-scale public brain (fastMRI) and cardiac (CMRxRecon) datasets. Our results demonstrate that UPMRI sets a new state-of-the-art for unsupervised reconstruction and outperforms several supervised baselines.
\end{itemize}

The remainder of the article is organized as follows.
\cref{sec:related_work} discusses the related work to this study.
\cref{sec:background} states the problem formulation and presents key concepts of flow matching for generative modeling.
\cref{sec:method} describes the proposed projected conditional flow matching objective and the dual-space cyclic flow integration algorithm.
\cref{sec:experiment} presents the experimental setups and evaluation results of our method.
\cref{sec:conclusion} discusses the proposed framework and concludes the study.

\section{Related Work}\label{sec:related_work}

\subsection{Diffusion model \& flow matching for inverse problems}
Diffusion models and flow matching constitute two complementary classes of deep generative models for approximating complex, high-dimensional data distributions.
Diffusion models learn the score field $\nabla_{\bm{x}}\log p_t(\bm{x}_t)$ across a continuum of noise levels parametrized by $t$ via denoising score matching, and generate samples by integrating either the reverse-time stochastic differential equation (SDE) or its probability-flow ordinary differential equation (ODE) \cite{conference/nips/ho2020, conference/iclr/song2021}.
Flow matching, by contrast, learns a time-dependent vector field $\bm{v_{\theta}}(\bm{x}_t, t)$ that transports a simple base distribution to the data distribution by solving a continuity equation, enabling deterministic sample generation with ODE solvers \cite{conference/iclr/lipman2023}.

For linear inverse problems with measurement model $\bm{y}=\bm{Ax}+\bm{e}$, these generative models are used to encode a powerful data-driven prior $p(\bm{x})$ learned from large training datasets.
At inference, conditioning measurements $\bm{y}$ enables posterior sampling under $p(\bm{x}\mid\bm{y})\propto p(\bm{y}\mid\bm{x})p(\bm{x})$.
In score-based formulations, the posterior score naturally decomposes as 
\begin{equation}
  \nabla_{\bm{x}_t}\log p_t(\bm{x}_t\mid\bm{y}) = \nabla_{\bm{x}_t}\log p_t(\bm{x}_t) + \nabla_{\bm{x}_t}\log p_t(\bm{y}\mid\bm{x}_t),
\end{equation}
where the prior score is provided by the trained diffusion or flow matching model, and the likelihood score enforces measurement consistency.
Comprehensive surveys of diffusion-based approaches to inverse problems are provided in \cite{arxiv/daras2024, arxiv/chung2025}.

Within this paradigm, diffusion posterior sampling (DPS) approximates the likelihood score with the Tweedie's formula and uses a score-based SDE sampler \cite{conference/iclr/chung2023}; however, practical configurations typically require a large number of function evaluations, i.e., $\operatorname{NFE}\approx 1000$, to achieve high-fidelity reconstruction.
To improve computational efficiency, the denoising diffusion null-space model (DDNM) proposes to enforce measurement consistency by range-null-space decomposition during deterministic DDIM-style sampling and can reduce the sampling cost to roughly $\operatorname{NFE}\approx 100$ while maintaining high reconstruction quality \cite{conference/iclr/wang2023, conference/iclr/ddim2021}.

Flow matching has recently been extended to inverse problems, offering deterministic transport with competitive quality and improved stability under coarse solver discretizations \cite{journal/tmlr/pokle2024, conference/neurips/zhang2024, conference/cvpr/qin2025, conference/iclr/martin2025, conference/iclr/yanfig2025}.
For example, one line of work combines a flow-matching prior velocity with a likelihood score derived from $\Pi$GDM \cite{journal/tmlr/pokle2024, conference/iclr/song2023}.
Another line integrates plug-and-play (PnP) iterations with flow matching to alternate between learned flow-driven denoising and explicit data-consistency gradient steps, thereby improving performance across various inverse problems \cite{conference/iclr/martin2025}.

Despite these advances, both diffusion and flow-matching approaches typically require large, diverge collections of ground-truth images to learn an expressive prior.
In magnetic resonance imaging (MRI), aseembling such datasets with fully-sampled k-space is often impractical or prohibitively expensive due to long acquisition times and patient constraints, limiting the scalability of fully supervised generative priors in real-world deployment \cite{journal/mrm/lustig2007, journal/jmri/marques2019}.

\subsection{Self-supervised \& unsupervised MRI reconstruction}
We explicitly differentiate two training paradigms for MRI reconstruction from solely undersampled data:
(1) Self-supervised approaches that deliberately introduce additional subsampling into the already undersampled k-space, using only a subset of the acquired measurements for supervision;
and (2) unsupervised approaches that utilize all available undersampled k-space measurements without introducing further subsampling.
In particular, the SSDU family implements a holdout strategy by partitioning the acquired undersampled k-space into two disjoint subsets and training a network to predict one subset from the other \cite{journal/mrm/yaman2020, journal/tmi/millard2023, journal/bioeng/millard2024}.
This scheme promotes generalization by preventing trivial identity mappings, but it necessarily discards a portion of the already scarce measurement samples during training, which can limit data utilization efficiency in high-acceleration regimes.
Beyond SSDU, multiple recent works pursue prior learning directly from corrupted measurements using deep generative models.
The ambient diffusion line of work proposes to optimize the standard denoising diffusion objective under self-supervised corruptions applied to the measurements themselves \cite{conference/neurips/daras2023, conference/iclr/aali2024, conference/icml/daras2024, arxiv/daras2025}.

Our work is more related to approaches that adapt the Ensemble Stein's Unbiased Risk Estimator (ENSURE) to train diffusion models or flow matching without ground-truth data.
ENSURE constructs an unbiased surrogate of the mean squared error (MSE), enabling unsupervised training directly from measurements \cite{journal/tmi/aggarwal2022}.
Recent works integrate ENSURE with diffusion and flow-matching objectives to learn prior scores/velocities in an unsupervised fashion \cite{journal/tmlr/kawar2024, conference/ipmi/luo2025}. 
However, the ENSURE framework only provides an unbiased estimate under the single-coil MRI setting where a weighted projection operator is in closed-form, allowing the surrogate to be evaluated efficiently within each optimization step.

In this study, we make the necessary and substantial advance over our previous work \cite{conference/ipmi/luo2025} and additional contributions in the following aspects:
\begin{itemize}
  \item \textbf{Generalization to parallel MRI.} 
  We introduce PCFM to handle the more general multi-coil parallel MRI, while our previous work is only applicable to MRI coil-wisely. 
  We propose a numerical approximation strategy using conjugate gradient during training to project the flow matching objective onto the range space of the forward operator.
  Then, an unsupervised training objective is derived by leveraging the GSURE framework, which learns the prior distribution of fully sampled MRI from only undersampled k-space measurements.

  \item \textbf{Rigorous theoretical analysis.} 
  We provide a comprehensive theoretical framework (Propositions 1–4) in this work. 
  We derive a novel connection between the marginal vector field in the measurement space and the optimal solution to the PCFM objective. 
  This theoretical link justifies our proposed dual-space cyclic integration algorithm, ensuring measurement consistency is mathematically grounded in the flow dynamics.

  \item \textbf{New datasets and benchmarks.}
  We have added the CMRxRecon cardiac MRI dataset, a challenging domain with complex anatomical structures. 
  We benchmark against a wider array of state-of-the-art reconstruction methods, including supervised, self-supervised, and unsupervised baselines.

  \item \textbf{State-of-the-art performance.} 
  We demonstrate that UPMRI outperforms the previous conference method by a significant margin (e.g., +8.2/6.7 dB PSNR on $4\times$/$8\times$ accelerated brain MRI) and achieves reconstruction quality comparable to, and in some cases surpassing, supervised diffusion baselines (e.g., DDNM\textsuperscript{+}) while requiring significantly fewer function evaluations (NFE).
\end{itemize}

\section{Preliminaries}\label{sec:background}

\subsection{Parallel MRI Reconstruction}
The fundamental principle of parallel MRI builds on the fact that each coil in an array receives a distinct spatial sensitivity profile, that is, a unique spatially weighted view of the underlying anatomy.
This spatial encoding ability can be used to compensate for the spatial information lost when k-space is undersampled.
Formally, the forward model of parallel (also known as multi-coil) MRI writes 
\begin{equation}
  \bm{y}_0 = \bm{A}_s\bm{x}_0 + \bm{e},
\end{equation}
where $s$ indexes the randomness in the undersampling mask, $\bm{x}_0\in\mathcal{X}\subset\mathds{C}^D$ is the underlying fully sampled complex-valued image, $\bm{y}_0=[\bm{y}_{0,1}^{\intercal},\dots,\bm{y}_{0,C}^{\intercal}]^{\intercal}\in\mathcal{Y}\subset\mathds{C}^{Cd}$ is the acquired k-space measurements from $C$ receiver coils with $d\leq D$, and $\bm{e}\sim\mathcal{CN}(\bm{0},\sigma_0^2\bm{I}_{Cd})$ denotes measurement noise.
% The subscript $s$ will be omitted if the randomness in $s$ is already assumed in $\bm{y}$, in which case we will simply denote the forward operator by $\bm{A}$.
In particular, the \emph{coil-combined} forward operator $\bm{A}_s$ is defined as 
\begin{equation}\label{eq:forward_operator}
  \bm{A}_s \triangleq 
  \begin{bmatrix}
    \bm{M}_s\bm{F}\bm{S}_{1} \\
    \vdots \\
    \bm{M}_s\bm{F}\bm{S}_{C}
  \end{bmatrix} \in \mathds{C}^{Cd\times D},
\end{equation}
which is composed of the undersampling mask $\bm{M}_s\in\{0,1\}^{d\times D}$, the discrete Fourier transform $\bm{F}\in\mathds{C}^{D\times D}$, and diagonal matrices $\bm{S}_{c}\in\mathds{C}^{D\times D}$ representing the sensitivity maps.
For parallel MRI with acceleration factor $\alpha\triangleq\nicefrac{D}{d}> 1$, \cref{eq:forward_operator} can be rank-deficient with a non-zero null space, leading to a challenging ill-posed inverse problem.
We adopt the central assumption that every k-space location is observed at least once within the dataset, which entails the following lemma (proof in \cref{app:lemma1}).
\begin{lemma}[\textbf{Measurement completeness}]\label{lemma1}
  The average projection, $\mathbb{E}_s[\bm{P}_s]$, where $\bm{P}_s=\bm{A}_s^+\bm{A}_s$, is invertible when every k-space location has a non-zero probability of being acquired.
\end{lemma}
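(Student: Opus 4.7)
The plan is to show that $\bar{\bm{P}}\defeq\mathbb{E}_s[\bm{P}_s]$ is strictly positive definite. Since each $\bm{P}_s=\bm{A}_s^+\bm{A}_s$ is the orthogonal projector onto $(\operatorname{null}\bm{A}_s)^{\perp}$, for any $\bm{v}\in\mathds{C}^D$ one has
\[
\bm{v}^{*}\bar{\bm{P}}\bm{v}=\mathbb{E}_s\bigl[\|\bm{P}_s\bm{v}\|^{2}\bigr]\geq 0,
\]
so $\bar{\bm{P}}\succeq\bm{0}$ automatically. Equality holds iff $\bm{P}_s\bm{v}=\bm{0}$ almost surely in $s$, which in turn is equivalent to $\bm{A}_s\bm{v}=\bm{0}$ almost surely, since the kernel of this projector is precisely $\operatorname{null}\bm{A}_s$. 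The task therefore reduces to proving that $\bm{A}_s\bm{v}=\bm{0}$ for almost every mask $s$ forces $\bm{v}=\bm{0}$.

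To do this I would unpack the coil-wise structure of the forward operator in \cref{eq:forward_operator}. For each coil $c$, $\bm{A}_s\bm{v}=\bm{0}$ gives $\bm{M}_s\bm{F}\bm{S}_c\bm{v}=\bm{0}$. The hypothesis that every k-space index $i\in\{1,\ldots,D\}$ is acquired with positive probability means that for each $i$ one can select a set of masks of positive measure whose $\bm{M}_s$ reads out location $i$; on this set the $i$-th component of $\bm{F}\bm{S}_c\bm{v}$ must therefore vanish. Running $i$ over all indices yields $\bm{F}\bm{S}_c\bm{v}=\bm{0}$, and invertibility of the DFT then gives $\bm{S}_c\bm{v}=\bm{0}$ for every $c$. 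Invoking the standard parallel-MRI sensitivity-coverage condition $\sum_{c=1}^{C}\bm{S}_c^{*}\bm{S}_c\succ\bm{0}$ (no voxel is left uncovered by the coil array) finally forces $\bm{v}=\bm{0}$, so the null space of $\bar{\bm{P}}$ is trivial and $\bar{\bm{P}}$ is invertible.

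The main delicate step I expect is the passage from $\mathbb{E}_s[\|\bm{P}_s\bm{v}\|^{2}]=0$ to componentwise vanishing of $\bm{F}\bm{S}_c\bm{v}$ across a collection of mask realisations that jointly cover all k-space indices; for the discrete sampling distributions considered in the paper this is immediate, but it should be phrased precisely so that the measure-zero exception set is handled uniformly across the $D$ indices. A secondary point is that the coil-coverage condition on $\{\bm{S}_c\}_{c=1}^{C}$ is genuinely needed in addition to the measurement-completeness hypothesis stated in the lemma, and I would flag this explicitly rather than leaving it implicit.
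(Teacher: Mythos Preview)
Your proposal is correct and follows essentially the same route as the paper's proof: both reduce invertibility of $\bar{\bm{P}}$ to strict positive definiteness via $\bm{v}^*\bar{\bm{P}}\bm{v}=\mathbb{E}_s\|\bm{P}_s\bm{v}\|^2$, then argue that $\bm{A}_s\bm{v}=\bm{0}$ across masks covering all k-space indices forces $\bm{F}\bm{S}_c\bm{v}=\bm{0}$, hence $\bm{S}_c\bm{v}=\bm{0}$, and finally invoke the parallel-imaging coverage assumption $\sum_c\bm{S}_c^*\bm{S}_c\succ\bm{0}$ to conclude $\bm{v}=\bm{0}$. Your explicit flagging of the coil-coverage condition as an auxiliary hypothesis is a welcome clarification, since the paper also introduces it mid-proof as a ``standard assumption'' rather than stating it in the lemma.
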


\subsection{Conditional Flow Matching}
Conditional flow matching (CFM) \citep{conference/iclr/lipman2023} provides a simulation-free technique to learn a continuous normalizing flow \citep{conference/neurips/chen2018} that transforms a base distribution $p_1$ to a target distribution $p_0$.
In this work, we assume $p_1^X=\mathcal{CN}(\bm{0},2\bm{I}_D)$, and $p_0^X$ produces the underlying fully sampled images, where the superscript $X$ indicates that the flow is in the fully sampled image space $\mathcal{X}$.
This transformation can be specified by an ordinary differential equation (ODE) with a time-dependent smooth vector field $\bm{u}_t^X:[0,1]\times\mathds{C}^D\rightarrow\mathds{C}^D$, i.e., $\dd\bm{x}_t=\bm{u}_t^X(\bm{x}_t)\dd t$.
This induces a probability path $p_t^X$ as the push-forward distribution of $p_0^X$ by the ODE dynamics, satisfying the continuity equation \citep{book/villani2009}:
\begin{equation}\label{eq:continuity}
  \frac{\partial p_t^X}{\partial t} + \nabla\cdot(p_t^X\bm{u}_t^X) = 0.
\end{equation}

CFM proposes to construct such a marginal vector field $\bm{u}_t^X$ by introducing the conditional variable $\bm{z}^X\sim q(\bm{z}^X)$ and conditional vector field $\bm{u}_t^X(\bm{x}_t\mid\bm{z}^X)$.
In this paper, we consider the popular choice of $\bm{z}^X=(\bm{x}_0,\bm{x}_1)$ and the independent coupling $q = p_0^X\times p_1^X$ as generalized by \citep{journal/tmlr/tong2023}.
Then, we assume a conditional probability path $p_t^X(\cdot\mid\bm{z}^X)$ that satisfies the boundary conditions $p_0^X=\mathbb{E}_{q(\bm{z}^X)}\left[p_0^X(\cdot\mid\bm{z}^X)\right]$ and $p_1^X=\mathbb{E}_{q(\bm{z}^X)}\left[p_1^X(\cdot\mid\bm{z}^X)\right]$.
One example is the linear interpolation path $p_t^X(\bm{x}\mid\bm{z}^X)=\delta_{a_t\bm{x}_0+b_t\bm{x}_1}(\bm{x})$ with $a_t=1-t$ and $b_t=t$ \citep{conference/iclr/lipman2023, conference/iclr/liu2023}.
This leads to the conditional vector field $\bm{u}_t^X(\bm{x}\mid\bm{z}^X)=a_t'\bm{x}_0+b_t'\bm{x}_1$ by the continuity equation \emph{w.r.t.} $p_t^X(\bm{x}\mid\bm{z}^X)$, where $a_t'\triangleq\frac{\dd a_t}{\dd t}$ and $b_t'\triangleq\frac{\dd b_t}{\dd t}$.
Then by verifying \cref{eq:continuity}, we can show that the marginal vector field
\begin{equation}
  \bm{u}_t^X(\bm{x}) \triangleq \mathbb{E}_{q(\bm{z}^X)}\left[\bm{u}_t^X(\bm{x}\mid\bm{z}^X)\frac{p_t^X(\bm{x}\mid\bm{z}^X)}{p_t^X(\bm{x})}\right]
\end{equation}
generates the probability flow $p_t^X$.
Meanwhile, learning of the flow is achieved by minimizing the conditional flow matching objective:
\begin{equation}\label{eq:CFM}
  \mathcal{L}_{\text{CFM}}(\bm{\theta}) \triangleq \mathbb{E}_{t, q(\bm{z}^X), p_t^X(\bm{x}\mid\bm{z}^X)}\norm{\bm{h_{\theta}}^X(\bm{x},t)-\bm{u}_t^X(\bm{x}\mid\bm{z}^X)}_2^2,
\end{equation}
where $\bm{h_{\theta}}^X(\cdot,t)$ is a network that predicts the $\mathcal{X}$-space marginal vector field.

\section{Methodology}\label{sec:method}
\begin{figure}[t!]
  \centering
  \includegraphics[width=0.6\linewidth]{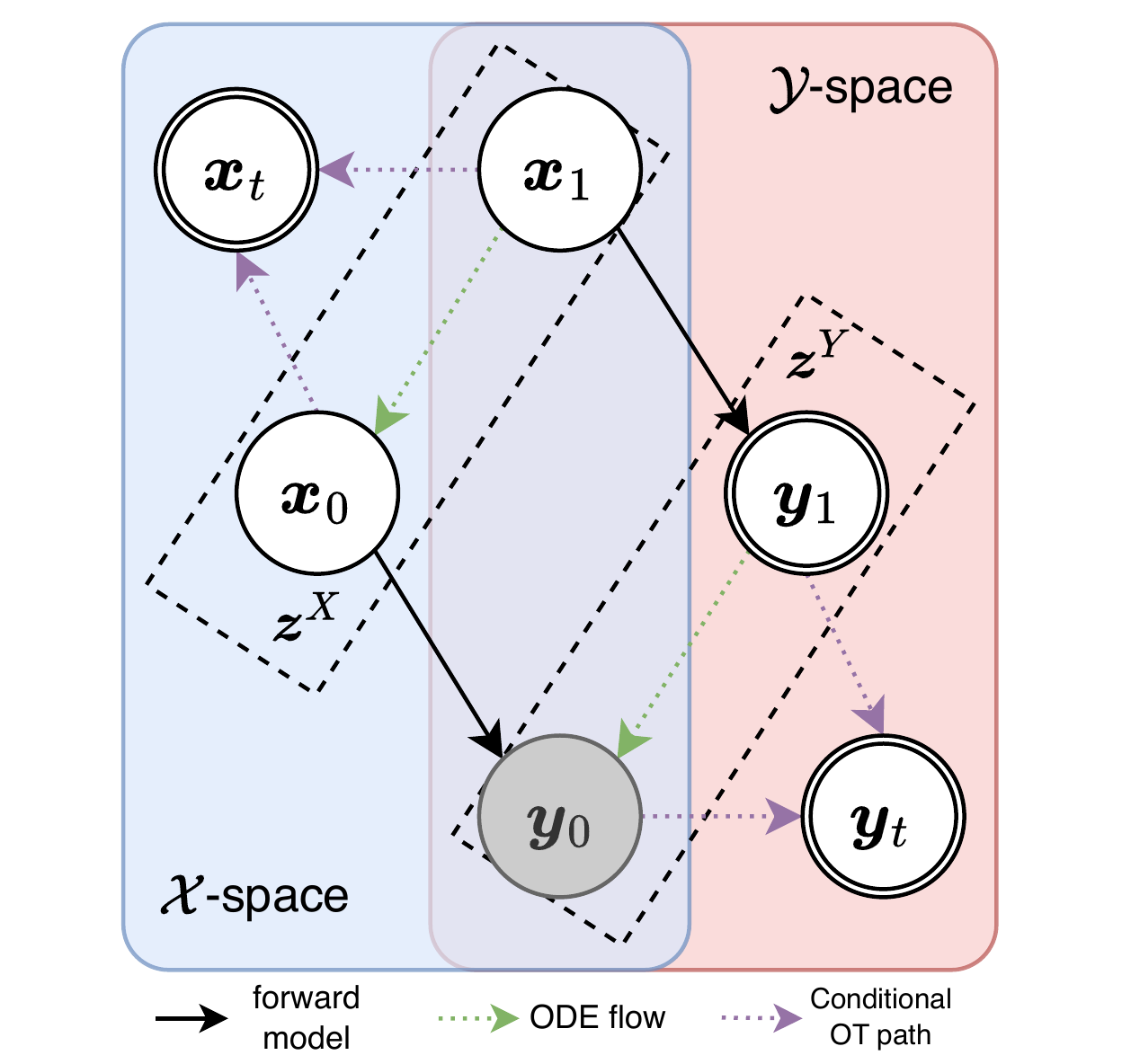}
  \caption{Generation chart of the dual-space conditional probability paths, where observed variables are shaded, and deterministic variables are in double circles.
  Dotted green arrows indicate deterministic ODE flows, whereas purple ones denote conditional probability paths.
  The path from $\bm{x}_1$ to $\bm{y}_0$ is traversed either through the $\mathcal{X}$-space OR the $\mathcal{Y}$-space diagram, as indicated by the background in different colors.
  }
  \label{fig:generative}
\end{figure}

This section introduces the proposed framework designed to reconstruct parallel MRI using only undersampled k-space data. 
The framework integrates two principal components: 
(1) the projected conditional flow matching (PCFM) objective along with its unsupervised transformation as a new formulation for learning the prior (\cref{sec:pcfm}), and 
(2) a new reconstruction algorithm for inference that exploits the relationship between measurement-space probability flow and the optimal solution to PCFM (\cref{sec:cyclic}).

\subsection{Projected Conditional Flow Matching}\label{sec:pcfm}
Due to the scarcity of fully sampled ground-truth signal $\bm{x}_0$, optimizing the $\mathcal{X}$-space CFM objective from \cref{eq:CFM} using a large dataset of fully sampled MRI scans is impractical because the conditional path and vector field within the $\mathcal{X}$-space are not accessible.
To address this, we introduce a $\mathcal{Y}$-space conditional probability path $p_{t}^Y(\bm{y}\mid\bm{z}^Y)=\delta_{a_t\bm{y}_{0}+b_1\bm{y}_{1}}(\bm{y})$, where $\bm{z}^Y\triangleq (\bm{y}_0,\bm{y}_1)$, in which $\bm{y}_{0}=\bm{A}\bm{x}_0+\bm{e}_0$ is the undersampled k-space measurements from parallel MRI, and we define $\bm{y}_{1}\triangleq\bm{A}\bm{x}_1$.
This leads to
\begin{equation}\label{eq:Y_path}
  \begin{aligned}
    p_{t}^Y(\bm{y}\mid\bm{z}^X) &= \int p_{t}^Y(\bm{y}\mid\bm{z}^Y) p(\bm{z}^Y\mid\bm{z}^X)\dd\bm{z}^Y \\
    &= \mathcal{CN}\!\left(\bm{y} \,\middle\vert\, \bm{A}(a_t\bm{x}_0+b_t\bm{x}_1), a_t^2\sigma_0^2\bm{I}_{Cd}\right).
  \end{aligned}
\end{equation}
\cref{fig:generative} depicts the graphical model of the random variables.

Since $\bm{x}_0$ and $\bm{u}_t^X(\bm{x}\mid\bm{z}^X)=a_t'\bm{x}_0+b_t'\bm{x}_1$ are unknown, and the forward operator $\bm{A}$ is rank-deficient, we can only expect to optimize the CFM objective \cref{eq:CFM} in the range space $\mathcal{R}(\bm{A}^{\top})$ of $\bm{A}^{\top}$ \citep{journal/tsp/eldar2008}. 
Therefore, we propose to optimize the following objective function
\begin{equation}\label{eq:PCFM}
  \begin{aligned}
    &\mathcal{L}_{\text{PCFM}}(\bm{\theta}) \triangleq\\
    &\mathbb{E}_{s,t,q(\bm{z}^X),p_t^X(\bm{x}\mid\bm{z}^X),p_{t}^Y(\bm{y}\mid\bm{z}^X)}\norm{\bm{P}_s\!\left[\bm{v_{\theta}}^X(\bm{y},t) - \bm{u}_t^X(\bm{x}\mid\bm{z}^X)\right]}_2^2,
  \end{aligned}
\end{equation}
where $\bm{v_{\theta}}^X(\cdot,t)$ is a network that take as input the $\mathcal{Y}$-space conditional path and approximates the $\mathcal{X}$-space marginal vector field; $\bm{P}_s\triangleq\bm{A}_s^+\bm{A}_s$ is the orthogonal projection onto the range space $\mathcal{R}(\bm{A}_s^{\top})$, with $\bm{A}_s^+$ denoting the Moore-Penrose pseudoinverse of $\bm{A}_s$.
Intuitively, this objective projects the CFM error onto the subspace ``visible'' to the k-space measurements from the mask $\bm{M}_s$.
Thus, we dub this objective function \emph{projected conditional flow matching} (PCFM).
\cref{fig:prior} illustrates the training pipeline of the framework, where we aim to estimate the diffeomorphism from the base distribution to the target distribution of fully sampled MRI using only undersampled measurements.

\begin{figure}[t!]
  \centering
  \includegraphics[width=\linewidth]{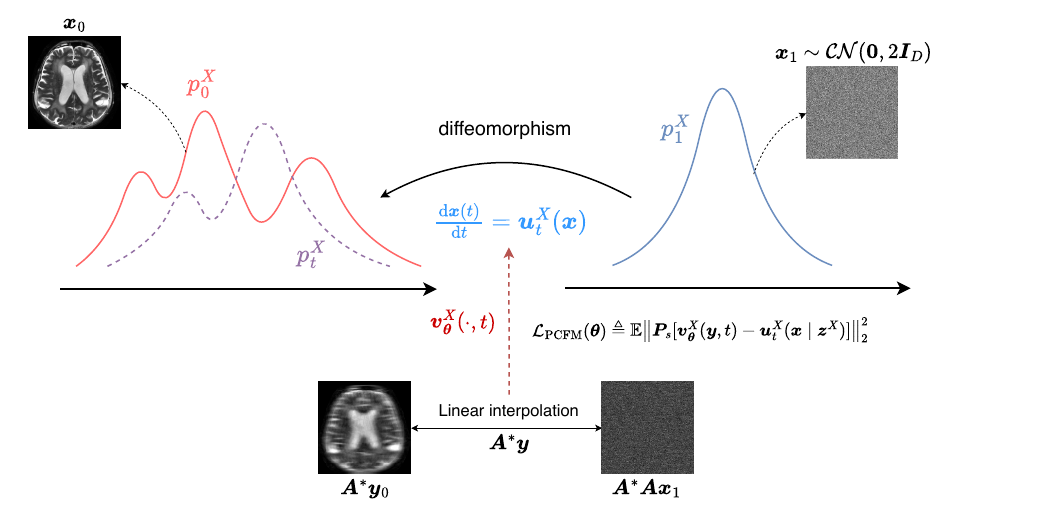}
  \caption{Schematic illustration of the proposed projected conditional flow matching framework.
  The top row depicts the theoretical continuous normalizing flow (diffeomorphism) that transforms samples from a Gaussian base distribution $p_1^X$ (noise $\bm{x}_1$) into the target data distribution $p_0^X$ (fully sampled MRI $\bm{x}_0$) via the probability flow ODE governed by the vector field $\bm{u}_t^X(\bm{x})$.
  The bottom row illustrates the proposed unsupervised training strategy: in the absence of ground-truth fully sampled data, the neural network $\bm{v_{\theta}}^X(\cdot,t)$ approximates the vector field by utilizing a linear interpolation between the zero-filled reconstruction $\bm{A}^*\bm{y}_0$ and the projected noise $\bm{A}^*\bm{Ax}_1$.
  }
  \label{fig:prior}
\end{figure}

We note that within the framework of the parallel MRI forward model, a closed-form solution for this projection is not available. 
To tackle this issue, we note the relationship $\bm{P}=\bm{A}^+\bm{A}=(\bm{A}^*\bm{A})^+\bm{A}^*\bm{A}$ and that $\bm{A}^*\bm{A}$ is a positive semi-definite (PSD) matrix. 
Consequently, we utilize a numerical approximation through the conjugate gradient (CG) method (\cref{app:cg}), which is suitable for solving the linear system $\bm{A}^*\bm{A}\bm{r}_{\bm{\theta},t}^X=\bm{A}^*\bm{A}\left[\bm{v_{\theta}}^X(\bm{y},t)-\bm{u}_t^X(\bm{x}\mid\bm{z}^X)\right]$, where $\bm{r}_{\bm{\theta},t}^X=\bm{P}\left[\bm{v_{\theta}}^X(\bm{y},t)-\bm{u}_t^X(\bm{x}\mid\bm{z}^X)\right]$.
The optimal solution to the PCFM objective is formalized by the following proposition (proof in \cref{app:proposition1}).
\begin{proposition}[\textbf{Optimal solution to PCFM}]\label{proposition1}
  Based on \cref{lemma1}, the minimizer of the PCFM objective is given by
  \begin{equation}
    \bm{v}_{\bm{\theta}^*}^X(\bm{y},t) = \mathbb{E}_{q_{t}(\bm{z}^X\mid\bm{y}),p_t^X(\bm{x}\mid\bm{z}^X)}\left[\bm{u}_t^X(\bm{x}\mid\bm{z}^X)\right],
  \end{equation}
  where $q_{t}(\bm{z}^X\mid\bm{y})=\frac{q(\bm{z}^X)p_{t}^Y(\bm{y}\mid\bm{z}^X)}{p_{t}^Y(\bm{y})}$.
  In particular, when $\bm{u}_t^X(\bm{x}\mid\bm{z}^X)=a_t'\bm{x}_0+b_t'\bm{x}_1$ is independent of $\bm{x}$, we have 
  \begin{equation}\label{eq:opt_solution}
    \bm{v}_{\bm{\theta}^*}^X(\bm{y},t) = \mathbb{E}_{q_{t}(\bm{z}^X\mid\bm{y})}\left[\bm{u}_t^X(\bm{x}\mid\bm{z}^X)\right] = \mathbb{E}\left[\bm{u}_t^X(\bm{x})\mid\bm{y}\right].
  \end{equation}
\end{proposition}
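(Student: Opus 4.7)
The plan is to reduce the PCFM objective to a pointwise weighted-quadratic program in $\bm{v}\equiv\bm{v_{\theta}}^X(\bm{y},t)$ indexed by $(t,\bm{y})$, solve it by a first-order condition, and then specialize.

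First I would use that $\bm{P}_s$ is an orthogonal projection, so $\bm{P}_s^*\bm{P}_s=\bm{P}_s$, turning each integrand into $\bm{r}^*\bm{P}_s\bm{r}$ with $\bm{r}=\bm{v}-\bm{u}_t^X(\bm{x}\mid\bm{z}^X)$. Averaging over $s$ yields a Hermitian weight matrix $\bm{G}\triangleq\mathbb{E}_s[\bm{P}_s]$ which by \cref{lemma1} is positive definite. Assuming $\bm{v_{\theta}}^X(\cdot,t)$ ranges over a sufficiently rich function class, the optimization decouples across $(t,\bm{y})$. Applying Bayes' rule to the joint sampling law $q(\bm{z}^X)\,p_t^X(\bm{x}\mid\bm{z}^X)\,p_{t}^Y(\bm{y}\mid\bm{z}^X)$ rewrites it as $p_{t}^Y(\bm{y})\,q_{t}(\bm{z}^X\mid\bm{y})\,p_t^X(\bm{x}\mid\bm{z}^X)$, using the conditional independence $\bm{x}\perp\bm{y}\mid\bm{z}^X$ read off from the generative chart in \cref{fig:generative}.

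Next I would take the Wirtinger derivative of the inner quadratic with respect to $\bm{v}^*$; invertibility of $\bm{G}$ cancels it from the normal equations, leaving $\bm{v}=\mathbb{E}_{q_{t}(\bm{z}^X\mid\bm{y}),\,p_t^X(\bm{x}\mid\bm{z}^X)}[\bm{u}_t^X(\bm{x}\mid\bm{z}^X)]$, which is the first claimed equality. An equivalent, derivative-free route is to add and subtract this conditional expectation inside the squared norm and observe that the cross term vanishes; I find this slightly cleaner since it avoids complex-derivative bookkeeping.

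For the specialization, when $\bm{u}_t^X(\bm{x}\mid\bm{z}^X)=a_t'\bm{x}_0+b_t'\bm{x}_1$ is $\bm{x}$-free the inner $\bm{x}$-expectation is vacuous, so it remains to show $\mathbb{E}[\bm{u}_t^X(\bm{z}^X)\mid\bm{y}]=\mathbb{E}[\bm{u}_t^X(\bm{x})\mid\bm{y}]$ where on the right $\bm{x}$ denotes the time-$t$ sample. I would apply the tower property $\mathbb{E}[\bm{u}_t^X(\bm{z}^X)\mid\bm{y}]=\mathbb{E}\bigl[\mathbb{E}[\bm{u}_t^X(\bm{z}^X)\mid\bm{x}_t,\bm{y}]\mid\bm{y}\bigr]$ and simplify the inner expectation using (i) $\bm{y}\perp\bm{z}^X\mid\bm{x}_t$, which holds because $p_t^X(\cdot\mid\bm{z}^X)$ is a Dirac at $a_t\bm{x}_0+b_t\bm{x}_1$ and $p_{t}^Y(\bm{y}\mid\bm{z}^X)$ in \cref{eq:Y_path} depends on $\bm{z}^X$ only through $a_t\bm{x}_0+b_t\bm{x}_1=\bm{x}_t$; and (ii) the marginal-vector-field identity $\mathbb{E}[\bm{u}_t^X(\bm{z}^X)\mid\bm{x}_t=\bm{x}]=\bm{u}_t^X(\bm{x})$, which is immediate from the definition of the marginal vector field when the conditional field is $\bm{x}$-free. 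The main obstacle I expect is precisely this conditional-independence bookkeeping with a Dirac path, which is cleanest to verify by integrating against $\delta_{a_t\bm{x}_0+b_t\bm{x}_1}$ directly (or via the explicit Gaussian form in \cref{eq:Y_path}) rather than through density ratios; everything else is routine once \cref{lemma1} supplies positive-definiteness of $\bm{G}$.
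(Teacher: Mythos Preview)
Your proposal is correct and follows essentially the same route as the paper: factor the joint law via Bayes' rule to decouple the objective pointwise in $(t,\bm{y})$, use $\bm{P}_s^*\bm{P}_s=\bm{P}_s$ together with positive-definiteness of $\mathbb{E}_s[\bm{P}_s]$ from \cref{lemma1} to identify the minimizer (the paper uses exactly the add--subtract argument you call ``cleaner''), and then invoke the tower property plus the Markov structure $\bm{z}^X\to\bm{x}_t\to\bm{y}$ for the specialization. Your conditional-independence justification via the Dirac path and \cref{eq:Y_path} is a bit more explicit than the paper's one-line appeal to the Markov chain, but the argument is the same.
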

This proposition states that we can learn the expected $\mathcal{X}$-space vector field from only $\mathcal{Y}$-space measurements, which is also the minimum mean squared error estimator of the true $\mathcal{X}$-space marginal vector field.
However, the PCFM objective \cref{eq:PCFM} still depends on the unknown fully sampled MRI $\bm{x}_0$ through Monte Carlo sampling from $q(\bm{z}^X)$ during training.
To address this, inspired by the generalized Stein's unbiased estimator \citep{journal/tsp/eldar2008}, we propose to construct an unbiased estimate of the PCFM objective that does not involve the unknown $\bm{x}_0$.
To this end, we notice an induced linear forward model between the dual-space conditional vector fields
\begin{equation}\label{eq:forward_vec}
  \bm{u}_{t}^Y(\bm{y}\mid\bm{z}^Y) = a_t'\bm{y}_{0} + b_t'\bm{y}_{1} =\bm{A}\bm{u}_{t}^X(\bm{x}\mid\bm{z}^X) + a_t'\bm{e}_{0},
\end{equation}
and the deterministic mapping between the conditional path and the conditional vector field
\begin{equation}
  \bm{y}_t = \frac{a_t}{a_t'}\bm{u}_{t}^Y(\bm{y}_t\mid\bm{z}^Y) - b_t'\left(\frac{a_t}{a_t'}-\frac{b_t}{b_t'}\right)\bm{y}_{1}.
\end{equation}
Based on this, we derive the following unsupervised transformation of the PCFM objective, which does not require fully sampled MRI data for training the vector field predictor (proof in \cref{app:proposition2}).

\begin{proposition}[\textbf{Unsupervised transformation of PCFM}]\label{proposition2}
  Assuming deterministic conditional probability paths $p_t^X(\bm{x}\mid\bm{z}^X)=\delta_{a_t\bm{x}_0+b_t\bm{x}_1}(\bm{x})$ and $p_t^Y(\bm{y}\mid\bm{z}^Y)=\delta_{a_t\bm{y}_0+b_t\bm{y}_1}(\bm{y})$ with $\bm{y}_{0}=\bm{A}_s\bm{x}_{0}+\bm{e}_{0}$ and $\bm{y}_{1}=\bm{A}_s\bm{x}_{1}$, where $\bm{e}_{0}\sim\mathcal{CN}(\bm{0},\sigma_0^2\bm{I}_{Cd})$, then up to a constant the PCFM objective can be transformed to
  \begin{equation}\label{eq:unsupervised}
    \begin{aligned}
      \mathbb{E}_{s,t,q(\bm{z}^Y),p_{t}^Y(\bm{y}\mid\bm{z}^Y)}
    \Big[
      &\norm{\bm{P}_s\!\left[\bm{v_{\theta}}^X(\bm{A}_s^*\bm{y},t)-\widehat{\bm{u}}_{s,t,\text{ML}}^X\right]}_2^2
      \\
      &+ {2a_t}{a_t'}\sigma_0^2\nabla_{\bm{A}_s^*\bm{y}}\cdot\bm{P}_s\bm{v_{\theta}}^X(\bm{A}_s^*\bm{y},t)
    \Big],
    \end{aligned}
  \end{equation}
  where $q(\bm{z}^Y)=q(\bm{y}_0)q(\bm{y}_1)=q(\bm{y}_0)\mathbb{E}_{q(\bm{x}_1)} \left[p(\bm{y}_1\mid\bm{x}_1)\right]$ is sampled by the MRI forward model and Monte Carlo estimation, $\bm{P}_s=\bm{A}_s^+\bm{A}_s$ is the range-space projection, and 
  \begin{equation}
    \widehat{\bm{u}}_{s,t,\text{ML}}^X\triangleq (\bm{A}_s^*\bm{C}_t^{-1}\bm{A}_s)^+\bm{A}_s^*\bm{C}_t^{-1}\bm{u}_{t}^Y(\bm{y}\mid\bm{z}^Y),
  \end{equation}
  where $\bm{C}_t=(a_t'\sigma_0)^2\bm{I}_d$ is the maximum likelihood solution of the forward model in \cref{eq:forward_vec}.
  Note that range-space projection can be approximated by the conjugate gradient method.
\end{proposition}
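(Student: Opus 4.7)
My plan is to show that the PCFM integrand and the candidate unsupervised integrand in \cref{eq:unsupervised} differ pointwise (under the expectation) by a quantity that does not depend on $\bm{\theta}$, so that after integration the two objectives agree up to an additive constant. All arguments proceed conditional on $(s,t)$, since these expectations appear identically on both sides. The workhorse is a complex Stein's identity applied to the measurement noise $\bm{e}_0$, together with a clean decomposition of the ML estimator induced by the vector-field forward model \cref{eq:forward_vec}.

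\textbf{Step 1: expansion and cancellation.} I first expand both $\norm{\bm{P}_s(\bm{v_{\theta}}^X-\bm{u}_t^X)}_2^2$ and $\norm{\bm{P}_s(\bm{v_{\theta}}^X-\widehat{\bm{u}}_{s,t,\text{ML}}^X)}_2^2$ using that $\bm{P}_s$ is self-adjoint and idempotent, so the term $\norm{\bm{P}_s\bm{v_{\theta}}^X}_2^2$ appears in both expansions. Since $\widehat{\bm{u}}_{s,t,\text{ML}}^X\in\mathcal{R}(\bm{A}_s^*)$, we have $\bm{P}_s\widehat{\bm{u}}_{s,t,\text{ML}}^X=\widehat{\bm{u}}_{s,t,\text{ML}}^X$, and after cancellation the only $\bm{\theta}$-dependent discrepancy is $-2\Re\langle\bm{P}_s\bm{v_{\theta}}^X,\widehat{\bm{u}}_{s,t,\text{ML}}^X-\bm{P}_s\bm{u}_t^X\rangle$. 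Substituting $\bm{C}_t=(a_t'\sigma_0)^2\bm{I}_d$ reduces the ML estimator to $\widehat{\bm{u}}_{s,t,\text{ML}}^X=\bm{A}_s^+\bm{u}_t^Y$, and combining with \cref{eq:forward_vec} yields the clean split $\widehat{\bm{u}}_{s,t,\text{ML}}^X=\bm{P}_s\bm{u}_t^X+a_t'\bm{A}_s^+\bm{e}_0$. Hence the discrepancy collapses to $-2a_t'\Re\mathbb{E}\langle(\bm{A}_s^+)^*\bm{P}_s\bm{v_{\theta}}^X(\bm{A}_s^*\bm{y},t),\bm{e}_0\rangle$.

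\textbf{Step 2: Stein's identity and chain rule.} Conditional on $\bm{z}^X$, the only randomness carried into $\bm{A}_s^*\bm{y}$ is the term $a_t\bm{A}_s^*\bm{e}_0$, since $\bm{y}_1=\bm{A}_s\bm{x}_1$ is deterministic. Applying the complex Stein's lemma for $\bm{e}_0\sim\mathcal{CN}(\bm{0},\sigma_0^2\bm{I}_{Cd})$ to the inner product from Step~1 converts it to $\sigma_0^2\,\mathbb{E}[\nabla_{\bm{e}_0}\!\cdot\!(\bm{A}_s^+)^*\bm{P}_s\bm{v_{\theta}}^X(\bm{A}_s^*\bm{y},t)]$. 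A chain rule with Jacobian $\partial(\bm{A}_s^*\bm{y})/\partial\bm{e}_0=a_t\bm{A}_s^*$, combined with the identities $\bm{A}_s^*(\bm{A}_s^+)^*=(\bm{A}_s^+\bm{A}_s)^*=\bm{P}_s$ and $\bm{P}_s^2=\bm{P}_s$, collapses the Jacobian trace to $a_t\sigma_0^2\,\mathbb{E}[\nabla_{\bm{A}_s^*\bm{y}}\!\cdot\!\bm{P}_s\bm{v_{\theta}}^X(\bm{A}_s^*\bm{y},t)]$. Multiplying by the outer $2a_t'$ reproduces precisely the divergence term in \cref{eq:unsupervised}. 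The residual terms $\norm{\widehat{\bm{u}}_{s,t,\text{ML}}^X}_2^2-\norm{\bm{P}_s\bm{u}_t^X}_2^2$ are independent of $\bm{\theta}$ and constitute the promised additive constant. Finally, expectations in \cref{eq:PCFM} and \cref{eq:unsupervised} agree because the joint law of $(\bm{x}_0,\bm{x}_1,\bm{e}_0,\bm{y})$ marginalizes identically onto $q(\bm{z}^Y)p_t^Y(\bm{y}\mid\bm{z}^Y)$, so integrating out $\bm{x}_0$ is legitimate.

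\textbf{Expected obstacle.} The main technical nuisance is the interplay between Wirtinger calculus for complex Gaussians and the adjoint/pseudoinverse algebra in Step~2: one must verify carefully that the Stein divergence of $(\bm{A}_s^+)^*\bm{P}_s\bm{v_{\theta}}^X$ collapses to the divergence of $\bm{P}_s\bm{v_{\theta}}^X$ without stray conjugate factors, using both the Hermiticity of $\bm{P}_s$ and the identity $\bm{A}_s^*(\bm{A}_s^+)^*=\bm{P}_s$. A secondary regularity issue is that Stein's lemma requires $\bm{v_{\theta}}^X$ to be sufficiently smooth with integrable derivatives; this is standard for neural parametrizations but should be noted. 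Once these are addressed, the remainder is linear-algebraic bookkeeping that mirrors the GSURE derivation of \cite{journal/tsp/eldar2008} adapted to the vector-field forward model of \cref{eq:forward_vec}.
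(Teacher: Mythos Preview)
Your proposal is correct and takes a genuinely different route from the paper's proof. The paper follows Eldar's GSURE derivation closely: it introduces the sufficient statistic $\bm{\mu}_t^X=\bm{A}_s^*\bm{C}_t^{-1}\bm{u}_t^Y$, writes its density in exponential-family form, and uses the natural-parameter identity $\bm{u}_t^X\,h(\bm{\mu}_t^X)=\nabla_{\bm{\mu}_t^X}h(\bm{\mu}_t^X)$ together with integration by parts in $\bm{\mu}_t^X$ to convert the cross term $\mathbb{E}[{\bm{u}_t^X}^*\bm{P}_s\bm{v_{\theta}}^X]$ into a divergence plus an ML-estimator term; it then completes the square. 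You instead exploit from the outset that $\bm{C}_t$ is scalar, which collapses $\widehat{\bm{u}}_{s,t,\text{ML}}^X$ to $\bm{A}_s^+\bm{u}_t^Y=\bm{P}_s\bm{u}_t^X+a_t'\bm{A}_s^+\bm{e}_0$, and then apply Stein's lemma directly to the physical noise $\bm{e}_0$ with a chain rule through $\bm{A}_s^*\bm{y}$. Your argument is more elementary and makes the role of the measurement noise transparent, at the cost of being specific to isotropic $\bm{C}_t$; the paper's sufficient-statistic route is heavier but would extend verbatim to non-scalar noise covariances. One minor point: track the sign of your ``discrepancy'' carefully---as written, the $-2\Re\langle\cdot,\cdot\rangle$ term is the $\bm{\theta}$-dependent part of the \emph{unsupervised squared term minus PCFM}, so after Stein it should equal the \emph{negative} of the divergence term, which then cancels; the algebra closes but the narrative in Step~2 (``reproduces precisely the divergence term'') slightly obscures this cancellation.
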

The network $\bm{v_{\theta}}^X(\cdot,t)$ takes $\bm{A}^*\bm{y}_t$ as input to match the desired dimensionality of the architecture.
The divergence term in \cref{eq:unsupervised} can be estimated using the Hutchinson trace estimator \cite{journal/CSSC/hutchinson1989}, namely,
\begin{equation}
  \nabla_{\bm{A}_s^*\bm{y}_t}\cdot\bm{P}_s\bm{v_{\theta}}^X(\bm{A}_s^*\bm{y},t) = \mathbb{E}_{\bm{b}\sim\mathcal{CN}(\bm{0},\bm{P}_s)}\left[\bm{b}^*\nabla_{\bm{A}_s^*\bm{y}}\bm{v_{\theta}}^X(\bm{A}_s^*\bm{y},t)\bm{b}\right],
\end{equation}
where the expectation is approximated by Monte-Carlo sampling.
By optimizing \cref{eq:unsupervised}, the optimal solution to PCFM can be determined in an unsupervised learning fashion, which learns a fully sampled image prior only from undersampled measurements. 
In the following section, we will delve into reconstructing a fully sampled image utilizing the obtained optimal PCFM solution.

\subsection{Reconstruction via Dual-Space Cyclic Flow Integration}\label{sec:cyclic}

In the inference phase, to reconstruct the fully sampled image $\bm{x}_0$ given the undersampled measurememt $\bm{y}_0$, we note that given the generative model in \cref{fig:generative}, the posterior distribution $p(\bm{x}_0\mid\bm{y}_0)$ can be written as 
\begin{equation}
  \begin{aligned}
    p(\bm{x}_0\mid\bm{y}_0) &= \int p(\bm{x}_0\mid\bm{x}_1,\bm{y}_1,\bm{y}_0) p(\bm{x}_1\mid\bm{y}_1,\bm{y}_0) p(\bm{y}_1\mid\bm{y}_0) \dd \bm{x}_1\dd \bm{y}_1 \\
    &= \int p(\bm{x}_0\mid\bm{x}_1) p(\bm{x}_1\mid\bm{y}_1) p(\bm{y}_1\mid\bm{y}_0) \dd \bm{x}_1\dd \bm{y}_1,
  \end{aligned}
\end{equation}
which can be evaluated by ancestral Monte-Carlo sampling.
\cref{fig:sample} presents the inference pipeline via dual-space cyclic flow integration.
We detail the sampling procedure for each conditional distribution in the following paragraphs.

\begin{figure}[t!]
  \centering
  \includegraphics[width=\linewidth]{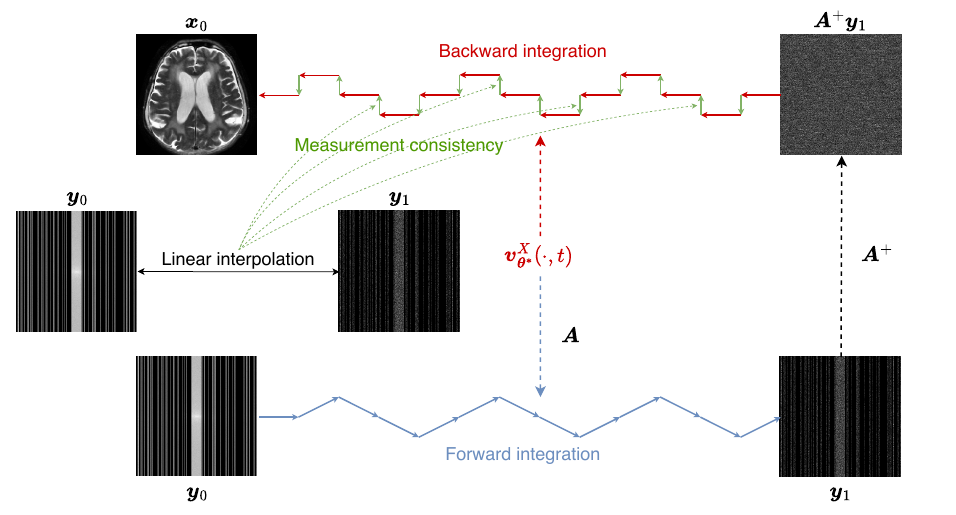}
  \caption{Inference via Dual-Space Cyclic Integration. 
  The algorithm first samples latent noise $\bm{y}_1$ via forward integration in the measurement space (blue). 
  It then reconstructs the target image $\bm{x}_0$ via backward integration (red), enforcing data consistency (green) at each step using the interpolated measurement path.}
  \label{fig:sample}
\end{figure}

\subsubsection{Forward integration and probability flow ODE under projection}
To sample from the distribution $p(\bm{y}_1\mid\bm{y}_0)$, 
we note that given the conditional probability paths in dual spaces illustrated in \cref{fig:generative}, it is feasible to derive a specific relationship between the marginal vector field in the $\mathcal{Y}$-space and the optimal solution to the PCFM as discussed in \cref{proposition1},
leading to a delta distribution of $p(\bm{y}_1\mid\bm{y}_0)$ by the $\mathcal{Y}$-space ODE.
Recall that the $\mathcal{Y}$-space conditional probability path is given by \cref{eq:Y_path}.
Denote $\bm{\mu}_t(\bm{z}^X)\triangleq\bm{A}(a_t\bm{x}_0+b_t\bm{x}_1)$ and $\sigma_t^2 = a_t^2\sigma_0^2$.
Then the time derivative of $p_t^Y(\bm{y}\mid\bm{z}^X)$ can be written as
\begin{equation}
  \begin{aligned}
    &\frac{\partial p_t^Y(\bm{y}\mid\bm{z}^X)}{\partial t} 
    = \frac{\partial p_t^Y(\bm{y}\mid\bm{z}^X)}{\partial \bm{\mu}_t}\cdot \frac{\dd \bm{\mu}_t}{\dd t} + \frac{\partial p_t^Y(\bm{y}\mid\bm{z}^X)}{\partial \sigma_t^2}\frac{\dd\sigma_t^2}{\dd t} \\
    &= -\nabla_{\bm{y}} p_t^Y(\bm{y}\mid\bm{z}^X)\cdot\bm{A}(a_t'\bm{x}_0+b_t'\bm{x}_1) + \frac{1}{2}\Delta_{\bm{y}}p_t^Y(\bm{y}\mid\bm{z}^X) 2a_ta_t'\sigma_0^2 \\ 
    &= -\nabla_{\bm{y}}\cdot\left(p_t^Y(\bm{y}\mid\bm{z}^X)\left[\bm{Au}_t^X(\bm{x}\mid\bm{z}^X)-a_ta_t'\sigma_0^2\nabla_{\bm{y}}\log p_t^Y(\bm{y}\mid\bm{z}^X)\right]\right).
  \end{aligned}
\end{equation}
Therefore, by the continuity equation, we know that the $\mathcal{Y}$-space conditional vector field
\begin{equation}\label{eq:Y_conditional_vec}
  \bm{u}_t^Y(\bm{y}\mid\bm{z}^X) \triangleq \bm{Au}_t^X(\bm{x}\mid\bm{z}^X)-a_ta_t'\sigma_0^2\nabla_{\bm{y}}\log p_t^Y(\bm{y}\mid\bm{z}^X)
\end{equation}
generates the conditional probability path $p_t^Y(\bm{y}\mid\bm{z}^X)$.
Taking the expectation of \cref{eq:Y_conditional_vec} over $q_t(\bm{z}^X\mid\bm{y})$ and leveraging \cref{eq:opt_solution}, we can obtain the marginal vector field in the $\mathcal{Y}$-space in terms of the optimal PCFM solution and the score function, as described in the following lemma (proof in \cref{app:lemma2}).
\begin{lemma}\label{lemma2}
  The $\mathcal{Y}$-space marginal vector field that generates the probability path $p_t^Y$ takes the form
  \begin{equation}
    \bm{u}_t^Y(\bm{y}) = \bm{Av}_{\bm{\theta}^*}^X(\bm{y},t) - a_ta_t'\sigma_0^2\nabla_{\bm{y}}\log p_t^Y(\bm{y}).
  \end{equation}
\end{lemma}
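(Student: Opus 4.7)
The plan is to combine three ingredients already available in the excerpt: (i) the $\mathcal{Y}$-space conditional vector field expression in \cref{eq:Y_conditional_vec}; (ii) the marginalization identity for vector fields that generate mixture paths; and (iii) the optimal-solution characterization from \cref{proposition1}. Throughout, $\bm{A}$ (the batched coil-combined operator without sampling mask, or fixed realization) is a linear, time-independent map, so it commutes with expectations and spatial differentiation.

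First I would establish that the marginal $\mathcal{Y}$-space vector field which generates $p_t^Y(\bm{y})=\mathbb{E}_{q(\bm{z}^X)}[p_t^Y(\bm{y}\mid\bm{z}^X)]$ admits the representation
\begin{equation}
\bm{u}_t^Y(\bm{y}) \;=\; \mathbb{E}_{q_t(\bm{z}^X\mid\bm{y})}\!\left[\bm{u}_t^Y(\bm{y}\mid\bm{z}^X)\right],
\end{equation}
where $q_t(\bm{z}^X\mid\bm{y}) = q(\bm{z}^X)p_t^Y(\bm{y}\mid\bm{z}^X)/p_t^Y(\bm{y})$. This is the standard CFM marginalization lemma (analogous to the $\mathcal{X}$-space statement recalled before \cref{eq:CFM}): one verifies it by plugging the right-hand side into the continuity equation $\partial_t p_t^Y + \nabla_{\bm{y}}\cdot(p_t^Y \bm{u}_t^Y)=0$ and exchanging the expectation with $\nabla_{\bm{y}}$, using the fact that $\bm{u}_t^Y(\bm{y}\mid\bm{z}^X)$ satisfies the conditional continuity equation by construction of \cref{eq:Y_conditional_vec}.

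Next I would substitute \cref{eq:Y_conditional_vec} into the marginalization identity and split into two terms:
\begin{equation}
\bm{u}_t^Y(\bm{y}) = \bm{A}\,\mathbb{E}_{q_t(\bm{z}^X\mid\bm{y})}\!\left[\bm{u}_t^X(\bm{x}\mid\bm{z}^X)\right] - a_t a_t'\sigma_0^2\,\mathbb{E}_{q_t(\bm{z}^X\mid\bm{y})}\!\left[\nabla_{\bm{y}}\log p_t^Y(\bm{y}\mid\bm{z}^X)\right].
\end{equation}
The first expectation is precisely $\bm{v}_{\bm{\theta}^*}^X(\bm{y},t)$ by \cref{eq:opt_solution}, since the conditional vector field $a_t'\bm{x}_0 + b_t'\bm{x}_1$ is $\bm{x}$-independent. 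For the second term, I would invoke the classical score marginalization identity: by Bayes' rule,
\begin{equation}
\mathbb{E}_{q_t(\bm{z}^X\mid\bm{y})}\!\left[\nabla_{\bm{y}}\log p_t^Y(\bm{y}\mid\bm{z}^X)\right] = \int \frac{q(\bm{z}^X)\,\nabla_{\bm{y}}p_t^Y(\bm{y}\mid\bm{z}^X)}{p_t^Y(\bm{y})}\,\dd\bm{z}^X = \frac{\nabla_{\bm{y}}p_t^Y(\bm{y})}{p_t^Y(\bm{y})} = \nabla_{\bm{y}}\log p_t^Y(\bm{y}),
\end{equation}
where swapping $\nabla_{\bm{y}}$ with the integral is legitimate under the smoothness of the Gaussian conditionals in \cref{eq:Y_path}. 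Combining both terms yields the claimed expression.

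The main obstacle is purely a bookkeeping one: clearly articulating why the CFM-style marginalization formula continues to hold when the conditioning variable is $\bm{z}^X$ (rather than $\bm{z}^Y$) and the conditional path $p_t^Y(\bm{y}\mid\bm{z}^X)$ in \cref{eq:Y_path} is a non-degenerate Gaussian rather than a Dirac (as is the case in the $\mathcal{X}$-space). Once one observes that $\bm{u}_t^Y(\bm{y}\mid\bm{z}^X)$ was derived precisely to satisfy the conditional continuity equation for this Gaussian path, the marginalization argument transfers verbatim, and the rest of the proof is the short two-line substitution above.
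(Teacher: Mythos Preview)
Your proposal is correct and follows essentially the same route as the paper: marginalize the conditional $\mathcal{Y}$-space vector field \cref{eq:Y_conditional_vec} over $q_t(\bm{z}^X\mid\bm{y})$, identify the first term with $\bm{Av}_{\bm{\theta}^*}^X(\bm{y},t)$ via \cref{proposition1}, and reduce the second term to the marginal score using the standard identity $\mathbb{E}_{q_t(\bm{z}^X\mid\bm{y})}[\nabla_{\bm{y}}\log p_t^Y(\bm{y}\mid\bm{z}^X)]=\nabla_{\bm{y}}\log p_t^Y(\bm{y})$. Your extra paragraph justifying why the CFM marginalization lemma applies when conditioning on $\bm{z}^X$ with a non-degenerate Gaussian path is a helpful clarification the paper leaves implicit.
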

Meanwhile, the relationship between the $\mathcal{Y}$-space marginal vector field $\bm{u}_t^Y(\bm{y})$ and the score function $\nabla_{\bm{y}}\log p_t^Y(\bm{y})$ is established by the following lemma (proof in \cref{app:lemma3}).
\begin{lemma}\label{lemma3}
  Note that $p_1^Y(\bm{y})= \int p_1(\bm{y}\mid\bm{x})p_1^X(\bm{x})\dd\bm{x}= \mathcal{CN}(\bm{y}\mid\bm{0},2\bm{AA}^*)$.
  Then, 
  \begin{equation}
    \bm{u}_t^Y(\bm{y}) = \frac{a_t'}{a_t}\bm{y}-b_t\left(b_t'-\frac{a_t'}{a_t}b_t\right)(2\bm{AA}^*)\nabla_{\bm{y}}\log p_t^Y(\bm{y}).
  \end{equation}
\end{lemma}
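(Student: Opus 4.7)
The plan is to use the CFM identity with conditioning on $\bm{z}^Y=(\bm{y}_0,\bm{y}_1)$ rather than on $\bm{z}^X$. Under that parameterization the conditional path $p_t^Y(\bm{y}\mid\bm{z}^Y)=\delta_{a_t\bm{y}_0+b_t\bm{y}_1}(\bm{y})$ is deterministic, so the conditional velocity is simply $a_t'\bm{y}_0+b_t'\bm{y}_1$. Both conditionings induce the same marginal $p_t^Y$ (easily verified by integrating out $\bm{e}_0$ and $\bm{x}_1$ in \cref{eq:Y_path}), so this is purely a convenient re-parameterization. The standard CFM identity then yields
\begin{equation*}
\bm{u}_t^Y(\bm{y}) = \mathbb{E}\!\left[a_t'\bm{y}_0+b_t'\bm{y}_1\,\middle|\,\bm{y}_t=\bm{y}\right],
\end{equation*}
with the expectation taken under the independent coupling $q(\bm{y}_0)q(\bm{y}_1)$.

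Next I would eliminate $\bm{y}_0$ via the interpolation identity $\bm{y}_0=(\bm{y}_t-b_t\bm{y}_1)/a_t$, which rewrites the conditional velocity as $\frac{a_t'}{a_t}\bm{y}_t+\left(b_t'-\frac{a_t'b_t}{a_t}\right)\bm{y}_1$ and reduces the task to evaluating $\mathbb{E}[\bm{y}_1\mid\bm{y}_t=\bm{y}]$. The remaining step is a Tweedie/Stein identity. Expanding $p_t^Y(\bm{y})=\int q(\bm{y}_1)p(\bm{y}_t=\bm{y}\mid\bm{y}_1)\dd\bm{y}_1$ and exploiting the chain-rule relation $\nabla_{\bm{y}_1}p(\bm{y}_t=\bm{y}\mid\bm{y}_1)=-b_t\nabla_{\bm{y}}p(\bm{y}_t=\bm{y}\mid\bm{y}_1)$ (which holds because the transition density depends on $\bm{y}_t$ and $\bm{y}_1$ only through $\bm{y}_t-b_t\bm{y}_1$), an application of Stein's lemma against the Gaussian prior $q(\bm{y}_1)=\mathcal{CN}(\bm{0},2\bm{A}\bm{A}^*)$ gives
\begin{equation*}
\mathbb{E}[\bm{y}_1\mid\bm{y}_t=\bm{y}] = -2b_t\,\bm{A}\bm{A}^*\nabla_{\bm{y}}\log p_t^Y(\bm{y}).
\end{equation*}
Substituting this back into the previous display recovers the formula claimed in \cref{lemma3}.

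The main obstacle is that $2\bm{A}\bm{A}^*$ is rank-deficient whenever the forward operator undersamples, so Stein's lemma is not directly applicable to the degenerate Gaussian $\bm{y}_1$. I would circumvent this by lifting the argument to the full-dimensional latent $\bm{x}_1\sim\mathcal{CN}(\bm{0},2\bm{I}_D)$ with $\bm{y}_1=\bm{A}\bm{x}_1$: integration by parts on $\bm{x}_1$ gives $\mathbb{E}[\bm{x}_1\,g(\bm{A}\bm{x}_1)]=2\bm{A}^*\mathbb{E}[\nabla_{\bm{y}_1}g(\bm{y}_1)]$, and left-multiplying by $\bm{A}$ produces the required $2\bm{A}\bm{A}^*$ prefactor. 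This device also absorbs the ambiguity of $\nabla_{\bm{y}}\log p_t^Y$ along $\ker(\bm{A}\bm{A}^*)$, since any such component is annihilated by the $\bm{A}\bm{A}^*$ prefactor in the final expression, making the identity well-defined despite the degeneracy.
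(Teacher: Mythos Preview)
Your proposal is correct and follows essentially the same strategy as the paper: both express $\bm{u}_t^Y(\bm{y})$ and $\nabla_{\bm{y}}\log p_t^Y(\bm{y})$ in terms of the conditional expectation $\mathbb{E}[\bm{y}_1\mid\bm{y}]$ and then eliminate it. The only minor difference is that the paper obtains the score--expectation relation by directly expanding $p_t^Y(\bm{y}\mid\bm{y}_0)=\mathcal{CN}\!\left(a_t\bm{y}_0,\,b_t^2(2\bm{AA}^*)\right)$ and using its pseudoinverse log-gradient, whereas you arrive at it via the Stein/Tweedie identity lifted to $\bm{x}_1$; your device handles the rank deficiency a bit more cleanly since it produces the $2\bm{AA}^*$ prefactor directly rather than via $(2\bm{AA}^*)(2\bm{AA}^*)^+$ acting as the identity on $\mathcal{R}(\bm{A})$.
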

The proof is done by writing $\bm{u}_t^Y(\bm{y})$ and $\nabla_{\bm{y}}\log p_t^Y(\bm{y})$ in terms of the conditional expectation $\mathbb{E}_{q_t(\bm{y}_0\mid\bm{y})}[\bm{y}_1]$.
Combining \cref{lemma2} and \cref{lemma3} by canceling out the score function gives the following proposition that relates the $\mathcal{Y}$-space marginal vector field to the optimal solution to PCFM (proof in \cref{app:proposition3}).
\begin{proposition}[\textbf{Vector fields under projection}]\label{proposition3}
  For $a_t=1-t$ and $b_t=t$, the $\mathcal{Y}$-space marginal vector field $\bm{u}_t^Y(\bm{y})$ can be expressed by $\bm{v}_{\bm{\theta}^*}^X(\bm{y},t)$ as
  \begin{equation}
    \begin{aligned}
      \bm{u}_t^Y(\bm{y}) =&\  \bm{Av}_{\bm{\theta}^*}^X(\bm{y},t) \\
      &- \frac{c_t}{1-t}\left[c_t\bm{I}_{Cd}+2\bm{AA}^*\right]^{-1} \left[(1-t)\bm{Av}_{\bm{\theta}^*}^X(\bm{y},t)+\bm{y}\right],
    \end{aligned}
  \end{equation}
  where $c_t\triangleq \frac{(1-t)^2}{t}\sigma_0^2$.
  In addition, left-multiplying both sides with $\bm{A}^*$ gives the more computationally friendly formula when $Cd>D$:
  \begin{equation}
    \begin{aligned}
      \bm{A}^*\bm{u}_t^Y(\bm{y}) =&\  \bm{A}^*\bm{Av}_{\bm{\theta}^*}^X(\bm{y},t) \\
      &- \frac{c_t}{1-t}\left[c_t\bm{I}_{D}+2\bm{A}^*\bm{A}\right]^{-1} \bm{A}^*\left[(1-t)\bm{Av}_{\bm{\theta}^*}^X(\bm{y},t)+\bm{y}\right].
    \end{aligned}
  \end{equation}
\end{proposition}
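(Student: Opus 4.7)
My plan is to eliminate the score function $\nabla_{\bm{y}}\log p_t^Y(\bm{y})$ between the two expressions for $\bm{u}_t^Y(\bm{y})$ supplied by \cref{lemma2} and \cref{lemma3}, and then solve the resulting linear equation for the score in closed form.

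First I would specialize both lemmas to the stated linear schedule. With $a_t=1-t$, $b_t=t$, $a_t'=-1$, $b_t'=1$, \cref{lemma2} becomes $\bm{u}_t^Y(\bm{y})=\bm{Av}_{\bm{\theta}^*}^X(\bm{y},t)+(1-t)\sigma_0^2\,\bm{s}$, where $\bm{s}\triangleq\nabla_{\bm{y}}\log p_t^Y(\bm{y})$, while a short calculation (using $b_t'-\tfrac{a_t'}{a_t}b_t = \tfrac{1}{1-t}$) turns \cref{lemma3} into $\bm{u}_t^Y(\bm{y}) = -\tfrac{1}{1-t}\bm{y} - \tfrac{t}{1-t}(2\bm{AA}^*)\bm{s}$. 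Equating the two and collecting the $\bm{s}$ terms yields
\begin{equation*}
\Bigl[(1-t)\sigma_0^2\,\bm{I}_{Cd}+\tfrac{t}{1-t}(2\bm{AA}^*)\Bigr]\bm{s}
= -\tfrac{1}{1-t}\bm{y}-\bm{Av}_{\bm{\theta}^*}^X(\bm{y},t).
\end{equation*}
Multiplying both sides by $(1-t)/t$ and recognizing $c_t=\tfrac{(1-t)^2}{t}\sigma_0^2$ gives $[c_t\bm{I}_{Cd}+2\bm{AA}^*]\bm{s}=-\tfrac{1}{t}[(1-t)\bm{Av}_{\bm{\theta}^*}^X(\bm{y},t)+\bm{y}]$, so
\begin{equation*}
\bm{s}=-\tfrac{1}{t}[c_t\bm{I}_{Cd}+2\bm{AA}^*]^{-1}\bigl[(1-t)\bm{Av}_{\bm{\theta}^*}^X(\bm{y},t)+\bm{y}\bigr].
\end{equation*}
Substituting $\bm{s}$ back into the specialized \cref{lemma2} and noting that $(1-t)\sigma_0^2/t = c_t/(1-t)$ immediately produces the first displayed identity of the proposition.

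For the second formula I would left-multiply the first identity by $\bm{A}^*$ and then apply the push-through identity $\bm{A}^*[c_t\bm{I}_{Cd}+2\bm{AA}^*]^{-1}=[c_t\bm{I}_{D}+2\bm{A}^*\bm{A}]^{-1}\bm{A}^*$, which follows by right-multiplying the elementary equality $(c_t\bm{I}_D+2\bm{A}^*\bm{A})\bm{A}^*=\bm{A}^*(c_t\bm{I}_{Cd}+2\bm{AA}^*)$ by the two resolvents; both matrices involved are positive definite for $t\in(0,1)$, so they are invertible and the manipulation is legitimate.

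The derivation is essentially a bookkeeping exercise, so I do not anticipate a deep obstacle. The step that most deserves care is tracking the signs and the factors of $(1-t)$ and $t$ when combining the two lemmas — in particular verifying that $(1-t)\sigma_0^2/t$ coincides with $c_t/(1-t)$, which is what makes the final expression factor cleanly. A secondary subtlety is the invertibility of $c_t\bm{I}+2\bm{AA}^*$ and the push-through identity used to pass to the $D$-dimensional form; both are standard once positive-definiteness is noted, and no assumption beyond $t\in(0,1)$ is required.
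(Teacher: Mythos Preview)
Your proof is correct and follows the same overall strategy as the paper: combine \cref{lemma2} and \cref{lemma3} under the linear schedule and eliminate the score. The one noteworthy difference is the order of elimination. The paper first inverts \cref{lemma3} to express the score as $\bm{s}=-\tfrac{1}{t}(2\bm{AA}^*)^{+}\bigl[\bm{y}+(1-t)\bm{u}_t^Y(\bm{y})\bigr]$, substitutes this into \cref{lemma2}, and then solves the resulting fixed-point equation for $\bm{u}_t^Y$; this introduces the pseudoinverse $(2\bm{AA}^*)^{+}$ and requires the auxiliary identities $\bigl(\bm{I}+c(2\bm{AA}^*)^{+}\bigr)^{-1}=\bm{I}-c\bm{AA}^{+}(c\bm{I}+2\bm{AA}^*)^{-1}$ and $\bigl(\bm{I}+c(2\bm{AA}^*)^{+}\bigr)^{-1}(2\bm{AA}^*)^{+}=\bm{AA}^{+}(c\bm{I}+2\bm{AA}^*)^{-1}$ to reach the final form. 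Your route---equating the two expressions for $\bm{u}_t^Y$, solving for $\bm{s}$, and substituting back into \cref{lemma2}---sidesteps the pseudoinverse entirely because the operator you invert, $c_t\bm{I}_{Cd}+2\bm{AA}^*$, is positive definite for $t\in(0,1)$. This is a bit more direct and lands on the proposition's stated formula without the stray $\bm{AA}^{+}$ factor that the paper's derivation carries and must implicitly absorb. The push-through argument for the $D$-dimensional version is the same in both proofs.
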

This proposition asserts that, using the optimal solution of the PCFM, an associated marginal vector field can be derived, which governs the probability flow ODE in the projected measurement space. 
Consequently, to sample from $p(\bm{y}_1\mid\bm{y}_0)$, the $\mathcal{Y}$-space flow can be forward integrated using $\bm{u}_t^Y$ to produce the subsequent $\bm{y}_1$ from the observed initial $\bm{y}_0$, i.e., 
\begin{equation}\label{eq:forward}
  \frac{\dd\bm{y}(t)}{\dd t} = \bm{u}_t^Y(\bm{y}).
\end{equation}

\subsubsection{Posterior sampling at $t=1$}
The posterior distribution $p(\bm{x}_1\mid\bm{y}_1)$ is given by the following lemma (proof in \cref{app:lemma4}).
\begin{lemma}
  Assuming that $p(\bm{x}_1)=\mathcal{CN}(\bm{0}, 2\bm{I}_D)$ and $\bm{y}_1=\bm{Ax}_1$, then we have
  \begin{equation}
    p(\bm{x}_1\mid\bm{y}_1) = \mathcal{CN}\left(\bm{x}_1\mid\bm{A}^+\bm{y}_1, 2(\bm{I}_D-\bm{A}^+\bm{A})\right).
  \end{equation}
\end{lemma}
Therefore, we can draw samples from $p(\bm{x}_1\mid\bm{y}_1)$ via
\begin{equation}\label{eq:posterior}
  \bm{x}_1 = \bm{A}^+\bm{y}_1 + (\bm{I}_D-\bm{A}^+\bm{A})\bm{z}_1,\quad \bm{z}_1\sim\mathcal{CN}(\bm{0},2\bm{I}_D),
\end{equation}
which enables stochastic generation of the reconstruction.

\subsubsection{Backward integration and measurement consistency}
The distribution $p(\bm{x}_0\mid\bm{x}_1)$ is a delta distribution given $\bm{x}_1$ due to the $\mathcal{X}$-space flow, which we can approximate using the backward ODE integration with the PCFM optimal solution, 
\begin{equation}\label{eq:backward}
  \frac{\dd\bm{x}(t)}{\dd t} = \bm{v}_{\bm{\theta}^*}^X(\bm{y}(t),t),
\end{equation}
and for simplicity, we can use $\bm{y}(t)\triangleq a_t\bm{y}_0+b_t\bm{y}_1$.
According to the previous results, we have the following statement (proof in \cref{app:proposition4}):
\begin{proposition}[\textbf{Measurement consistency}]\label{proposition4}
  Given the sample $\bm{x}_1$ from \cref{eq:posterior} and the forward and backward process in \cref{eq:forward} and \cref{eq:backward}, as $\sigma_0\rightarrow 0$, we have
  \begin{equation}
    \bm{Ax}(t) = \bm{y}(t), \quad\forall\, t\in [0,1]
  \end{equation}
  which states that the backward process is consistent with the forward process.
  In particular, we have $\bm{Ax}_0=\bm{y}_0$.
\end{proposition}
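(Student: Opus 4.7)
The plan is to define the residual $\bm{e}(t) \triangleq \bm{A}\bm{x}(t) - \bm{y}(t)$ and prove it is identically zero on $[0,1]$; evaluating at $t = 0$ then gives $\bm{A}\bm{x}_0 = \bm{y}_0$. This reduces to (i) verifying the terminal value $\bm{e}(1) = 0$, and (ii) showing $\frac{d\bm{e}(t)}{dt} \to 0$ as $\sigma_0 \to 0$, after which $\bm{e}(t) \equiv \bm{e}(1) = 0$ on $[0,1]$ by a Gronwall-style argument on the linear ODE. Step (i) is short: applying $\bm{A}$ to \eqref{eq:posterior} and using the pseudoinverse identity $\bm{A}\bm{A}^+\bm{A} = \bm{A}$ together with $\bm{y}_1 \in \mathcal{R}(\bm{A})$ (by definition $\bm{y}_1 = \bm{A}\bm{x}_1$ in the dual generative model, and the $\mathcal{Y}$-space flow preserves $\mathcal{R}(\bm{A})$ by Lemma \ref{lemma3}) yields $\bm{A}\bm{x}_1 = \bm{A}\bm{A}^+\bm{y}_1 = \bm{y}_1 = \bm{y}(1)$ for every realization of $\bm{z}_1$.

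For step (ii), differentiating the backward ODE \eqref{eq:backward} and the interpolation $\bm{y}(t) = a_t\bm{y}_0 + b_t\bm{y}_1$ gives $\frac{d\bm{e}(t)}{dt} = \bm{A}\bm{v}^X_{\bm{\theta}^*}(\bm{y}(t), t) - (\bm{y}_1 - \bm{y}_0)$. Solving Proposition \ref{proposition3} for $\bm{A}\bm{v}^X_{\bm{\theta}^*}$ recasts this as $\bm{u}_t^Y(\bm{y}(t)) - (\bm{y}_1 - \bm{y}_0)$ plus a correction whose scalar prefactor $c_t/(1-t) = (1-t)\sigma_0^2/t$ vanishes as $\sigma_0 \to 0$. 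Meanwhile, in the same limit the conditional path \eqref{eq:Y_path} collapses to $\delta_{\bm{A}(a_t\bm{x}_0 + b_t\bm{x}_1)}$, so the posterior $q_t(\bm{z}^X \mid \bm{y})$ degenerates onto the unique pair consistent with the linear interpolation; then Proposition \ref{proposition1} applied in the $\mathcal{Y}$-space gives $\bm{u}_t^Y(\bm{y}(t)) = a_t'\bm{y}_0 + b_t'\bm{y}_1 = \bm{y}_1 - \bm{y}_0$, matching $\frac{d\bm{y}(t)}{dt}$ exactly.

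The main obstacle will be controlling the correction term in Proposition \ref{proposition3} uniformly on $[0,1)$: one must show $[c_t\bm{I}_{Cd} + 2\bm{A}\bm{A}^*]^{-1}$ applied to the bracketed vector $(1-t)\bm{A}\bm{v}^X_{\bm{\theta}^*}(\bm{y},t) + \bm{y}$ remains bounded as $c_t \to 0$, so that multiplication by $c_t/(1-t)$ genuinely drives it to zero rather than producing a $0 \cdot \infty$ indeterminacy. This reduces to verifying both $\bm{y}(t)$ and $\bm{A}\bm{v}^X_{\bm{\theta}^*}(\bm{y}(t),t)$ lie in $\mathcal{R}(\bm{A}\bm{A}^*) = \mathcal{R}(\bm{A})$, on which $2\bm{A}\bm{A}^*$ is invertible with bounded inverse independent of $c_t$: the former holds since $\bm{y}_0 \to \bm{A}\bm{x}_0 \in \mathcal{R}(\bm{A})$ in the noiseless limit and $\bm{y}_1 = \bm{A}\bm{x}_1 \in \mathcal{R}(\bm{A})$, and the latter follows from Proposition \ref{proposition1} since the optimal $\bm{v}^X_{\bm{\theta}^*}$ is a conditional expectation of $\bm{A}$-preimages. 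A minor edge case at $t = 1$, where $c_t$ has singular scaling, can be handled by establishing the identity on $[0, 1 - \varepsilon]$ and then taking $\varepsilon \to 0$, using $\bm{e}(1) = 0$ as the starting anchor.
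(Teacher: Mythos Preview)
Your overall structure---anchor at $t=1$ with $\bm{A}\bm{x}_1=\bm{y}_1$, then show the residual $\bm{e}(t)=\bm{A}\bm{x}(t)-\bm{y}(t)$ has zero derivative---is sound and matches the paper's logic. However, your route through Proposition~\ref{proposition3} is substantially more work than needed, and one step in it is shaky.

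The paper's proof is much shorter: apply $\bm{A}$ directly to the conditional-expectation form of $\bm{v}_{\bm{\theta}^*}^X$ from Proposition~\ref{proposition1}, use linearity and the forward model $\bm{A}\bm{x}_0=\bm{y}_0-\bm{e}_0$, $\bm{A}\bm{x}_1=\bm{y}_1$, to get
\[
\bm{A}\bm{v}_{\bm{\theta}^*}^X(\bm{y},t)=\mathbb{E}[a_t'\bm{y}_0+b_t'\bm{y}_1\mid\bm{y}]-a_t'\mathbb{E}[\bm{e}_0\mid\bm{y}]\ \longrightarrow\ \bm{u}_t^Y(\bm{y})
\]
as $\sigma_0\to 0$, then integrate from $t=1$ against the forward ODE \eqref{eq:forward}. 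This bypasses Proposition~\ref{proposition3} entirely, so there is no correction term to control, no $[c_t\bm{I}+2\bm{A}\bm{A}^*]^{-1}$ boundedness argument, and no edge case at $t=1$.

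The questionable step in your plan is the claim that ``the posterior $q_t(\bm{z}^X\mid\bm{y})$ degenerates onto the unique pair consistent with the linear interpolation,'' giving $\bm{u}_t^Y(\bm{y}(t))=\bm{y}_1-\bm{y}_0$. This is not generally true: even in the noiseless limit there are many pairs $(\bm{y}_0',\bm{y}_1')$ with $a_t\bm{y}_0'+b_t\bm{y}_1'=\bm{y}(t)$, and the marginal field averages over all of them. You do not need this: once you have $\bm{A}\bm{v}_{\bm{\theta}^*}^X=\bm{u}_t^Y$ in the limit, the forward ODE definition \eqref{eq:forward} already gives $\frac{d\bm{y}(t)}{dt}=\bm{u}_t^Y(\bm{y}(t))$, so the residual derivative vanishes without ever evaluating $\bm{u}_t^Y$ explicitly.
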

Nevertheless, to enforce measurement consistency due to discretization error, we apply the range-null decomposition \citep{conference/iclr/wang2023} after each step of the backward integration,
\begin{equation}
\bm{x}(t) \gets \bm{x}(t) - \bm{A}^+(\bm{A}\bm{x}(t)-\bm{y}(t)),
\end{equation}
where the pseudoinverse operator $\bm{A}^+$ can be approximated by the CG method.

\subsubsection{Reconstruction algorithm}
The overall discrete-time algorithm is outlined in \cref{alg:main}.

\begin{algorithm}[t]
  \caption{Reconstruction via Dual-Space Cyclic Integration with PCFM}
  \label{alg:main}
  \KwIn{k-space measurement $\bm{y}_0$, pretrained optimal solution to PCFM $\bm{v}_{\bm{\theta}^*}^X(\cdot,t)$, number of time steps $T$}
  \KwOut{Reconstructed image $\bm{x}_0$ of $\bm{y}_0=:\bm{y}(0)$}
  \For{$t=0,\dots,\nicefrac{(T-1)}{T}$}{
    $\bm{y} \gets\bm{y}+ \frac{1}{T}\bm{u}_t^Y(\bm{y})$\algorithmiccomment{Forward integration using \cref{proposition3}} \\
  }
  Sample $\bm{x}_1\sim p(\bm{x}_1\mid\bm{y}_1:=\bm{y}(1))$ \algorithmiccomment{Posterior sampling with \cref{eq:posterior}} \\
  \For{$t\in\{\nicefrac{(T-1)}{T}, \dots, 0\}$}{
    ${\bm{y}}\gets a_{t+\nicefrac{1}{T}}\bm{y}_0 + b_{t+\nicefrac{1}{T}}\bm{y}_1$ \\
    $\bm{x}\gets \bm{x}- \frac{1}{T}\bm{v}_{\bm{\theta}^*}^X({\bm{y}},t+\nicefrac{1}{T})$\algorithmiccomment{Backward integration} \\
    $\bm{x} \gets \bm{x} - \bm{A}^+(\bm{A}\bm{x}-{\bm{y}})$\algorithmiccomment{Data consistency step} \\
  }
  \Return{${\bm{x}}_0:=\bm{x}(0)$}
\end{algorithm}

\section{Experiments}\label{sec:experiment}

\subsection{Datasets}
Experiments were conducted retrospectively on the NYU fastMRI \citep{journal/rai/knoll2020,arxiv/zbontar2018} and the CMRxRecon challenge (2023) \citep{journal/sd/wang2024,journal/media/lyu2025} datasets to evaluate the performance of the model for accelerated multi-coil MRI reconstruction.
All data used in this work were fully de-identified and anonymized prior to public release. 
Informed consent was obtained from all participating subjects by the original investigators at the time of data acquisition, in accordance with the ethical standards and approvals of the respective data-collecting institutions. 
No identifiable patient information was used or generated in this study.
A selection of 11094/1584/3172 T2-weighted brain MRI slices and 5451/779/1557 cardiac T1/T2 quantitative mapping slices was randomly sampled for training/validation/test, respectively.
Ground-truth images were obtained by the SENSE reconstruction from fully sampled k-space \citep{journal/mrm/pruessmann1999}, i.e., $\bm{x}_0\triangleq\left(\sum_c\bm{S}_c^*\bm{S}_c\right)^{-1}\sum_c\bm{S}_c^*\bm{F}^*\widehat{\bm{y}}_{0,c}$, where $\widehat{\bm{y}}_0$ is the fully sampled k-space measurement, and the coil sensitivity maps were estimated by ESPIRiT \citep{journal/mrm/uecker2014}.
The number of auto-calibration lines (ACLs) from undersampled k-space used for sensitivity map estimation follows the implementation of \cite{journal/mrm/hammernik2021} and \cite{journal/media/lyu2025}.
We retrospectively simulated a random Cartesian (1D) undersampling mask for each image, where every mask contains fully sampled low-frequency k-space lines.
The other lines were randomly uniformly sampled according to the acceleration factor.
The zero-filled adjoint transform $\bm{A}^*\bm{y}_0$ is considered as the undersampled image before reconstruction.

\subsection{Implementation Details}
The linear interpolation coefficients are set as $a_t=t$ and $b_t=1-t$, where $t$ is sampled from a logit normal distribution during training \cite{conference/icml/esser2024}.
The noise level of the original k-space is assumed to be $\sigma_0=10^{-2}$.
We use ADM U-Net \citep{conference/nips/dhariwal2021} as the network architecture for velocity field prediction, where each intermediate convolutional block is followed by adaptive group normalization whose parameters are conditioned on the time points.
Multi-head attention and dropout layers are applied at the lowest two resolutions of the network.
We train the network from scratch on the training data by the AdamW optimizer \citep{conference/iclr/loshchilov2019} for 100K steps, with a learning rate of $10^{-4}$ and a weight decay coefficient of $0.1$.
Exponential moving average of the network parameters
was performed every 100 training steps with a rate of 0.99.
In the inference phase, we set the number of integration steps as $T=10$ and the number of conjugate gradient steps used to approximate the projection operator $\bm{P}$ as $k=10/30$ for training/inference. 
% We set $\bm{z}_1=\bm{0}$ in the posterior sampling step for the benchmark study.
% The regions unobserved by the sensitivity maps are set to zero.

\subsection{Benchmark Study}
To evaluate the efficacy of the proposed UPMRI framework, we conducted comprehensive experiments on two large-scale multi-coil MRI datasetss: the fastMRI brain dataset and the CMRxRecon cardiac T1/T2 mapping dataset.
We compared our unsupervised method against leading reconstruction algorithms across different supervision categories under $4\times$ and $8\times$ acceleration settings.
We utilized two primary evaluation metrics to quantify reconstruction quality: Peak Signal-to-Noise Ratio (PSNR) and Structural Similarity Index (SSIM).

\subsubsection{Experimental setup and baselines}
We compared UPMRI against a wide range of state-of-the-art baselines categorized by their supervision requirements.
\begin{enumerate}[leftmargin=*,label=(\Alph*)]
  \item Supervised methods that require fully sampled MRIs during training. 
  We use the same network architecture and training strategy as our proposed approach for generative model-based methods.
  \begin{itemize}[leftmargin=*]
    \item \textbf{MoDL} \citep{journal/tmi/aggarwal2018} is a model-based end-to-end network that unrolls traditional optimization procedure by regarding the CNN as a regularizer.
    We use 10 unrolling iterations for the network.
    Training is performed by minimizing the L2 loss between the network output and the ground truth.
    \item \textbf{DDNM\textsuperscript{+}} \citep{conference/iclr/wang2023} is a diffusion model-based method for inverse problems solving. 
    Training is performed by denoising on fully sampled images with the DDPM framework \citep{conference/nips/ho2020}, whereas the inference is implemented by alternating between the DDIM sampling steps \citep{conference/iclr/ddim2021} and the range-null decomposition for enforcing measurement consistency using CG iterations \cite{conference/iclr/chung2024}.
    Training is performed by the noise prediction objective with the cosine noise schedule of iDDPM \citep{conference/icml//nichol2021}.
    \item \textbf{OT-ODE} \citep{journal/tmlr/pokle2024} estimates the posterior vector field by combining the original vector field learned from fully sampled images with the likelihood score approximated by the $\Pi$GDM estimation \citep{conference/iclr/song2023}.
    Training is performed by optimizing the original CFM objective \citep{conference/iclr/lipman2023}.
    \item \textbf{PnP-Flow} \citep{conference/iclr/martin2025} integrates the Plug-and-Play framework with flow matching, which alternates between gradient descent steps for measurement consistency, reprojections onto the learned flow path, and denoising by the pre-trained vector field.
    Training is performed by optimizing the original CFM objective \citep{conference/iclr/lipman2023}.
    We set the step size for gradient descent as $\gamma_t=t^{\alpha}$ with $\alpha=0.1$.
  \end{itemize}
  \item Self-supervised methods that learn to reconstruct by additional subsampling of the available k-space measurements.
  We use the VarNet \citep{journal/mrm/hammernik2018} with 10 unrolling iterations as the network backbone for the SSDU family, which we found performed similarly to the MoDL architecture.
  \begin{itemize}[leftmargin=*]
    \item \textbf{SSDU} \citep{journal/mrm/yaman2020} proposes to split the available k-space measurements into two disjoint subsets and then train a model-based reconstruction network to recover one of the subsets from the other.
    Training is achieved by minimizing the L2 loss.
    \item \textbf{Weighted SSDU} \citep{journal/tmi/millard2023} improves upon the SSDU framework by using a subsampling mask of the same distribution as the original mask and re-weighting the L2 loss.
    \item \textbf{Robust SSDU} \citep{journal/bioeng/millard2024} provably recovers clean images from noisy, undersampled training data by simultaneously estimating missing k-space samples and denoising the available samples.
  \end{itemize}
  \item Unsupervised methods that learn to reconstruct using all the available k-space measurements.
  \begin{itemize}[leftmargin=*]
    \item \textbf{REI} \citep{conference/cvpr/chen2022} achieves unsupervised reconstruction by combining the k-space SURE-based loss \citep{journal/ans/stein1981} for measurement consistency and the equivariant imaging framework which builds on the group invariance assumption of the signal space \citep{conference/iccv/chen2021,journal/jmlr/tachella2023}.
    We use the VarNet \citep{journal/mrm/hammernik2018} with 10 unrolling iterations as the network backbone.
    \item \textbf{MOI} \citep{conference/neurips/tachella2022} leverages the randomness in the imaging operator and proposes an unsupervised loss that ensures consistency across all operators.
    \item \textbf{ENSURE} \citep{journal/tmi/aggarwal2022} also leverages the randomness in the forward operators and provides an unbiased estimate of the true mean squared error without fully sampled images. 
    However in the multi-coil setting, it optimizes an approximate version to the true MSE objective.
    \item \textbf{GTF\textsuperscript{2}M} \citep{conference/ipmi/luo2025} proposes a ground-truth-free flow matching framework for single-coil MRI.
    Reconstruction for multi-coil MRI is achieved by coil-wise reconstruction followed by SENSE-based combination.
    We use the same network architecture and training strategy as our proposed approach for this method.
    Nevertheless, this approach can be suboptimal as the prior is learned from coil-wise k-space measurements instead of the combined forward operator of parallel MRI.
  \end{itemize}
\end{enumerate}

\subsubsection{Quantitative results}

\begin{table*}[t]
  \scriptsize
  \centering
  \caption{Quantitative results of 4$\times$ and 8$\times$ accelerated multi-coil MRI reconstruction using various reconstruction methods on the fastMRI brain and CMRxRecon datasets.
  The top and second-best results within each supervision category are highlighted in \textbf{bold} and \underline{underline}, respectively.
  The difference in metrics is statistically significant between our method and the others by the two-sided paired $t$-test ($p<0.05$).}
  \resizebox{1\linewidth}{!}{
  \begin{tabular}{C{1.55cm}C{1.5cm}C{1.8cm}C{1.8cm}c@{\hspace{0.cm}}C{1.6cm}C{1.6cm}c@{\hspace{0.cm}}C{1.8cm}C{1.8cm}c@{\hspace{0.cm}}C{1.6cm}C{1.6cm}C{0.5cm}}
    \toprule
    \multirow{3}{*}{\textbf{\shortstack[c]{Training\\ Supervision}}} & \multirow{3}{*}{\textbf{\shortstack[c]{Reconstruction\\ Method}}} & \multicolumn{5}{c}{fastMRI Brain} & & \multicolumn{5}{c}{CMRxRecon Cardiac T1/T2 Mapping} & \multirow{3}{*}{NFEs $\downarrow$} \\
    \cmidrule{3-7}\cmidrule{9-13}
    & & \multicolumn{2}{c}{SSIM $\uparrow$} & & \multicolumn{2}{c}{PSNR $\uparrow$} & & \multicolumn{2}{c}{SSIM $\uparrow$} & & \multicolumn{2}{c}{PSNR $\uparrow$} & \\
    \cmidrule{3-4}\cmidrule{6-7}\cmidrule{9-10}\cmidrule{12-13}
    & & $4\times$ & $8\times$ & & $4\times$ & $8\times$ & & $4\times$ & $8\times$ & & $4\times$ & $8\times$ & \\
    \midrule
    None & Zero-Filled & $0.815\pm0.087$ & $0.730\pm0.113$ & & $28.28\pm3.80$ & $24.44\pm3.98$ & & $0.769\pm0.057$ & $0.747\pm0.056$ & & $27.01\pm1.97$ & $26.11\pm1.90$ & N/A \\
    \hdashline\noalign{\vskip 0.5ex}
    \multirow{4}{*}{Supervised} & MoDL & $\bm{0.970\pm0.036}$ & $\underline{0.916\pm0.044}$ & & $\underline{39.71\pm2.93}$ & $32.32\pm3.07$ & & $\bm{0.979\pm0.011}$ & $\underline{0.943\pm0.025}$ & & $\underline{41.79\pm3.38}$ & $\underline{36.36\pm3.14}$ & 1 \\
    & DDNM\textsuperscript{+} & $0.938\pm0.041$ & $\bm{0.920\pm0.040}$ & & $\bm{40.00\pm2.78}$ & $\bm{35.24\pm2.85}$ & & $\underline{0.977\pm0.011}$ & $\bm{0.953\pm0.022}$ & & $\bm{44.93\pm3.35}$ & $\bm{39.60\pm3.31}$ & 100 \\
    & OT-ODE & $0.907\pm0.054$ & $0.852\pm0.060$ & & $33.95\pm2.32$ & $28.86\pm2.94$ & & $0.922\pm0.036$ & $0.870\pm0.052$ & & $35.65\pm3.23$ & $32.08\pm3.00$ & 100 \\
    & PnP-Flow & $\underline{0.951\pm0.044}$ & $0.913\pm0.043$ & & $37.92\pm2.44$ & $\underline{32.85\pm2.80}$ & & $0.963\pm0.021$ & $0.924\pm0.038$ & & $39.77\pm3.57$ & $35.13\pm3.46$ & 100 \\
    \hdashline\noalign{\vskip 0.5ex}
    \multirow{3}{*}{Self-supervised} & SSDU & $0.831\pm0.076$ & $0.792\pm0.094$ & & $28.13\pm2.44$ & $26.61\pm3.68$ & & $\underline{0.888\pm0.035}$ & $0.811\pm0.051$ & & $\underline{32.64\pm2.59}$ & $29.23\pm2.37$ & 1 \\
    & W-SSDU & $\underline{0.939\pm0.049}$ & $\bm{0.870\pm0.055}$ & & $\underline{36.09\pm2.64}$ & $\bm{29.72\pm2.90}$ & & ${0.872\pm0.050}$ & $\underline{0.831\pm0.050}$ & & ${32.11\pm2.88}$ & $\underline{29.82\pm2.58}$ & 1 \\
    & R-SSDU & $\bm{0.941\pm0.047}$ & $\underline{0.856\pm0.063}$ & & $\bm{36.23\pm2.58}$ & $\underline{29.27\pm3.15}$ & & $\bm{0.912\pm0.031}$ & $\bm{0.861\pm0.044}$ & & $\bm{33.93\pm2.59}$ & $\bm{31.10\pm2.66}$ & 1 \\
    \hdashline\noalign{\vskip 0.5ex}
    \multirow{5}{*}{Unsupervised} & REI & $0.683\pm0.120$ & $0.715\pm0.118$ & & $21.59\pm2.90$ & $21.62\pm2.76$ & & $0.736\pm0.056$ & $0.716\pm0.060$ & & $23.65\pm2.12$ & $26.62\pm1.96$ & 1 \\
    & MOI & $0.869\pm0.065$ & $0.747\pm0.105$ & & $30.97\pm2.74$ & $25.16\pm3.94$ & & $\underline{0.971\pm0.015}$ & $0.874\pm0.040$ & & $\underline{40.13\pm3.26}$ & $\underline{31.47\pm2.62}$ & 1 \\
    & ENSURE & $0.899\pm0.065$ & $0.800\pm0.104$ & & $32.75\pm4.20$ & $27.12\pm4.33$ & & $0.918\pm0.041$ & $0.849\pm0.052$ & & ${34.57\pm3.13}$ & $30.27\pm2.48$ & 1 \\
    & GTF\textsuperscript{2}M & $\underline{0.916\pm0.055}$ & $\underline{0.852\pm0.057}$ & & $\underline{33.99\pm2.33}$ & $\underline{28.40\pm3.05}$ & & ${0.918\pm0.028}$ & $\underline{0.881\pm0.038}$ & & $33.67\pm2.43$ & ${31.33\pm2.42}$ & 20 \\
    & UPMRI (Ours) & $\bm{0.983\pm0.032}$ & $\bm{0.948\pm0.034}$ & & $\bm{42.19\pm3.71}$ & $\bm{35.08\pm3.35}$ & & $\bm{0.994\pm0.008}$ & $\bm{0.974\pm0.016}$ & & $\bm{52.39\pm9.81}$ & $\bm{40.50\pm3.84}$ & 20 \\
    \bottomrule
  \end{tabular}
  }
  \label{tab:quantitative_results}
\end{table*}

\cref{tab:quantitative_results} presents the quantitative results for both datasets. 
The proposed UPMRI method demonstrates robust superiority, consistently achieving the highest scores among unsupervised methods and frequently outperforming supervised benchmarks.

On the brain dataset, UPMRI achieved a PSNR of 42.19 dB and SSIM of 0.983 at $4\times$ acceleration.
This represents a substantial improvement over the best-performing supervised method, MoDL (39.71 dB, 0.970), and the top self-supervised method, R-SSDU (36.23 dB, 0.941). 
Among unsupervised competitors, UPMRI outperformed the closest rival, GTF\textsuperscript{2}M, by a margin of 8.2 dB.
At the more challenging $8\times$ acceleration, UPMRI maintained exceptional fidelity with a PSNR of 35.08 dB and SSIM of 0.948. 
While the supervised diffusion model DDNM\textsuperscript{+}
achieved a slightly higher PSNR (35.24 dB), UPMRI surpassed it significantly in structural similarity (0.948 vs. 0.920).

The proposed method achieved its more significant gains in cardiac imaging.
At $4\times$ acceleration, UPMRI reached a PSNR of 52.39 dB and SSIM of 0.994.
This reflects a dramatic performance gap; UPMRI outperformed the leading supervised method (DDNM\textsuperscript{+}) by 7.46 dB and the leading unsupervised method (MOI) by 12.26 dB.
At $8\times$ acceleration, UPMRI remained robust, achieving 40.50 dB PSNR and 0.974 SSIM.
It successfully outperformed all supervised candidates, including DDNM\textsuperscript{+} (39.60 dB) and MoDL (36.36 dB), proving that UPMRI does not require fully sampled data to achieve state-of-the-art reconstruction in cardiac scenarios.

Furthermore, our approach demonstrates better efficiency relative to baseline approaches based on supervised diffusion models or flow matching, even though it employs the same network structure and training strategy. 
This is achieved with a significant decrease in NFEs: UPMRI requires only 20 function evaluations, making it several times faster than diffusion- and flow-based methods like DDNM\textsuperscript{+} and OT-ODE (100 NFEs).
\cref{tab:time} illustrates the average inference time of the various methods evaluated for the reconstruction of a single image from the CMRxRecon 2023 dataset. It is evident that among the generative model-based techniques, our method demonstrates the fastest inference time.

\begin{table}
  \scriptsize
  \centering
  \caption{Average inference time of the compared methods for reconstructing one image from the CMRxRecon 2023 dataset.
  The inference time is calculated on an NVIDIA A5000 GPU.}
  \begin{tabular}{C{1.6cm}C{1.9cm}C{1.8cm}C{0.8cm}}
    \toprule
    {\textbf{\shortstack[c]{Training\\ Supervision}}} & {\textbf{\shortstack[c]{Reconstruction\\ Method}}} & \textbf{Time} (ms) & {NFEs $\downarrow$} \\
    \midrule
    \multirow{4}{*}{Supervised} & MoDL & 58 & 1 \\
    & DDNM\textsuperscript{+} & 1000 & 100 \\
    & OT-ODE & 1750 & 100 \\
    & PnP-Flow & 938 & 100 \\
    \midrule
    \multirow{3}{*}{Self-supervised} & SSDU & 19 & 1 \\
    & Weighted SSDU & 19 & 1 \\
    & Robust SSDU & 19 & 1 \\
    \midrule
    \multirow{5}{*}{Unsupervised} & REI & 19 & 1 \\
    & MOI & 58 & 1 \\
    & ENSURE & 58 & 1 \\
    & GTF\textsuperscript{2}M & 1563 & 20 \\
    \rowcolor{gray!20}
    & UPMRI (Ours) & 500 & 20 \\
    \bottomrule
  \end{tabular}
  \label{tab:time}
\end{table}

\subsubsection{Qualitative visualization}

\begin{figure*}[t!]
  \centering
  \includegraphics[width=\textwidth]{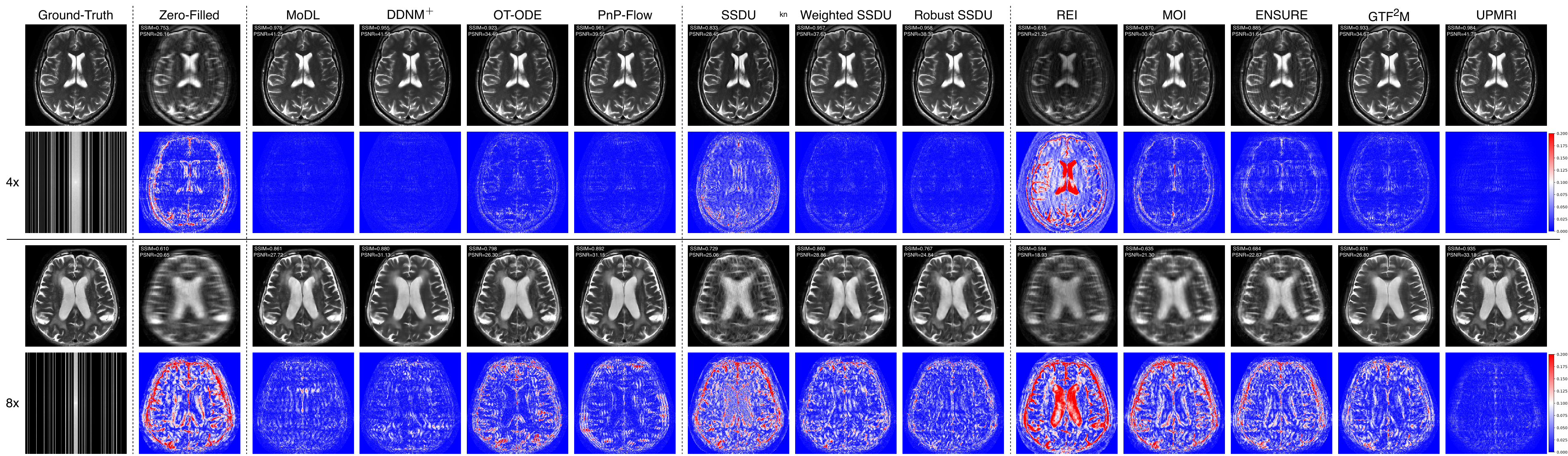}
  \caption{Visualization of reconstruction on two test samples of $4\times$ and $8\times$ accelerated multi-coil brain MRI from the compared methods. The k-space are presented in log-scale absolute values. The error maps are presented in values relative to the peak intensity in the ground-truth image (blue indicates low error, red/white indicates high error).}
  \label{fig:brain_compare}
\end{figure*}

\begin{figure*}[t!]
  \centering
  \includegraphics[width=\textwidth]{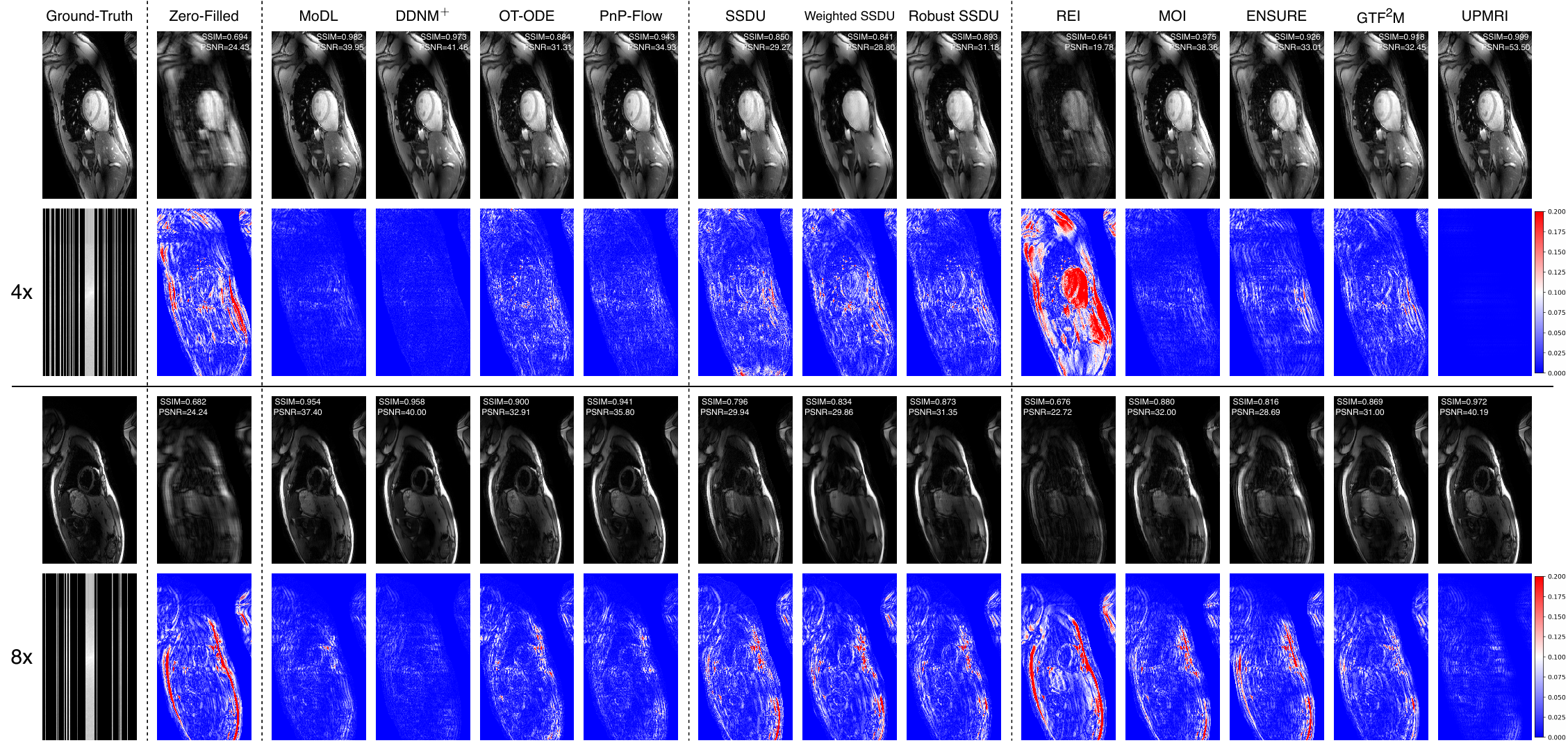}
  \caption{Visualization of reconstruction on two test samples of $4\times$ and $8\times$ accelerated multi-coil cardiac T1/T2-mapping MRI from the compared methods. The k-space measurements are presented in log-scale absolute values. The error maps are presented in values relative to the peak intensity in the ground-truth image (blue indicates low error, red/white indicates high error).}
  \label{fig:cmr_compare}
\end{figure*}

\cref{fig:brain_compare} and \cref{fig:cmr_compare} visualize the reconstruction quality and error distribution for brain and cardiac datasets, respectively. The error maps depict the absolute difference between the reconstruction and the ground truth, where blue indicates low error and red/white indicates high error.

On the brain dataset, at both $4\times$ and $8\times$ acceleration, baseline methods like SSDU and REI display heavy aliasing artifacts, rendering fine details indistinguishable.
The error maps for competitive unsupervised methods such as GTF\textsuperscript{2}M and ENSURE show visible residual structures corresponding to the brain's folding patterns where high-frequency details are missing.
In contrast, UPMRI excels at preserving sharp anatomical boundaries.
In the regions of the brain ventricles and sulci, UPMRI recovers fine cortical details that are lost or blurred in the supervised and self-supervised reconstructions, producing images visually indistinguishable from the ground-truth.

Cardiac quantitative imaging is challenging due to complex signal variations and anatomy.
In \cref{fig:cmr_compare}, unsupervised methods like REI and MOI degrade significantly at $8\times$ acceleration, evidenced by red/white regions in the error maps, representing severe signal deviation.
Consistent with the quantitative results, the UPMRI error maps are the most suppressed among the comparisons, confirming that it achieves the highest pixel-wise accuracy.
Even at $8\times$ acceleration, UPMRI maintains the distinct contrast between the myocardium and the blood pool and removes aliasing.

\subsection{Parameter Study}

\begin{figure}[t!]
  \centering
  \begin{subfigure}{\linewidth}
    \centering
    \caption{Performance vs. number of integration steps}
    \includegraphics[width=\linewidth]{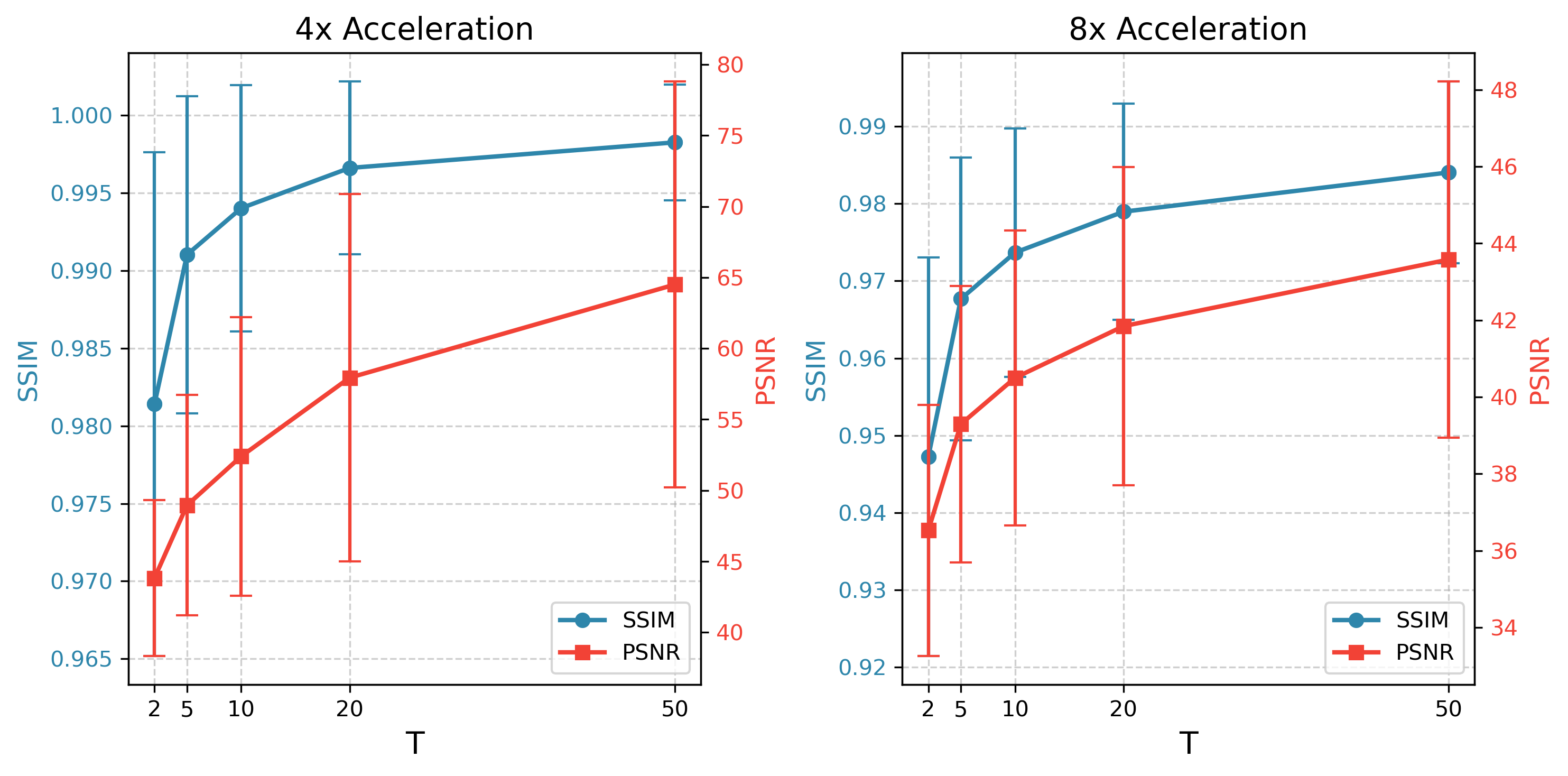}
    \label{fig:int_steps}
  \end{subfigure}
  \vspace{-0.5cm}
  \\
  \begin{subfigure}{\linewidth}
    \centering
    \caption{Performance vs. number of test-time CG iterations}
    \includegraphics[width=\linewidth]{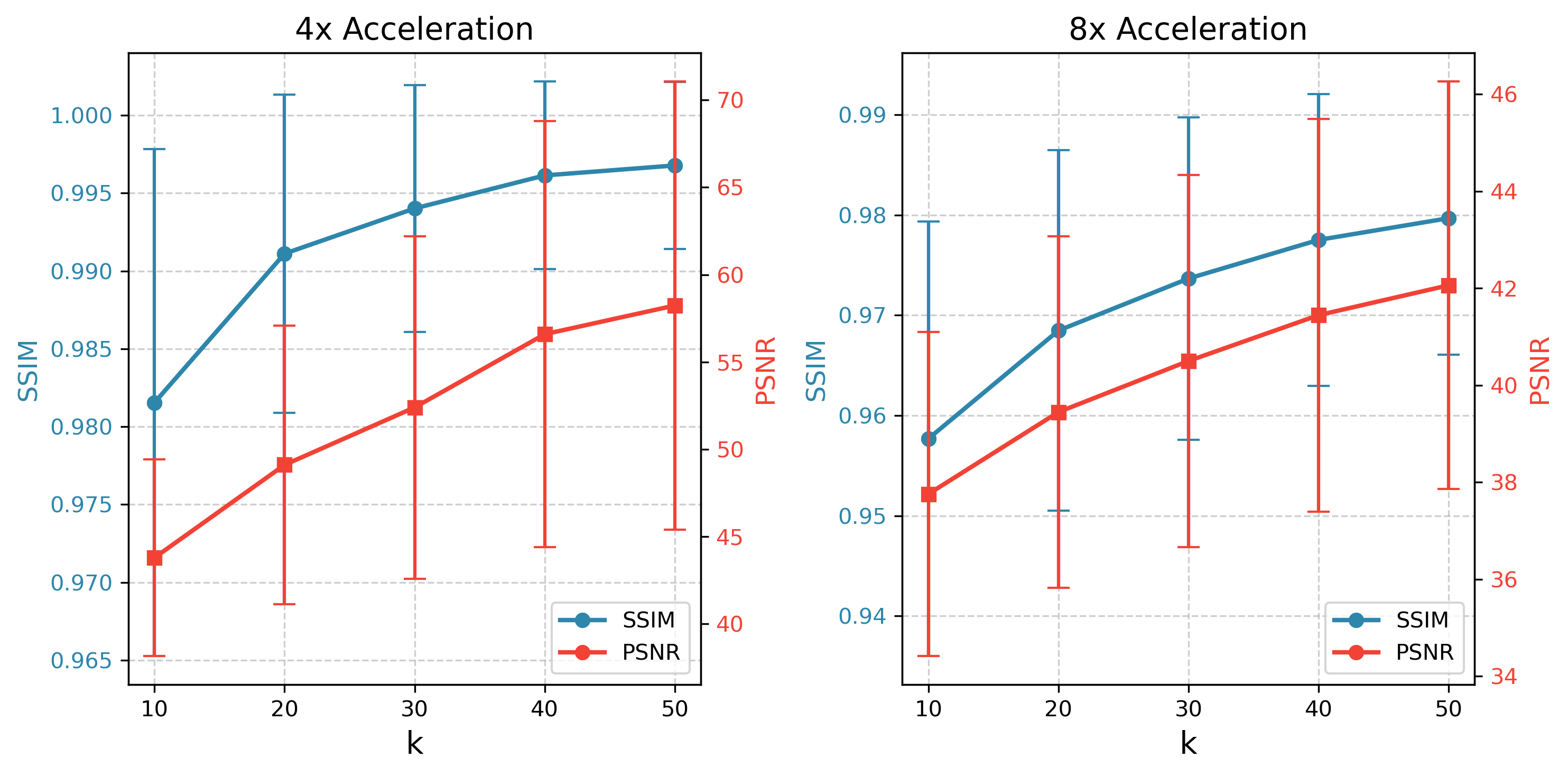}
    \label{fig:cg_steps}
  \end{subfigure}
  \vspace{-0.5cm}
  \caption{Reconstruction performance on the CMRxRecon cardiac mapping dataset ($4\times$ \& $8\times$ acceleration) as a function of the number integration steps and test-time CG iterations, respectively. 
  Note that the default values are $T=10$ and $k=30$.}
  \label{fig:param_study}
\end{figure}

To evaluate the sensitivity of our proposed framework to its hyperparameters, we conducted parameter studies on the CMRxRecon cardiac dataset.
Specifically, we analyzed the impact of the number of integration steps ($T$) and the number of test-time CG steps ($k$) on reconstruction performance, measured by SSIM and PSNR.

\cref{fig:int_steps} illustrates the reconstruction performance as a function of the number of integration steps $T$, ranging from 2 to 50, for both $4\times$ and $8\times$ acceleration. 
We observe a clear logarithmic trend where reconstruction quality improves rapidly as $T$ increases from 2 to 10. 
However, beyond $T=10$, the SSIM curve begins to plateau, indicating diminishing returns in structural fidelity, although PSNR continues to show slight linear gains. 
While $T=50$ yields the highest numerical metrics, it incurs a linearly proportional increase in computational cost. 
Consequently, we selected $T=10$ as the default setting, as it represents an optimal trade-off point that achieves high-fidelity reconstruction while maintaining rapid inference speeds (20 NFEs).

\cref{fig:cg_steps} analyzes the effect of the number of CG iterations $k$ used to approximate the projection operator during the data consistency steps. 
We varied $k$ from 10 to 50. The results show a consistent improvement in both SSIM and PSNR as $k$ increases, confirming that more accurate projection onto the measurement subspace leads to better restoration. 
Similar to the integration steps, the performance gains begin to saturate around $k=30$, particularly for the SSIM metric at $4\times$ acceleration. 
To ensure rigorous enforcement of measurement consistency without unnecessary computational overhead, we set the default value to $k=30$.

These trends remain consistent across both $4\times$ and $8\times$ acceleration settings, demonstrating the robustness of our parameter choices across different undersampling regimes.

\subsection{Generalization to Unseen Acceleration Rates}

\begin{figure}[t!]
  \centering
  \includegraphics[width=\linewidth]{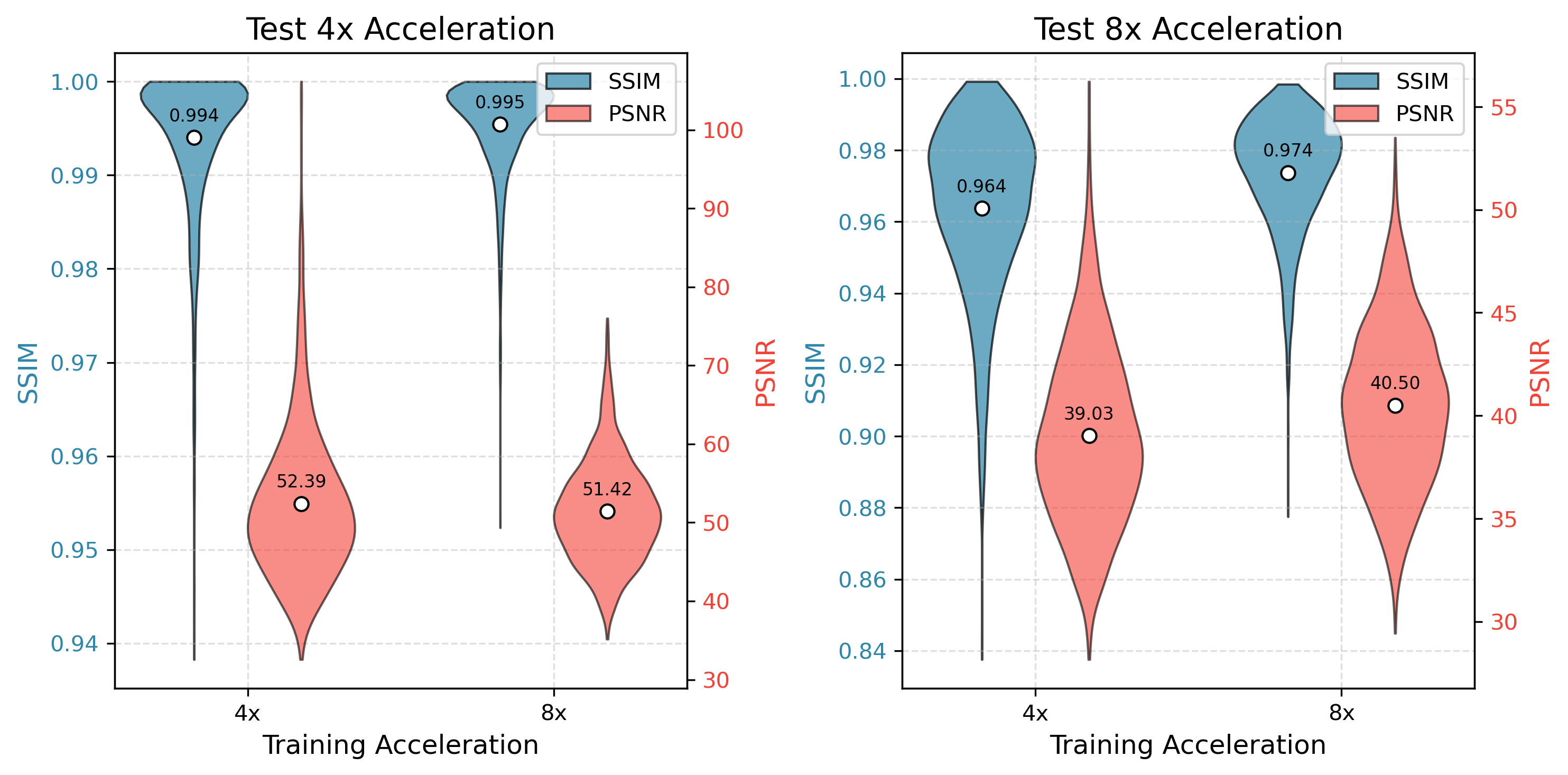}
  \caption{Generalization to unseen acceleration rates. 
  The violin plots display the distribution of SSIM (blue, left axis) and PSNR (red, right axis) scores. 
  (Left) Evaluation on $4\times$ test data shows that the model trained on $8\times$ data maintains performance levels comparable to the model trained on $4\times$ data. 
  (Right) Evaluation on $8\times$ test data demonstrates that the model trained on $4\times$ data generalizes well to higher acceleration rates, matching the performance of the model trained on $8\times$ data. 
  These results confirm the model's robustness and ability to generalize to unseen sampling masks.}
  \label{fig:gen_acc_4_8}
\end{figure}

To evaluate the robustness of the proposed UPMRI framework, we conducted a cross-evaluation experiment to assess how well the model generalizes to acceleration rates not seen during training.
We used the models trained separately on $4\times$ and $8\times$ accelerated k-space data and evaluated them on both test sets.
As illustrated in \cref{fig:gen_acc_4_8}, the model demonstrates good consistency regardless of the training data acceleration.
The left panel shows the evaluation on $4\times$ accelerated test data.
Notably, the model trained on the more challenging $8\times$ data maintains high performance that are comparable to the model trained on the matched $4\times$ data.

Conversely, the right panel displays the results on the highly undersampled $8\times$ test set.
Surprisingly, the model trained on the $4\times$ acceleration generalizes effectively to this harder task, producing SSIM and PSNR distributions that are similar to the model trained directly on $8\times$ data.
The similar shape and spread of the violin plots across different training configurations indicate that UPMRI effectively learns the underlying data distribution rather than overfitting to specific sampling masks, thereby demonstrating strong generalizability to unseen sampling patterns.

\begin{figure}[t!]
  \centering
  \includegraphics[width=\linewidth]{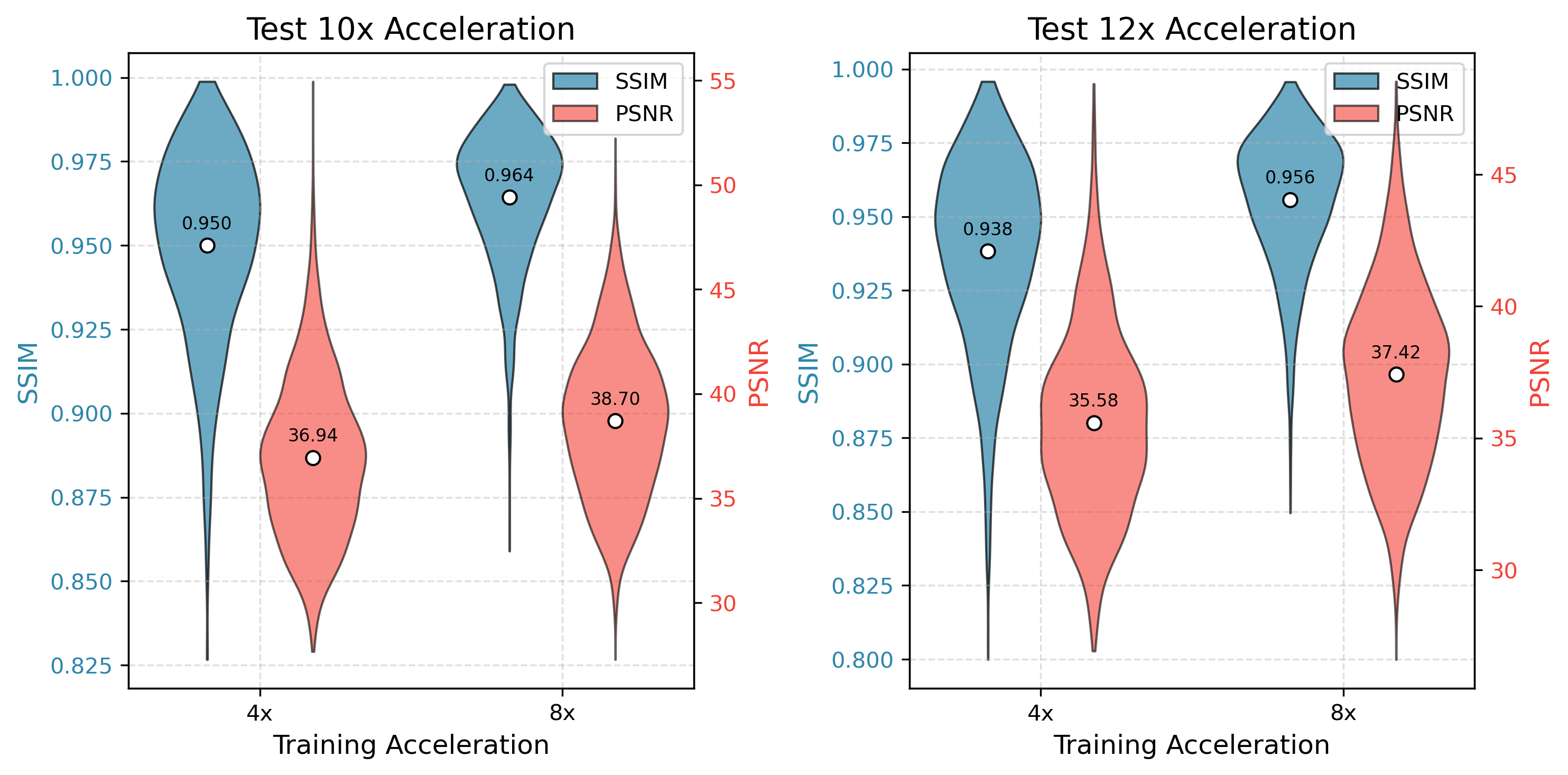}
  \caption{Generalization to extreme acceleration rates. 
  The violin plots illustrate the reconstruction performance on unseen (Left) $10\times$ and (Right) $12\times$ accelerated test data. 
  The x-axis indicates the acceleration rate used during unsupervised training ($4\times$ vs. $8\times$). 
  White circles denote the mean SSIM (blue, left axis) and PSNR (red, right axis). The results demonstrate that the model trained on higher acceleration rates ($8\times$) exhibits superior robustness and reconstruction fidelity when generalizing to extreme undersampling scenarios compared to the model trained on lower acceleration rates ($4\times$).}
  \label{fig:gen_acc_10_12}
\end{figure}

To evaluate the robustness of the proposed framework under extreme imaging conditions, we tested the models on unseen $10\times$ and $12\times$ accelerated data. 
As illustrated in \cref{fig:gen_acc_10_12}, the UPMRI framework exhibits strong generalization capabilities, with the model trained on $8\times$ data consistently outperforming the model trained on $4\times$ data across both extreme settings. 
At $10\times$ acceleration, the $8\times$-trained model achieved a mean SSIM of 0.964 and PSNR of 38.70 dB, compared to 0.950 and 36.94 dB for the $4\times$-trained model. 
Even at the highly challenging $12\times$ rate, the $8\times$-trained model maintained exceptional fidelity (SSIM 0.956, PSNR 37.42 dB), whereas the performance of the $4\times$-trained model degraded more noticeably (SSIM 0.938, PSNR 35.58 dB). 
These results suggest that unsupervised learning on sparser measurements (higher acceleration) forces the PCFM objective to learn a more robust prior, enabling it to better resolve severe aliasing artifacts in out-of-distribution testing scenarios.

\begin{figure}[t!]
  \centering
  \includegraphics[width=\linewidth]{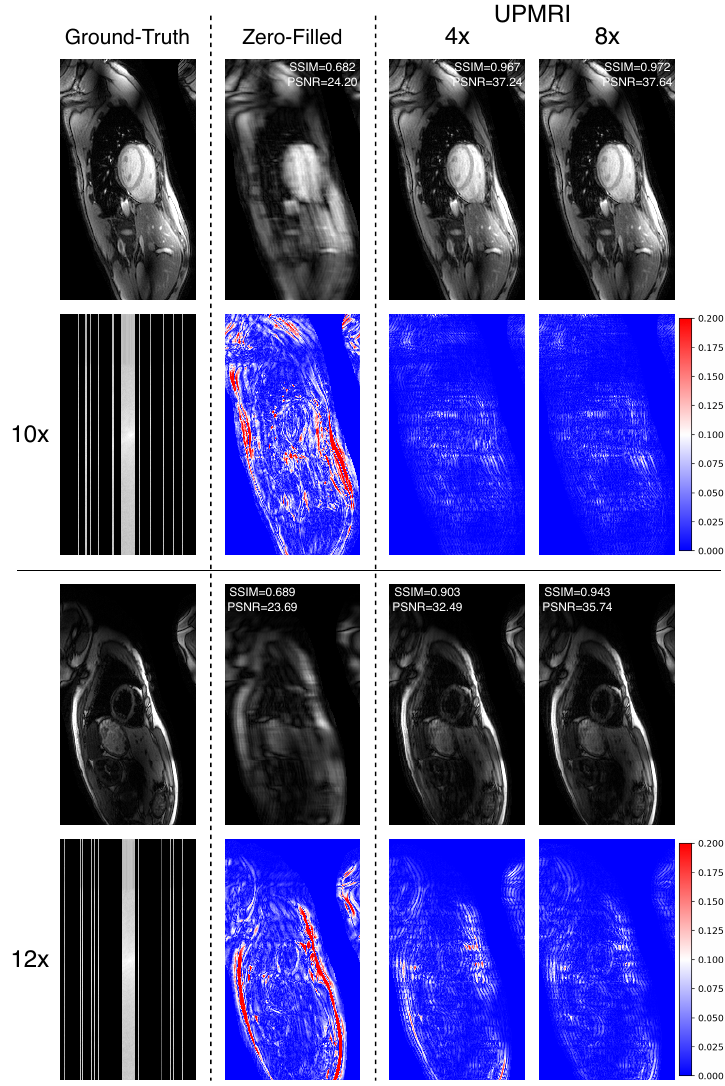}
  \caption{Qualitative visualization of generalization to extreme acceleration rates. 
  The figure displays reconstruction results on unseen test samples with (Top) $10\times$ and (Bottom) $12\times$ acceleration. 
  From left to right: the ground-truth image with the sampling mask; the zero-filled reconstruction; and the UPMRI reconstructions produced by models trained on $4\times$ and $8\times$ data, respectively. 
  The second row of each block shows the error map relative to the ground truth (blue indicates low error, red/white indicates high error). 
  The model trained on $8\times$ data demonstrates superior robustness, recovering sharper anatomical details and yielding lower error maps in these extreme undersampling scenarios compared to the model trained on $4\times$ data.}
  \label{fig:gen_vis_10_12}
\end{figure}

\cref{fig:gen_vis_10_12} presents a visual comparison of reconstructions under extreme $10\times$ and $12\times$ acceleration. 
While the zero-filled images suffer from severe blurring and heavy aliasing artifacts, the proposed UPMRI framework successfully recovers the underlying cardiac anatomy in both test cases. 
Consistent with the quantitative metrics, the model trained on 
$8\times$ undersampled data exhibits a distinct visual advantage over the model trained on $4\times$ data. 
At $12\times$ acceleration, the $4\times$-trained model begins to exhibit mild smoothing and residual artifacts; in contrast, the $8\times$-trained model preserves fine structural boundaries more effectively. 
This is evident in the error maps, where the $8\times$ column shows significantly lower error magnitudes (darker blue) than the $4\times$ column, particularly in dynamic regions like the cardiac ventricles. 
These visualizations confirm that learning from sparser measurements enables the PCFM objective to construct a prior that is more resilient to severe degradation.

% \subsection{Uncertainty Estimation}

\section{Conclusion and Discussion}\label{sec:conclusion}
In this study, we presented UPMRI, a novel framework for unsupervised parallel MRI reconstruction that eliminates the dependency on fully sampled ground-truth data. At the core of our approach is the Projected Conditional Flow Matching (PCFM) objective, which integrates concepts from Generalized SURE to learn the prior distribution of MRI images directly from undersampled k-space measurements. By theoretically bridging the gap between the optimal PCFM solution and the marginal vector field in the measurement space, we developed a dual-space cyclic integration algorithm. This algorithm effectively leverages the learned prior while strictly enforcing data consistency through alternating forward and backward integration steps.

\subsubsection{Performance and Robustness}
Our comprehensive evaluation on the fastMRI and CMRxRecon datasets confirms that UPMRI significantly surpasses existing self-supervised and unsupervised baselines. In highly accelerated settings ($8\times$ multi-coil), UPMRI demonstrated exceptional resilience, outperforming the supervised diffusion baseline (DDNM\textsuperscript{+}) in structural similarity (SSIM) on the brain dataset and surpassing both supervised and unsupervised methods on the cardiac dataset. 

\subsubsection{Efficiency and Generalization}
Unlike standard diffusion methods that often require hundreds of function evaluations (NFEs), our flow-matching-based inference converges to high-quality solutions in as few as 20 NFEs. 
This efficiency, combined with the ability to generalize across different acceleration rates (e.g., training on $4\times$ and testing on $8\times$), highlights the practical viability of UPMRI for clinical deployment where scan protocols vary.

\subsubsection{Limitations and Future Work}
While UPMRI achieves state-of-the-art unsupervised performance, it relies on accurate estimation of coil sensitivity maps (via ESPIRiT). 
Inaccuracy in sensitivity estimation can propagate through the data consistency steps. 
Additionally, while the numerical approximation of the projection operator makes PCFM tractable for parallel MRI, it introduces a computational overhead during training compared to closed-form single-coil solutions. 
Future work will focus on extending PCFM to 3D volumetric imaging, incorporating joint sensitivity map estimation into the flow matching framework, and exploring dynamic MRI reconstruction to leverage temporal correlations.

In conclusion, UPMRI bridges the performance gap between supervised and unsupervised deep learning for MRI. By effectively learning from the data acquired in standard clinical workflows, it paves the way for high-quality, accelerated imaging without the bottleneck of acquiring fully sampled ground-truth datasets.

% \newpage

\appendices

\renewcommand{\thefigure}{A\arabic{figure}}
\renewcommand{\thetable}{A\arabic{table}}
\renewcommand{\theequation}{A\arabic{equation}}
\setcounter{figure}{0}
\setcounter{table}{0}
\setcounter{equation}{0}

\section{Proofs}
\subsection{Proof of Lemma 1}\label[appendix]{app:lemma1}

\begin{lemma}[\textbf{Measurement completeness}]
  The average projection, $\mathbb{E}_s[\bm{P}_s]$, where $\bm{P}_s=\bm{A}_s^+\bm{A}_s$, is invertible when every k-space location has a non-zero probability of being acquired.
\end{lemma}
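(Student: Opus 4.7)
The plan is to prove that $\bm{N} \triangleq \mathbb{E}_s[\bm{P}_s]$ is positive definite by a null-space argument, then invoke the assumed measurement completeness to rule out non-trivial null vectors.

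First I would observe that $\bm{P}_s = \bm{A}_s^+\bm{A}_s$ is an orthogonal projector, hence Hermitian and positive semi-definite with $\bm{v}^*\bm{P}_s\bm{v} = \|\bm{P}_s\bm{v}\|_2^2$. Taking expectation preserves Hermitianity and PSD, so $\bm{N}\succeq\bm{0}$. Thus $\bm{N}$ fails to be invertible only if there exists $\bm{v}\neq\bm{0}$ with $\bm{v}^*\bm{N}\bm{v}=0$, equivalently $\mathbb{E}_s\|\bm{P}_s\bm{v}\|_2^2 = 0$. Since the integrand is non-negative, this forces $\bm{P}_s\bm{v}=\bm{0}$ for almost every mask $s$ in the support of the distribution, which in turn is equivalent (kernel of a projector equals kernel of the operator) to $\bm{A}_s\bm{v}=\bm{0}$ a.s.

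Next I would unpack $\bm{A}_s\bm{v}=\bm{0}$ using the block structure \eqref{eq:forward_operator}: it reads $\bm{M}_s\bm{F}\bm{S}_c\bm{v}=\bm{0}$ for every coil $c\in\{1,\dots,C\}$, i.e., $(\bm{F}\bm{S}_c\bm{v})(k)=0$ for every k-space index $k$ retained by $\bm{M}_s$. The measurement-completeness assumption states that for every k-space index $k$ there exists at least one mask $s$ in the support with $\Pr(s)>0$ that acquires $k$. Ranging over all $k$ and intersecting the resulting constraints gives $(\bm{F}\bm{S}_c\bm{v})(k)=0$ for every $(k,c)$, so $\bm{F}\bm{S}_c\bm{v}=\bm{0}$ for every coil. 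Because $\bm{F}$ is unitary (hence invertible), this yields $\bm{S}_c\bm{v}=\bm{0}$ for every $c$, and therefore $\bm{v}^*\!\bigl(\sum_c\bm{S}_c^*\bm{S}_c\bigr)\bm{v}=0$.

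The final step appeals to the standard parallel-MRI assumption that at every voxel at least one coil has non-zero sensitivity, equivalently that the diagonal matrix $\sum_c\bm{S}_c^*\bm{S}_c$ (used implicitly in the SENSE combination that defines the ground truth $\bm{x}_0$ in the experiments) is strictly positive definite. Under this assumption the last displayed equation forces $\bm{v}=\bm{0}$, contradicting $\bm{v}\neq\bm{0}$ and establishing invertibility of $\bm{N}$.

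I expect the main obstacle to be the passage from $\bm{P}_s\bm{v}=\bm{0}$ a.s.\ to a pointwise statement on k-space: this is where the ``non-zero probability for every k-space location'' hypothesis is used, and one must be careful to distinguish between the support of the mask distribution and the union of acquired indices across masks. The remainder is essentially bookkeeping on the block structure of $\bm{A}_s$ together with the unitarity of $\bm{F}$ and the non-vanishing of the combined coil sensitivity.
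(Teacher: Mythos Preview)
Your proposal is correct and follows essentially the same route as the paper's proof: both argue that $\mathbb{E}_s[\bm{P}_s]$ is PSD, reduce non-invertibility to the existence of a vector annihilated by every $\bm{A}_s$ in the support, use full k-space coverage plus the invertibility of $\bm{F}$ to conclude $\bm{S}_c\bm{v}=\bm{0}$ for all coils, and finish with the standard assumption that $\sum_c\bm{S}_c^*\bm{S}_c$ is positive definite. Your treatment is in fact slightly more careful in distinguishing ``almost surely'' from ``for all $s$'' and in explicitly noting that $\mathcal{N}(\bm{P}_s)=\mathcal{N}(\bm{A}_s)$.
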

\begin{proof}
  To prove that $\widebar{\bm{P}}\triangleq\mathbb{E}_s[\bm{P}_s]$ is invertible, we must show that its null space is trivial, meaning the only vector $\bm{x}$ for which $\widebar{\bm{P}}\bm{x}=\bm{0}$ is $\bm{x}=\bm{0}$.

  Note that the projection matrix $\bm{P}_s$ is positive semi-definite (PSD).
  The expectation of PSD matrices is also PSD.
  Therefore, $\widebar{\bm{P}}$ is also PSD.

  For a PSD matrix $\widebar{\bm{P}}$, the condition $\widebar{\bm{P}}\bm{x}=\bm{0}$ is equivalent to $\bm{x}^*\widebar{\bm{P}}\bm{x}=0$.
  Thus, we have
  \begin{equation}
    0 = \bm{x}^*\widebar{\bm{P}}\bm{x} = \bm{x}^*\mathbb{E}_s[\bm{P}_s]\bm{x} = \mathbb{E}_s[\bm{x}^*\bm{P}_s\bm{x}] = \mathbb{E}_s[\bm{x}^*\bm{P}_s^*\bm{P}_s\bm{x}] = \mathbb{E}_s\norm{\bm{P}_s\bm{x}}_2^2.
  \end{equation}
  This implies that $\bm{P}_s\bm{x}=\bm{0}$ for all $s$, which means that $\bm{x}$ must lie in the intersection of all null spaces $\mathcal{N}(\bm{P}_s)$, i.e., 
  \begin{equation}
    \bm{x} \in \bigcap_{s}\mathcal{N}(\bm{P}_s) = \bigcap_{s}\mathcal{N}(\bm{A}_s).
  \end{equation}

  This means that the Fourier transform of the coil-weighted image, $\bm{FS}_c\bm{x}$, is zero at all k-space locations sampled by mask $s$, for all $c$ and $s$.
  If every k-space location has a non-zero probability of being acquired, the union of all possible sampling patterns covers the entire k-space.
  This forces the entire Fourier-transformed image to be zero: $\bm{FS}_c\bm{x}=\bm{0}$ for all coils $c$.

  Since the Fourier transform is invertible, this means $\bm{S}_c\bm{x}=\bm{0}$ for all coils $c$, leading to $(\sum_{c}\bm{S}_c^*\bm{S}_c)\bm{x}=\bm{0}$.
  A standard assumption in parallel imaging is that there are no ``blind spots'', meaning the sum-of-squares sensitivity map matrix $\sum_{c}\bm{S}_c^*\bm{S}_c$ is invertible.
  Therefore, the only solution is $\bm{x}=\bm{0}$.
\end{proof}

\subsection{Proof of Proposition 1}\label[appendix]{app:proposition1}
\begin{proposition}[\textbf{Optimal solution to PCFM}]
  The minimizer of the PCFM objective is given by
  \begin{equation}
    \bm{v}_{\bm{\theta}^*}^X(\bm{y},t) = \mathbb{E}_{q_{t}(\bm{z}^X\mid\bm{y}),p_t^X(\bm{x}\mid\bm{z}^X)}\left[\bm{u}_t^X(\bm{x}\mid\bm{z}^X)\right],
  \end{equation}
  where $q_{t}(\bm{z}^X\mid\bm{y})=\frac{q(\bm{z}^X)p_{t}^Y(\bm{y}\mid\bm{z}^X)}{p_{t}^Y(\bm{y})}$.
  In particular, when $\bm{u}_t^X(\bm{x}\mid\bm{z}^X)=a_t'\bm{x}_0+b_t'\bm{x}_1$ is independent of $\bm{x}$, we have 
  \begin{equation}
    \bm{v}_{\bm{\theta}^*}^X(\bm{y},t) =\mathbb{E}_{q_{t}(\bm{z}^X\mid\bm{y})}\left[\bm{u}_t^X(\bm{x}\mid\bm{z}^X)\right] =\mathbb{E}\left[\bm{u}_t^X(\bm{x})\mid\bm{y}\right].
  \end{equation}
\end{proposition}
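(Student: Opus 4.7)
My plan is to minimize $\mathcal{L}_{\text{PCFM}}$ pointwise in $(\bm{y},t)$ by reducing it to a positive semidefinite quadratic in the network output and then applying \cref{lemma1} to solve for the unique minimizer. First, I would apply Bayes' rule $q(\bm{z}^X)\,p_t^Y(\bm{y}\mid\bm{z}^X)=p_t^Y(\bm{y})\,q_t(\bm{z}^X\mid\bm{y})$ and swap the order of integration so that the outer expectation is over $\bm{y}\sim p_t^Y$, while the inner loss at each $\bm{y}$ can be minimized independently over the free vector $\bm{v}\triangleq\bm{v_{\theta}}^X(\bm{y},t)$ (since the architecture couples different $\bm{y}$'s only through $\bm{\theta}$, which is rich enough to decouple pointwise).

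Next I would expand the squared norm. Because $\bm{P}_s$ is an orthogonal projection, $\bm{P}_s^*\bm{P}_s=\bm{P}_s$, so the inner loss at $\bm{y}$ is
\[
\bm{v}^*\,\mathbb{E}[\bm{P}_s\mid\bm{y}]\,\bm{v}-2\operatorname{Re}\bigl[\bm{v}^*\,\mathbb{E}[\bm{P}_s\bm{u}_t^X(\bm{x}\mid\bm{z}^X)\mid\bm{y}]\bigr]+\text{const}.
\]
Exploiting that the mask index $s$ is drawn independently of $(\bm{z}^X,\bm{x})$ in the generative model, both conditional expectations factor across $s$ and $(\bm{z}^X,\bm{x})$, giving the normal equation
\[
\mathbb{E}_s[\bm{P}_s]\,\bm{v}_{\bm{\theta}^*}^X(\bm{y},t)=\mathbb{E}_s[\bm{P}_s]\,\mathbb{E}_{q_t(\bm{z}^X\mid\bm{y}),\,p_t^X(\bm{x}\mid\bm{z}^X)}[\bm{u}_t^X(\bm{x}\mid\bm{z}^X)].
\]
\cref{lemma1} makes $\mathbb{E}_s[\bm{P}_s]$ invertible, so left-multiplying by its inverse delivers the first identity. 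For the ``in particular'' statement, the inner $\bm{x}$-integration collapses trivially because $\bm{u}_t^X(\bm{x}\mid\bm{z}^X)$ depends only on $\bm{z}^X$; to match the result with $\mathbb{E}[\bm{u}_t^X(\bm{x})\mid\bm{y}]$ I would substitute the mixture definition of the marginal field $\bm{u}_t^X$, expand $p_t^X(\bm{x}\mid\bm{y})$ via Bayes' rule, and use the identity $\int p_t^Y(\bm{y}\mid\bm{x})\,p_t^X(\bm{x}\mid\bm{z}^X)\,d\bm{x}=p_t^Y(\bm{y}\mid\bm{z}^X)$ to collapse the resulting double integral back to $q_t(\bm{z}^X\mid\bm{y})$.

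The main obstacle is justifying the factorization $\mathbb{E}[\bm{P}_s\bm{u}\mid\bm{y}]=\mathbb{E}_s[\bm{P}_s]\,\mathbb{E}[\bm{u}\mid\bm{y}]$, since $s$ and $(\bm{z}^X,\bm{x})$ become \emph{a posteriori} dependent once $\bm{y}$ is fixed (the support pattern of $\bm{y}$ leaks information about which mask was used). The cleanest workaround is to perform the $s$-integration \emph{before} conditioning on $\bm{y}$, treating $s$ as exogenous randomization of the loss rather than a latent variable inside the posterior; equivalently, one may view the $s$ appearing in $\bm{P}_s$ as an auxiliary copy with marginal $p(s)$ that is independent of the data path. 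Once this point is pinned down, the remaining steps are routine quadratic optimization together with standard Bayesian manipulation, and uniqueness of the minimizer is delivered entirely by \cref{lemma1}.
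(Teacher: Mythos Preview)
Your proposal is correct and follows essentially the same route as the paper: rewrite the joint via Bayes' rule so the outer expectation is over $(s,t,\bm{y})$, minimize pointwise in $(t,\bm{y})$, reduce to a quadratic in $\bm{v}$ with matrix $\mathbb{E}_s[\bm{P}_s]$, and invoke \cref{lemma1} for uniqueness. The only cosmetic differences are that the paper uses a bias--variance split (add/subtract $\widehat{\bm{u}}_t^X(\bm{y})$ and show the cross term vanishes) rather than writing out the normal equation, and for the ``in particular'' clause the paper takes a shorter path via the tower property along the Markov chain $\bm{z}^X\to\bm{x}\to\bm{y}$ instead of expanding the mixture definition of $\bm{u}_t^X$; regarding your flagged subtlety about $s$, the paper implicitly adopts exactly your workaround, keeping $\mathbb{E}_s$ as an outer exogenous average independent of the posterior over $\bm{z}^X$.
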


\begin{proof}
  Since $q(\bm{z}^X)p_t^X(\bm{x}\mid\bm{z}^X)p_{t}^Y(\bm{y}\mid\bm{z}^X)=p_{t}^Y(\bm{y})q_{t}^Y(\bm{z}^X\mid\bm{y})p_t^X(\bm{x}\mid\bm{z}^X)$ and by the law of total expectation, the PCFM objective can be written as 
  \begin{equation}
    \begin{aligned}
      \mathcal{L}_{\text{PCFM}}(\bm{\theta})
      &\triangleq \mathbb{E}_{s,t,q(\bm{z}^X),p_t^X(\bm{x}\mid\bm{z}^X),p_{t}^Y(\bm{y}\mid\bm{z}^X)}\norm{\bm{P}_s\!\left[\bm{v_{\theta}}^X(\bm{y},t) - \bm{u}_t^X(\bm{x}\mid\bm{z}^X)\right]}_2^2 \\
      &= \mathbb{E}_{s,t,p_{t}^Y(\bm{y})}\mathbb{E}_{q_{t}^Y(\bm{z}^X\mid\bm{y}),p_t^X(\bm{x}\mid\bm{z}^X)} \norm{\bm{P}_s\!\left[\bm{v_{\theta}}^X(\bm{y},t) - \bm{u}_t^X(\bm{x}\mid\bm{z}^X)\right]}_2^2.
    \end{aligned}
  \end{equation}
  To minimize the total expectation, we can minimize the inner conditional expectation for each value of $t$ and $\bm{y}$ independently.
  Let $\widehat{\bm{u}}_t^X(\bm{y})\triangleq\mathbb{E}_{q_{t}(\bm{z}^X\mid\bm{y}),p_t^X(\bm{x}\mid\bm{z}^X)}\left[\bm{u}_t^X(\bm{x}\mid\bm{z}^X)\right]$.
  For fixed $t$ and $\bm{y}$, the inner expectation can be transformed as 
  \begin{equation}
    \begin{aligned}
      \bm{I}_{t,\bm{y}}(\bm{\theta}) &\triangleq \mathbb{E}_{s,q_{t}^Y(\bm{z}^X\mid\bm{y}),p_t^X(\bm{x}\mid\bm{z}^X)} \norm{\bm{P}_s\!\left[\bm{v_{\theta}}^X(\bm{y},t) - \bm{u}_t^X(\bm{x}\mid\bm{z}^X)\right]}_2^2 \\
      &= \mathbb{E}_{s,q_{t}^Y(\bm{z}^X\mid\bm{y}),p_t^X(\bm{x}\mid\bm{z}^X)} \norm{\bm{P}_s\!\left[\bm{v_{\theta}}^X(\bm{y},t) - \widehat{\bm{u}}_t^X(\bm{y}) + \widehat{\bm{u}}_t^X(\bm{y}) - \bm{u}_t^X(\bm{x}\mid\bm{z}^X)\right]}_2^2 \\
      &= 
      \mathbb{E}_{s,q_{t}^Y(\bm{z}^X\mid\bm{y}),p_t^X(\bm{x}\mid\bm{z}^X)}\Big[
      \norm{\bm{P}_s\!\left[\bm{v_{\theta}}^X(\bm{y},t) - \widehat{\bm{u}}_t^X(\bm{y})\right]}_2^2 \\
      &\quad + 2\left(\bm{v_{\theta}}^X(\bm{y},t) - \widehat{\bm{u}}_t^X(\bm{y})\right)^*\bm{P}_s\left(\widehat{\bm{u}}_t^X(\bm{y}) - \bm{u}_t^X(\bm{x}\mid\bm{z}^X)\right)\Big]+\operatorname{const.},
    \end{aligned}
  \end{equation}
  where we note that
  \begin{equation}
    \begin{aligned}
      &\mathbb{E}_{q_{t}^Y(\bm{z}^X\mid\bm{y}),p_t^X(\bm{x}\mid\bm{z}^X)}\left(\bm{v_{\theta}}^X(\bm{y},t) - \widehat{\bm{u}}_t^X(\bm{y})\right)^*\bm{P}_s\left(\widehat{\bm{u}}_t^X(\bm{y}) - \bm{u}_t^X(\bm{x}\mid\bm{z}^X)\right) \\
      =\ & \left(\bm{v_{\theta}}^X(\bm{y},t) - \widehat{\bm{u}}_t^X(\bm{y})\right)^*\bm{P}_s\left[\widehat{\bm{u}}_t^X(\bm{y})-\mathbb{E}_{q_{t}^Y(\bm{z}^X\mid\bm{y}),p_t^X(\bm{x}\mid\bm{z}^X)}\bm{u}_t^X(\bm{x}\mid\bm{z}^X)\right] = 0.
    \end{aligned}
  \end{equation}
  Therefore, 
  \begin{equation}
    \begin{aligned}
      \bm{I}_{t,\bm{y}}(\bm{\theta}) &= \mathbb{E}_{s,q_{t}^Y(\bm{z}^X\mid\bm{y}),p_t^X(\bm{x}\mid\bm{z}^X)}\norm{\bm{P}_s\!\left[\bm{v_{\theta}}^X(\bm{y},t) - \widehat{\bm{u}}_t^X(\bm{y})\right]}_2^2 \\
      &= \bm{w}_{\bm{\theta}}^X(\bm{y},t)^*\mathbb{E}_s[\bm{P}_s]\bm{w}_{\bm{\theta}}^X(\bm{y},t)
    \end{aligned}
  \end{equation}
  where $\bm{w}_{\bm{\theta}}^X(\bm{y},t)\triangleq\bm{v_{\theta}}^X(\bm{y},t) - \widehat{\bm{u}}_t^X(\bm{y})$.
  By \cref{lemma1}, $\mathbb{E}_s[\bm{P}_s]$ is positive definite.
  Therefore, $\bm{I}_{t,\bm{y}}(\bm{\theta})$ is minimized when $\bm{w_{\theta}}^X(\bm{y},t)=\bm{0}$, i.e.
  \begin{equation}
    \bm{v_{\theta}}^X(\bm{y},t)=\widehat{\bm{u}}_t^X(\bm{y}).
  \end{equation}
  Furthermore, when $\bm{u}_t^X(\bm{x}\mid\bm{z}^X)=a_t'\bm{x}_0+b_t'\bm{x}_1$ is independent of $\bm{x}$, due to the Markov chain $\bm{z}^X\rightarrow\bm{x}\rightarrow\bm{y}$ and the law of iterated expectation, we have
  \begin{equation}
    \begin{aligned}
      \widehat{\bm{u}}_t^X(\bm{y}) &= \mathbb{E}\left[a_t'\bm{x}_0+b_t'\bm{x}_1\mid\bm{y}\right] \\
      &= \mathbb{E}\left[\mathbb{E}\left[a_t'\bm{x}_0+b_t'\bm{x}_1\mid\bm{x}\right]\mid\bm{y}\right] \\
      &= \mathbb{E}\left[\bm{u}_t^X(\bm{x})\mid\bm{y}\right].
    \end{aligned}
  \end{equation}
\end{proof}

\subsection{Proof of Proposition 2}\label[appendix]{app:proposition2}
\begin{proposition}[\textbf{Unsupervised transformation of PCFM}]
  Assuming deterministic conditional probability paths $p_t^X(\bm{x}\mid\bm{z}^X)=\delta_{a_t\bm{x}_0+b_t\bm{x}_1}(\bm{x})$ and $p_t^Y(\bm{y}\mid\bm{z}^Y)=\delta_{a_t\bm{y}_0+b_t\bm{y}_1}(\bm{y})$ with $\bm{y}_{0}=\bm{A}_s\bm{x}_{0}+\bm{e}_{0}$ and $\bm{y}_{1}=\bm{A}_s\bm{x}_{1}$, where $\bm{e}_{0}\sim\mathcal{CN}(\bm{0},\sigma_0^2\bm{I}_{Cd})$, then up to a constant the PCFM objective can be transformed to
  \begin{equation}
    \begin{aligned}
      \mathbb{E}_{s,t,q(\bm{z}^Y),p_{t}^Y(\bm{y}\mid\bm{z}^Y)}
    \Big[
      \norm{\bm{P}_s\!\left[\bm{v_{\theta}}^X(\bm{A}_s^*\bm{y},t)-\widehat{\bm{u}}_{s,t,\text{ML}}^X\right]}_2^2
      + {2a_t}{a_t'}\sigma_0^2\nabla_{\bm{A}_s^*\bm{y}}\cdot\bm{P}_s\bm{v_{\theta}}^X(\bm{A}_s^*\bm{y},t)
    \Big],
    \end{aligned}
  \end{equation}
  where $q(\bm{z}^Y)=q(\bm{y}_0)q(\bm{y}_1)=q(\bm{y}_0)\mathbb{E}_{q(\bm{x}_1)} \left[p(\bm{y}_1\mid\bm{x}_1)\right]$ is sampled by the MRI forward model and Monte Carlo estimation, $\bm{P}_s=\bm{A}_s^+\bm{A}_s$ is the range-space projection, and 
  \begin{equation}
    \widehat{\bm{u}}_{s,t,\text{ML}}^X\triangleq (\bm{A}_s^*\bm{C}_t^{-1}\bm{A}_s)^+\bm{A}_s^*\bm{C}_t^{-1}\bm{u}_{t}^Y(\bm{y}\mid\bm{z}^Y)
  \end{equation}
  with $\bm{C}_t=(a_t'\sigma_0)^2\bm{I}_d$ is the maximum likelihood solution of the forward model in 
  \begin{equation}
    \bm{u}_{t}^Y(\bm{y}\mid\bm{z}^Y) =\bm{A}_s\bm{u}_{t}^X(\bm{x}\mid\bm{z}^X) + a_t'\bm{e}_{0}.
  \end{equation}
  Note that range-space projection can be approximated by the conjugate gradient method.
\end{proposition}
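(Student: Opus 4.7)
The plan is to apply a GSURE/Stein-type unbiased risk transformation to the PCFM objective, reinterpreting it as a range-space MSE for the induced linear model \cref{eq:forward_vec}, $\bm{u}_t^Y(\bm{y}\mid\bm{z}^Y)=\bm{A}_s\bm{u}_t^X(\bm{x}\mid\bm{z}^X)+a_t'\bm{e}_0$. First I would note that, because the conditional paths are deterministic, integrating out $\bm{z}^X$ while holding the measurement-side path fixed collapses the outer expectation from $q(\bm{z}^X)\,p_t^Y(\bm{y}\mid\bm{z}^X)$ onto $q(\bm{z}^Y)\,p_t^Y(\bm{y}\mid\bm{z}^Y)$, since $q(\bm{y}_0)$ is exactly the pushforward of $q(\bm{x}_0)$ through the noisy forward model and $q(\bm{y}_1)=\mathbb{E}_{q(\bm{x}_1)}[p(\bm{y}_1\mid\bm{x}_1)]$ by construction. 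This step removes the explicit $q(\bm{z}^X)$ dependence at the outer level and reduces the problem to eliminating the unobserved $\bm{u}_t^X(\bm{x}\mid\bm{z}^X)$ inside the squared norm.

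Writing $\bm{v}\triangleq\bm{v_{\theta}}^X(\bm{A}_s^*\bm{y},t)$, $\bm{u}\triangleq\bm{u}_t^X(\bm{x}\mid\bm{z}^X)$, and $\widehat{\bm{u}}\triangleq\widehat{\bm{u}}_{s,t,\text{ML}}^X$, I would apply the orthogonal decomposition $\norm{\bm{P}_s(\bm{v}-\bm{u})}_2^2=\norm{\bm{P}_s(\bm{v}-\widehat{\bm{u}})}_2^2+2\,\Re(\bm{v}-\widehat{\bm{u}})^*(\widehat{\bm{u}}-\bm{P}_s\bm{u})+\norm{\widehat{\bm{u}}-\bm{P}_s\bm{u}}_2^2$, which is valid because $\bm{P}_s\widehat{\bm{u}}=\widehat{\bm{u}}$ (the ML solution lies in the range of $\bm{A}_s^*$). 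Substituting $\bm{C}_t=(a_t'\sigma_0)^2\bm{I}_d$ into the ML formula, the scalar factors cancel and $\widehat{\bm{u}}$ simplifies to $\bm{A}_s^+\bm{u}_t^Y(\bm{y}\mid\bm{z}^Y)=\bm{P}_s\bm{u}+a_t'\bm{A}_s^+\bm{e}_0$, so the residual $\bm{\eta}\triangleq\widehat{\bm{u}}-\bm{P}_s\bm{u}=a_t'\bm{A}_s^+\bm{e}_0$ is zero-mean Gaussian conditional on $\bm{z}^X$. Consequently, the third term integrates to a $\bm{\theta}$-independent constant, and in the cross term the piece $-2\,\Re\mathbb{E}[\widehat{\bm{u}}^*\bm{\eta}]$ also reduces to a constant by zero-mean of $\bm{\eta}$; only $2\,\Re\mathbb{E}[\bm{v}^*\bm{\eta}]$ needs a genuine bias correction.

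To convert $2\,\Re\mathbb{E}[\bm{v}^*\bm{\eta}]$ into the divergence correction in \cref{eq:unsupervised}, I would invoke the complex Stein identity $\mathbb{E}[g(\bm{e}_0)^*\bm{e}_0]=\sigma_0^2\,\mathbb{E}[\nabla_{\bm{e}_0}\cdot g(\bm{e}_0)]$ with $g(\bm{e}_0)=a_t'(\bm{A}_s^+)^*\bm{v}(\bm{A}_s^*\bm{A}_s\bm{x}+a_t\bm{A}_s^*\bm{e}_0)$. The chain rule gives $\nabla_{\bm{e}_0}\bm{v}=a_t(\nabla_{\bm{A}_s^*\bm{y}}\bm{v})\bm{A}_s^*$, and combining this with the cyclic trace identity and the algebraic fact $\bm{A}_s^*(\bm{A}_s^+)^*=(\bm{A}_s^+\bm{A}_s)^*=\bm{P}_s$ collapses the resulting matrix product to the projection, producing exactly $2a_ta_t'\sigma_0^2\,\mathbb{E}[\nabla_{\bm{A}_s^*\bm{y}}\cdot\bm{P}_s\bm{v}]$. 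Reassembling the three pieces and absorbing $\bm{\theta}$-independent terms yields \cref{eq:unsupervised}. The main obstacle I anticipate is the Wirtinger/pseudoinverse bookkeeping in this last step: keeping the chain-rule factor $\bm{A}_s^*$ on the correct side of the Jacobian, verifying that $\bm{A}_s^*(\bm{A}_s^+)^*=\bm{P}_s$ is precisely what turns the noise weight into the projection on the divergence, and confirming the collapse $\widehat{\bm{u}}=\bm{A}_s^+\bm{u}_t^Y(\bm{y}\mid\bm{z}^Y)$ under the specific diagonal $\bm{C}_t$; a subsidiary care point is that $\bm{v}$ depends on $\bm{e}_0$ only through $\bm{A}_s^*\bm{y}$, so the Stein step can be executed with $\bm{z}^X$ held fixed before the outer expectation is pushed onto $q(\bm{z}^Y)$.
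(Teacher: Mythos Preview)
Your approach is correct and reaches the stated result, but it is a genuinely different route from the paper's. The paper follows Eldar's GSURE machinery more literally: it reparametrizes the network input by the sufficient statistic $\bm{\mu}_t^X=\bm{A}_s^*\bm{C}_t^{-1}\bm{u}_t^Y$, writes the density of $\bm{\mu}_t^X$ in exponential-family form $q(\bm{\mu})\exp({\bm{u}_t^X}^*\bm{\mu}-g(\bm{u}_t^X))$, uses the natural-parameter trick $\bm{u}_t^X h=\nabla_{\bm{\mu}}h$ to remove the unknown $\bm{u}_t^X$ from the cross term, integrates by parts in $\bm{\mu}$, and only then changes variables back to $\bm{A}_s^*\bm{y}$ and pushes the outer expectation onto $q(\bm{z}^Y)$. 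Your argument instead works directly in noise space: you add and subtract $\widehat{\bm{u}}_{\text{ML}}$, exploit the simplification $\widehat{\bm{u}}_{\text{ML}}=\bm{A}_s^+\bm{u}_t^Y$ under the scalar $\bm{C}_t$ to recognize the residual as $a_t'\bm{A}_s^+\bm{e}_0$, and apply the complex Stein identity to $\bm{e}_0$, with the chain rule and the identity $\bm{A}_s^*(\bm{A}_s^+)^*=\bm{P}_s$ producing the projected divergence. Your route is more elementary and makes the origin of the $2a_ta_t'\sigma_0^2$ factor transparent (one $a_t'$ from $\bm{\eta}$, one $a_t$ from $\partial(\bm{A}_s^*\bm{y})/\partial\bm{e}_0$); the paper's route is heavier but generalizes verbatim to non-isotropic $\bm{C}_t$ and other exponential-family noise models, where your collapse $\widehat{\bm{u}}_{\text{ML}}=\bm{A}_s^+\bm{u}_t^Y$ would no longer hold.

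One small correction to your write-up: the term $-2\,\Re\mathbb{E}[\widehat{\bm{u}}^*\bm{\eta}]$ is not zero ``by zero-mean of $\bm{\eta}$'', because $\widehat{\bm{u}}=\bm{P}_s\bm{u}+\bm{\eta}$ is itself correlated with $\bm{\eta}$; its expectation equals $-2(a_t')^2\sigma_0^2\,\mathrm{tr}\,\bm{A}_s^+(\bm{A}_s^+)^*$, which is nonzero but $\bm{\theta}$-independent. The conclusion that it can be absorbed into the constant is unaffected.
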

\begin{proof}
  The proof is inspired from \citep{journal/tsp/eldar2008}.
  We omit the subscript $s$ without loss of generality.
  Noting the deterministic mapping between the $\mathcal{Y}$-space conditional path and the conditional vector field
  \begin{equation}\label{eq:mapping}
    \bm{y} = \frac{a_t}{a_t'}\bm{u}_{t}^Y(\bm{y}\mid\bm{z}^Y) - b_t'\left(\frac{a_t}{a_t'}-\frac{b_t}{b_t'}\right)\bm{y}_{1},
  \end{equation}
  we can write $\bm{v_{\theta}}^X(\bm{y},t)=\bm{v_{\theta}}^X\!\left(\bm{u}_{t}^Y(\bm{y}\mid\bm{z}^Y),t\right)$ and the objective as
  \begin{equation}
    \mathcal{L}_{\text{PCFM}}(\bm{\theta}) = \mathbb{E}_{t,q(\bm{z}^X),p_t^X(\bm{x}\mid\bm{z}^X),p_{t}^Y(\bm{y}\mid\bm{z}^X)}\norm{\bm{P}\!\left[\bm{v_{\theta}}^X\!\left(\bm{u}_{t}^Y(\bm{y}\mid\bm{z}^Y),t\right)-\bm{u}_{t}^X(\bm{x}\mid\bm{z}^X)\right]}_2^2.
  \end{equation}
  By the linear forward model over the dual-space conditional vector fields
  \begin{equation}
    \bm{u}_{t}^Y(\bm{y}\mid\bm{z}^Y) = \bm{A}\bm{u}_{t}^X(\bm{x}\mid\bm{z}^X) + a_t'\bm{e}_{0},
  \end{equation}
  we note that the sufficient statistic $\bm{\mu}_{t}^X\triangleq\bm{A}^*\bm{C}_t^{-1}\bm{u}_{t}^Y$ follows a Gaussian distribution $\mathcal{CN}(\bm{A}^*\bm{C}_t^{-1}\bm{A}\bm{u}_t^X,\bm{A}^*\bm{C}_t^{-1}\bm{A})$ with probability density function (pdf)
  \begin{equation}
    p(\bm{\mu}_{t}^X) = q(\bm{\mu}_{t}^X)\exp\left({\bm{u}_t^X}^*\bm{\mu}_{t}^X - g(\bm{u}_t^X)\right),
  \end{equation}
  where
  \begin{equation}
    \begin{aligned}
      q(\bm{\mu}_{t}^X) &= K\cdot \exp\left( -\frac{1}{2}{\bm{\mu}_{t}^X}^*\left(\bm{A}^*\bm{C}_t^{-1}\bm{A}\right)^+ \bm{\mu}_{t}^X\right), \\
      g(\bm{u}_t^X) &= \frac{1}{2}{\bm{u}_t^X}^*\bm{A}^*\bm{C}_t^{-1}\bm{A}\bm{u}_t^X.
    \end{aligned}
  \end{equation}
  Assuming the network's input to be $\bm{\mu}_{t}^X$, we can write
  \begin{equation}
    \begin{aligned}
      \mathcal{L}_{\text{PCFM}}(\bm{\theta}) = \mathbb{E}_{t,q(\bm{z}^X),p_t^X(\bm{x}\mid\bm{z}^X),p_{t}^Y(\bm{y}\mid\bm{z}^X)} 
    \left[
      {\bm{u}_t^X}^*\bm{P}\bm{u}_t^X + {\bm{v_{\theta}}^X(\bm{\mu}_{t}^X,t)}^*\bm{P}\bm{v_{\theta}}^X(\bm{\mu}_{t}^X,t) - 2{\bm{u}_t^X}^*\bm{P} \bm{v_{\theta}}^X(\bm{\mu}_{t}^X,t)
    \right],
    \end{aligned}
  \end{equation}
  and
  \begin{equation}
    \begin{aligned}
      \mathbb{E}_{p_{t}^Y(\bm{y}\mid\bm{z}^X)}\left[{\bm{u}_t^X}^*\bm{P} \bm{v_{\theta}}^X(\bm{\mu}_{t}^X,t)\right] &= \mathbb{E}_{p(\bm{\mu}_{t}^X)}\left[{\bm{u}_t^X}^*\bm{P} \bm{v_{\theta}}^X(\bm{\mu}_{t}^X,t)\right] \\
      &= \int {\bm{v_{\theta}}^X(\bm{\mu}_{t}^X,t)}^*\bm{P}\bm{u}_t^X \cdot q(\bm{\mu}_{t}^X)\exp\left({\bm{u}_t^X}^*\bm{\mu}_{t}^X - g(\bm{u}_t^X)\right) \dd\bm{\mu}_{t}^X.
    \end{aligned}
  \end{equation}
  Denote $h(\bm{\mu}_{t}^X)\triangleq \exp\left({\bm{u}_t^X}^*\bm{\mu}_{t}^X - g(\bm{u}_t^X)\right)$.
  Substituting $\bm{u}_t^Xh(\bm{\mu}_{t}^X)=\nabla_{\bm{\mu}_{t}^X}h(\bm{\mu}_{t}^X)$ and integrating by parts, we have
  \begin{equation}
    \begin{aligned}
      \mathbb{E}_{p(\bm{\mu}_{t}^X)}\left[{\bm{u}_t^X}^*\bm{P} \bm{v_{\theta}}^X(\bm{\mu}_{t}^X,t)\right] &= \int {\bm{v_{\theta}}^X(\bm{\mu}_{t}^X,t)}^*\bm{P}\bm{u}_t^X \cdot q(\bm{\mu}_{t}^X)\nabla_{\bm{\mu}_{t}^X}h(\bm{\mu}_{t}^X) \dd\bm{\mu}_{t}^X \\
      &= - \int h(\bm{\mu}_{t}^X) \nabla_{\bm{\mu}_{t}^X}\cdot
      \left[
        q(\bm{\mu}_{t}^X)\bm{P}\bm{v_{\theta}}^X(\bm{\mu}_{t}^X,t)
      \right] \dd\bm{\mu}_{t}^X,
    \end{aligned}
  \end{equation}
  where 
  \begin{equation}
    \nabla_{\bm{\mu}_{t}^X}\cdot\left[q(\bm{\mu}_{t}^X)\bm{P}\bm{v_{\theta}}^X(\bm{\mu}_{t}^X,t)\right] = 
    q(\bm{\mu}_{t}^X)\left[\nabla_{\bm{\mu}_{t}^X}\cdot \bm{P}\bm{v_{\theta}}^X(\bm{\mu}_{t}^X,t) + {\bm{v_{\theta}}^X(\bm{\mu}_{t}^X,t)}^*\bm{P}\nabla_{\bm{\mu}_{t}^X}\ln q(\bm{\mu}_{t}^X)\right]
  \end{equation}
  and $\ln q(\bm{\mu}_{t}^X)= - \left(\bm{A}^*\bm{C}_t^{-1}\bm{A}\right)^+\bm{\mu}_{t}^X=-\widehat{\bm{u}}_{t,\text{ML}}^X$.
  Therefore, 
  \begin{equation}
    \begin{aligned}
      \mathbb{E}_{p(\bm{\mu}_{t}^X)}\left[{\bm{u}_t^X}^*\bm{P} \bm{v_{\theta}}^X(\bm{\mu}_{t}^X,t)\right] &=
      \mathbb{E}_{p(\bm{\mu}_{t}^X)}\left[-\nabla_{\bm{\mu}_{t}^X}\cdot \bm{P}\bm{v_{\theta}}^X(\bm{\mu}_{t}^X,t)+{\bm{v_{\theta}}^X(\bm{\mu}_{t}^X,t)}^*\bm{P}\widehat{\bm{u}}_{t,\text{ML}}^X \right]
    \end{aligned}
  \end{equation}
  where
  \begin{equation}
    \begin{aligned}
      \mathcal{L}_{\text{PCFM}}(\bm{\theta}) 
      &= \mathbb{E}\left[
        {\bm{u}_t^X}^*\bm{P}\bm{u}_t^X + {\bm{v_{\theta}}^X(\bm{\mu}_{t}^X,t)}^*\bm{P}\bm{v_{\theta}}^X(\bm{\mu}_{t}^X,t) + 2\nabla_{\bm{\mu}_{t}^X}\cdot \bm{P}\bm{v_{\theta}}^X(\bm{\mu}_{t}^X,t) - 2 {\bm{v_{\theta}}^X(\bm{\mu}_{t}^X,t)}^*\bm{P}\widehat{\bm{u}}_{t,\text{ML}}^X\right] \\
        &= \mathbb{E}\left[
          \norm{\bm{P}\!\left[ \bm{v_{\theta}}^X(\bm{\mu}_{t}^X,t) - \widehat{\bm{u}}_{t,\text{ML}}^X \right]}_2^2 
          + 2 \nabla_{\bm{\mu}_{t}^X}\cdot \bm{P}\bm{v_{\theta}}^X(\bm{\mu}_{t}^X,t)
          + \norm{\bm{P}\bm{u}_t^X}_2^2 - \norm{\bm{P}\widehat{\bm{u}}_{t,\text{ML}}^X}_2^2
        \right] \\
        &= \mathbb{E}_{t,q(\bm{z}^X),p_t^X(\bm{x}\mid\bm{z}^X),p_{t}^Y(\bm{y}\mid\bm{z}^X)}\left[
          \norm{\bm{P}\!\left[ \bm{v_{\theta}}^X(\bm{\mu}_{t}^X,t) - \widehat{\bm{u}}_{t,\text{ML}}^X \right]}_2^2 
          + 2 \nabla_{\bm{\mu}_{t}^X}\cdot \bm{P}\bm{v_{\theta}}^X(\bm{\mu}_{t}^X,t)
        \right] +\operatorname{const.}
    \end{aligned}
  \end{equation}
  Then, using \cref{eq:mapping} and writing back $\bm{v_{\theta}}^X(\bm{\mu}_{t}^X,t)=\bm{v_{\theta}}^X(\bm{A}^*\bm{y},t)$, we obtain by change of variables that $\mathcal{L}_{\text{PCFM}}(\bm{\theta})$ can be transformed to the following expression up to a constant
  \begin{equation}
    \mathbb{E}_{t,q(\bm{z}^X),p_{t}^Y(\bm{y}\mid\bm{z}^X)}
    \left[
      \norm{\bm{P}\!\left[\bm{v_{\theta}}^X(\bm{A}^*\bm{y},t)-\widehat{\bm{u}}_{t,\text{ML}}^X\right]}_2^2
      + {2a_t}{a_t'}\sigma_0^2\nabla_{\bm{A}^*\bm{y}}\cdot\bm{P}\bm{v_{\theta}}^X(\bm{A}^*\bm{y},t)
    \right].
  \end{equation}
  Finally, it concludes the proof by noting that 
  \begin{equation}
    \mathbb{E}_{q(\bm{z}^X)}\left[ p_t^Y(\bm{y}\mid\bm{z}^X) \right] = p_t^Y(\bm{y}) = \mathbb{E}_{q(\bm{z}^Y)}\left[ p_t^Y(\bm{y}\mid\bm{z}^Y) \right],
  \end{equation}
  where $q(\bm{z}^Y)=q(\bm{y}_0)\mathbb{E}_{q(\bm{x}_1)} \left[p(\bm{y}_1\mid\bm{x}_1)\right]$ is sampled by the MRI and Monte Carlo estimation.
\end{proof}

\subsection{Proof of Lemma 2}\label[appendix]{app:lemma2}
\begin{lemma}
  The $\mathcal{Y}$-space marginal vector field that generates the probability path $p_t^Y$ takes the form
  \begin{equation}
    \bm{u}_{t}^Y(\bm{y}) = \bm{Av}_{\bm{\theta}^*}^X(\bm{y},t) - a_ta_t'\sigma_0^2\nabla_{\bm{y}}\log p_t^Y(\bm{y}).
  \end{equation}
\end{lemma}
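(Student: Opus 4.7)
The plan is to marginalize the $\mathcal{Y}$-space conditional vector field from \cref{eq:Y_conditional_vec} against the posterior $q_t(\bm{z}^X\mid\bm{y})\propto q(\bm{z}^X)\,p_t^Y(\bm{y}\mid\bm{z}^X)$, and then invoke (i) \cref{proposition1} to identify the conditional expectation of $\bm{u}_t^X(\bm{x}\mid\bm{z}^X)$ with $\bm{v}_{\bm{\theta}^*}^X(\bm{y},t)$ and (ii) a standard score-marginalization identity to reduce the conditional score to the marginal score.

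First I would establish that the marginal $\mathcal{Y}$-space vector field admits the mixture representation
\begin{equation}
\bm{u}_t^Y(\bm{y}) = \mathbb{E}_{q_t(\bm{z}^X\mid\bm{y})}\!\left[\bm{u}_t^Y(\bm{y}\mid\bm{z}^X)\right],
\end{equation}
by plugging this candidate into the continuity equation for $p_t^Y = \mathbb{E}_{q(\bm{z}^X)}[p_t^Y(\cdot\mid\bm{z}^X)]$ in direct analogy with the $\mathcal{X}$-space marginalization recapped in \cref{sec:background}. Substituting the conditional formula from \cref{eq:Y_conditional_vec} and using linearity of expectation splits $\bm{u}_t^Y(\bm{y})$ into a drift term $\bm{A}\,\mathbb{E}_{q_t(\bm{z}^X\mid\bm{y})}[\bm{u}_t^X(\bm{x}\mid\bm{z}^X)]$ and a score term $-a_t a_t'\sigma_0^2\,\mathbb{E}_{q_t(\bm{z}^X\mid\bm{y})}[\nabla_{\bm{y}}\log p_t^Y(\bm{y}\mid\bm{z}^X)]$.

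Next, the drift term collapses to $\bm{A}\bm{v}_{\bm{\theta}^*}^X(\bm{y},t)$ immediately by \cref{proposition1}, since under the linear interpolation path $\bm{u}_t^X(\bm{x}\mid\bm{z}^X) = a_t'\bm{x}_0 + b_t'\bm{x}_1$ is $\bm{x}$-independent. For the score term, I would apply the identity
\begin{equation}
\mathbb{E}_{q_t(\bm{z}^X\mid\bm{y})}\!\left[\nabla_{\bm{y}}\log p_t^Y(\bm{y}\mid\bm{z}^X)\right] = \nabla_{\bm{y}}\log p_t^Y(\bm{y}),
\end{equation}
obtained by writing $\nabla_{\bm{y}}\log p_t^Y(\bm{y}) = \nabla_{\bm{y}} p_t^Y(\bm{y})/p_t^Y(\bm{y})$, exchanging $\nabla_{\bm{y}}$ with the $\bm{z}^X$-integral that defines $p_t^Y(\bm{y})$, and recognizing the Bayes weight $q(\bm{z}^X)p_t^Y(\bm{y}\mid\bm{z}^X)/p_t^Y(\bm{y}) = q_t(\bm{z}^X\mid\bm{y})$. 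Combining the two identifications delivers the claimed formula.

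The main obstacle is a regularity issue rather than an algebraic one: justifying the interchange of $\nabla_{\bm{y}}$ with the $\bm{z}^X$-integral in the score identity, and the analogous exchange needed to verify the continuity equation for the mixture. Because $p_t^Y(\bm{y}\mid\bm{z}^X)$ is the Gaussian density of \cref{eq:Y_path} with strictly positive variance $a_t^2\sigma_0^2$ for $t<1$, its $\bm{y}$-gradient is dominated locally uniformly in $\bm{z}^X$, so dominated convergence legitimizes both exchanges. Aside from this standard measure-theoretic check, the lemma is a two-line application of results already proved earlier in the paper.
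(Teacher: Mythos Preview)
Your proposal is correct and follows essentially the same route as the paper: both take the posterior expectation of the conditional vector field \cref{eq:Y_conditional_vec}, invoke \cref{proposition1} for the drift term, and derive the score-marginalization identity by expanding $\nabla_{\bm{y}}\log p_t^Y(\bm{y})$ and recognizing the Bayes weight. Your additional remarks on dominated convergence and on verifying the mixture representation via the continuity equation add rigor the paper leaves implicit, but the argument is otherwise identical.
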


\begin{proof}
  The $\mathcal{Y}$-space conditional vector field is given by
  \begin{equation}
    \bm{u}_t^Y(\bm{y}\mid\bm{z}^X) = \bm{Au}_t^X(\bm{x}\mid\bm{z}^X)-a_ta_t'\sigma_0^2\nabla_{\bm{y}}\log p_{t}^Y(\bm{y}\mid\bm{z}^X).
  \end{equation}
  Therefore, by \cref{proposition1}, the $\mathcal{Y}$-space marginal vector field takes the form
  \begin{equation}
    \begin{aligned}
      \bm{u}_t^Y(\bm{y}) &= \mathbb{E}_{q_t(\bm{z}^X\mid\bm{y})}\left[ \bm{u}_t^Y(\bm{y}\mid\bm{z}^X) \right] \\
      &= \bm{A}\mathbb{E}_{q_t(\bm{z}^X\mid\bm{y})}\left[\bm{u}_t^X(\bm{x}\mid\bm{z}^X)\right] - a_ta_t'\sigma_0^2\mathbb{E}_{q_t(\bm{z}^X\mid\bm{y})}\left[\nabla_{\bm{y}}\log p_{t}^Y(\bm{y}\mid\bm{z}^X)\right] \\
      &= \bm{A}\bm{v}_{\bm{\theta}^*}^X(\bm{y},t) - a_ta_t'\sigma_0^2\nabla_{\bm{y}}\log p_t^Y(\bm{y}),
    \end{aligned}
  \end{equation}
  where we have used the fact that
  \begin{equation}
    \begin{aligned}
      \mathbb{E}_{q_t(\bm{z}^X\mid\bm{y})}\left[\nabla_{\bm{y}}\log p_{t}^Y(\bm{y}\mid\bm{z}^X)\right] &= \int q_t(\bm{z}^X\mid\bm{y})\frac{1}{p_{t}^Y(\bm{y}\mid\bm{z}^X)} \nabla_{\bm{y}} p_{t}^Y(\bm{y}\mid\bm{z}^X)\dd\bm{z}^X \\
      &= \frac{1}{p_t^Y(\bm{y})}\int q(\bm{z}^X)\nabla_{\bm{y}} p_{t}^Y(\bm{y}\mid\bm{z}^X)\dd\bm{z}^X \\
      &= \frac{1}{p_t^Y(\bm{y})}\nabla_{\bm{y}} \int q(\bm{z}^X) p_{t}^Y(\bm{y}\mid\bm{z}^X)\dd\bm{z}^X \\
      &= \frac{1}{p_t^Y(\bm{y})}\nabla_{\bm{y}} p_t^Y(\bm{y}) \\
      &= \nabla_{\bm{y}}\log p_t^Y(\bm{y}).
    \end{aligned}
  \end{equation}
\end{proof}

\subsection{Proof of Lemma 3}\label[appendix]{app:lemma3}
\begin{lemma}
  Note that $p_1^Y(\bm{y})= \int p_1(\bm{y}\mid\bm{x})p_1^X(\bm{x})\dd\bm{x}= \mathcal{CN}(\bm{y}\mid\bm{0},2\bm{AA}^*)$.
  Then, 
  \begin{equation}
    \bm{u}_t^Y(\bm{y}) = \frac{a_t'}{a_t}\bm{y}-b_t\left(b_t'-\frac{a_t'}{a_t}b_t\right)(2\bm{AA}^*)\nabla_{\bm{y}}\log p_t^Y(\bm{y}).
  \end{equation}
\end{lemma}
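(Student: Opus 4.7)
The plan is to express both $\bm{u}_t^Y(\bm{y})$ and $\nabla_{\bm{y}}\log p_t^Y(\bm{y})$ in terms of a single common quantity, the posterior mean $\mathbb{E}[\bm{y}_1\mid\bm{y}_t=\bm{y}]$, and then eliminate it to obtain the claimed algebraic identity. The key structural observation is that in the $\mathcal{Y}$-space the latent pair decomposes as $\bm{z}^Y=(\bm{y}_0,\bm{y}_1)$ with $\bm{y}_0=\bm{Ax}_0+\bm{e}$ independent of $\bm{y}_1=\bm{Ax}_1$; since $\bm{x}_1\sim\mathcal{CN}(\bm{0},2\bm{I}_D)$, the noise-like component $\bm{y}_1$ is Gaussian with $\bm{y}_1\sim\mathcal{CN}(\bm{0},2\bm{AA}^*)$, which matches the form of $p_1^Y$ recalled in the lemma statement and also provides the Gaussian structure that will be exploited by Tweedie's formula.

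First, I would write the marginal velocity as a posterior mean. Under the deterministic $\mathcal{Y}$-space coupling $\bm{z}^Y=(\bm{y}_0,\bm{y}_1)$, standard CFM theory yields $\bm{u}_t^Y(\bm{y})=\mathbb{E}[a_t'\bm{y}_0+b_t'\bm{y}_1\mid\bm{y}_t=\bm{y}]$, which is consistent with \cref{lemma2} once the Tweedie correction for the measurement noise $\bm{e}$ inside $\bm{v}_{\bm{\theta}^*}^X$ is absorbed. Applying the deterministic path constraint $a_t\bm{y}_0=\bm{y}-b_t\bm{y}_1$ inside the expectation gives
\begin{equation}
\bm{u}_t^Y(\bm{y})=\frac{a_t'}{a_t}\bm{y}+\left(b_t'-\frac{a_t'}{a_t}b_t\right)\mathbb{E}[\bm{y}_1\mid\bm{y}_t=\bm{y}].
\end{equation}
Next, I would apply Tweedie's formula to the channel $\bm{y}_t=a_t\bm{y}_0+b_t\bm{y}_1$, treating $b_t\bm{y}_1$ as additive Gaussian noise of covariance $2b_t^2\bm{AA}^*$ added to the non-Gaussian signal $a_t\bm{y}_0$. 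Differentiating $p_t^Y(\bm{y})=\int q(\bm{y}_0)\,\mathcal{CN}(\bm{y}-a_t\bm{y}_0;\bm{0},2b_t^2\bm{AA}^*)\,d\bm{y}_0$ under the integral sign and using the Gaussian score identity produces
\begin{equation}
\mathbb{E}[\bm{y}_1\mid\bm{y}_t=\bm{y}]=-2b_t\bm{AA}^*\nabla_{\bm{y}}\log p_t^Y(\bm{y}),
\end{equation}
and substitution into the previous display yields the claimed formula directly.

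The main obstacle is justifying the Tweedie step when $\bm{AA}^*$ is rank-deficient (the regime $Cd>D$ highlighted after \cref{proposition3}), since then the auxiliary Gaussian is degenerate and the standard manipulation passes through $(2b_t^2\bm{AA}^*)^{-1}$. Because the final identity contains only $\bm{AA}^*$ itself and not its inverse, this can be handled either by regularizing $\bm{AA}^*\to\bm{AA}^*+\varepsilon\bm{I}$, deriving the identity, and then sending $\varepsilon\to 0$, or by observing that $p_t^Y$ is globally smooth on $\mathds{C}^{Cd}$ thanks to the full-rank measurement noise $\bm{e}$ in $\bm{y}_0$, so that applying Stein's identity to the jointly smooth density of $(\bm{y}_t,\bm{x}_1)$ produces the same conclusion without invoking any singular inverse.
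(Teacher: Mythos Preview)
Your proposal is correct and follows essentially the same route as the paper: both express $\bm{u}_t^Y(\bm{y})$ and $\nabla_{\bm{y}}\log p_t^Y(\bm{y})$ in terms of the posterior mean $\mathbb{E}[\bm{y}_1\mid\bm{y}]$ (via the path constraint and differentiation under the integral, respectively) and then eliminate it. The only cosmetic difference is that the paper writes the score as $-\frac{(2\bm{AA}^*)^{+}}{b_t}\mathbb{E}[\bm{y}_1\mid\bm{y}]$ using the pseudoinverse from the outset---which sidesteps the rank-deficiency issue you flag, since $\mathbb{E}[\bm{y}_1\mid\bm{y}]\in\mathcal{R}(\bm{A})$ makes the subsequent multiplication by $2\bm{AA}^*$ exactly invert the pseudoinverse---so your regularization or smoothness workaround is unnecessary but not wrong.
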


\begin{proof}
  Taking $\bm{y}_0$ as the conditioning variable, we have 
  \begin{equation}\label{eq:lemma2_1}
    \begin{aligned}
      \bm{u}_t^Y(\bm{y}) &= \mathbb{E}_{q_t(\bm{y}_0\mid\bm{y})}\left[ a_t'\bm{y}_0 + b_t'\bm{y}_1 \right] \\
      &= \mathbb{E}_{q_t(\bm{y}_0\mid\bm{y})} \left[ a_t'\frac{\bm{y}-b_t\bm{y}_1}{a_t} + b_t'\bm{y}_1 \right] \\ 
      &= \frac{a_t'}{a_t}\bm{y} + \left(b_t'-\frac{a_t'}{a_t}b_t\right)\mathbb{E}_{q_t(\bm{y}_0\mid\bm{y})}[\bm{y}_1],
    \end{aligned}
  \end{equation}
  where $\bm{y}_1=\frac{\bm{y}-a_t\bm{y}_0}{b_t}$.
  
  On the other hand, 
  noting that $p_t^Y(\bm{y}\mid\bm{y}_0)=\mathcal{CN}\left(\bm{y}\mid a_t\bm{y}_0, b_t^2(2\bm{AA}^*)\right)$,
  the score function can be written as 
  \begin{equation}\label{eq:lemma2_2}
    \begin{aligned}
      \nabla_{\bm{y}}\log p_t^Y(\bm{y}) &= \frac{1}{p_t^Y(\bm{y})}\nabla_{\bm{y}}p_t^Y(\bm{y}) \\
      &= \frac{1}{p_t^Y(\bm{y})}\nabla_{\bm{y}}\int p_t^Y(\bm{y}\mid\bm{y}_0)q(\bm{y}_0)\dd\bm{y}_0 \\
      &= \frac{1}{p_t^Y(\bm{y})}\int p_t^Y(\bm{y}\mid\bm{y}_0)q(\bm{y}_0)\nabla_{\bm{y}}\log p_t^Y(\bm{y}\mid\bm{y}_0)\dd\bm{y}_0 \\
      &= -\frac{1}{p_t^Y(\bm{y})}\int p_t^Y(\bm{y}\mid\bm{y}_0)q(\bm{y}_0) \frac{(2\bm{AA}^*)^{+}(\bm{y}-a_t\bm{y}_0)}{b_t^2}\dd\bm{y}_0 \\
      &= - \int q_t(\bm{y}_0\mid\bm{y})\frac{(2\bm{AA}^*)^{+}b_t\bm{y}_1}{b_t^2}\dd\bm{y}_0 \\
      &= - \frac{(2\bm{AA}^*)^{+}}{b_t} \mathbb{E}_{q_t(\bm{y}_0\mid\bm{y})}[\bm{y}_1].
    \end{aligned}
  \end{equation}
  Combining \cref{eq:lemma2_1} and \cref{eq:lemma2_2} by canceling out $\mathbb{E}_{q_t(\bm{y}_0\mid\bm{y})}[\bm{y}_1]$ completes the proof.
\end{proof}

\subsection{Proof of Proposition 3}\label[appendix]{app:proposition3}
\begin{proposition}[\textbf{Vector fields under projection}]
  For $a_t=1-t$ and $b_t=t$, the $\mathcal{Y}$-space marginal vector field $\bm{u}_t^Y(\bm{y})$ can be expressed by $\bm{v}_{\bm{\theta}^*}^X(\bm{y},t)$ as
  \begin{equation}
    \bm{u}_t^Y(\bm{y}) = \bm{Av}_{\bm{\theta}^*}^X(\bm{y},t) - \frac{c_t}{1-t}\left[c_t\bm{I}_{Cd}+2\bm{AA}^*\right]^{-1} \left[(1-t)\bm{Av}_{\bm{\theta}^*}^X(\bm{y},t)+\bm{y}\right],
  \end{equation}
  where $c_t\triangleq \frac{(1-t)^2}{t}\sigma_0^2$.
  In addition, left-multiplying both sides with $\bm{A}^*$ gives the more computationally friendly formula when $Cd>D$:
  \begin{equation}
    \bm{A}^*\bm{u}_t^Y(\bm{y}) = \bm{A}^*\bm{Av}_{\bm{\theta}^*}^X(\bm{y},t) - \frac{c_t}{1-t}\left[c_t\bm{I}_{D}+2\bm{A}^*\bm{A}\right]^{-1} \bm{A}^*\left[(1-t)\bm{Av}_{\bm{\theta}^*}^X(\bm{y},t)+\bm{y}\right].
  \end{equation}
\end{proposition}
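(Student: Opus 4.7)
The plan is to eliminate the score function $\nabla_{\bm{y}}\log p_t^Y(\bm{y})$ by combining the two expressions already available for $\bm{u}_t^Y(\bm{y})$ in \cref{lemma2} and \cref{lemma3}. Specialising to $a_t=1-t$, $b_t=t$ (so $a_t'=-1$, $b_t'=1$), \cref{lemma2} rearranges to
\begin{equation*}
\nabla_{\bm{y}}\log p_t^Y(\bm{y}) = \frac{1}{(1-t)\sigma_0^2}\bigl[\bm{u}_t^Y(\bm{y}) - \bm{Av}_{\bm{\theta}^*}^X(\bm{y},t)\bigr],
\end{equation*}
while \cref{lemma3} collapses, after simplifying $b_t'-\tfrac{a_t'}{a_t}b_t=\tfrac{1}{1-t}$, to $(1-t)\bm{u}_t^Y(\bm{y}) = -\bm{y} - t\,(2\bm{AA}^*)\nabla_{\bm{y}}\log p_t^Y(\bm{y})$.

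Substituting the first identity into the second and multiplying through by $(1-t)\sigma_0^2/t$, using $c_t=(1-t)^2\sigma_0^2/t$ and $\tfrac{c_t}{1-t}=\tfrac{(1-t)\sigma_0^2}{t}$, I expect to obtain the linear system
\begin{equation*}
\bigl[c_t\bm{I}_{Cd}+2\bm{AA}^*\bigr]\bm{u}_t^Y(\bm{y}) = (2\bm{AA}^*)\bm{Av}_{\bm{\theta}^*}^X(\bm{y},t) - \tfrac{c_t}{1-t}\bm{y}.
\end{equation*}
Solving and then rewriting $2\bm{AA}^* = [c_t\bm{I}_{Cd}+2\bm{AA}^*] - c_t\bm{I}_{Cd}$ inside the first term allows $[c_t\bm{I}_{Cd}+2\bm{AA}^*]^{-1}$ to cancel on the $\bm{Av}_{\bm{\theta}^*}^X$ piece, producing the stated form $\bm{u}_t^Y(\bm{y}) = \bm{Av}_{\bm{\theta}^*}^X(\bm{y},t) - \tfrac{c_t}{1-t}[c_t\bm{I}_{Cd}+2\bm{AA}^*]^{-1}[(1-t)\bm{Av}_{\bm{\theta}^*}^X(\bm{y},t)+\bm{y}]$.

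For the second, dimension-reduced formula, I would left-multiply both sides by $\bm{A}^*$ and apply the push-through identity $\bm{A}^*[c_t\bm{I}_{Cd}+2\bm{AA}^*]^{-1} = [c_t\bm{I}_{D}+2\bm{A}^*\bm{A}]^{-1}\bm{A}^*$, which is immediate from $\bm{A}^*(c_t\bm{I}_{Cd}+2\bm{AA}^*) = (c_t\bm{I}_{D}+2\bm{A}^*\bm{A})\bm{A}^*$ by pulling $\bm{A}^*$ through term-by-term. The main obstacle is only careful bookkeeping of the constants $a_t,a_t',b_t,b_t',\sigma_0,c_t$; there is no analytic difficulty, since invertibility of $c_t\bm{I}_{Cd}+2\bm{AA}^*$ for $t\in(0,1)$ is immediate (a strictly positive perturbation of a PSD matrix), and both lemmas reduce the problem to an algebraic manipulation.
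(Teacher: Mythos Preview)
Your proposal is correct and follows the same strategy as the paper: eliminate the score $\nabla_{\bm y}\log p_t^Y(\bm y)$ by combining \cref{lemma2} and \cref{lemma3}, then solve the resulting linear equation for $\bm u_t^Y(\bm y)$ and apply the push-through identity for the $\bm A^*$-multiplied version. The only difference is the direction of substitution---the paper isolates the score from \cref{lemma3} (incurring a factor $(2\bm{AA}^*)^{+}$) and plugs it into \cref{lemma2}, whereas you isolate the score from \cref{lemma2} and plug into \cref{lemma3}; your route is marginally cleaner since it bypasses the pseudoinverse and lands directly on the invertible system $[c_t\bm I_{Cd}+2\bm{AA}^*]\,\bm u_t^Y(\bm y)=(2\bm{AA}^*)\bm{Av}_{\bm\theta^*}^X-\tfrac{c_t}{1-t}\bm y$.
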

\begin{proof}
  For $a_t=1-t$ and $b_t=t$, \cref{lemma3} indicates
  \begin{equation}
    \nabla_{\bm{y}}\log p_t^Y(\bm{y}) = -\frac{1}{t}(2\bm{AA}^*)^{+} \left[ \bm{y} + (1-t)\bm{u}_t^Y(\bm{y)} \right].
  \end{equation}
  Substituting this into \cref{lemma2} gives the equation
  \begin{equation}
    \bm{u}_t^Y(\bm{y}) = \bm{Av}_{\bm{\theta}^*}^X(\bm{y},t) -\left( \frac{1-t}{t}\sigma_0^2\right)(2\bm{AA}^*)^{+}\left[\bm{y}+(1-t)\bm{u}_t^Y(\bm{y})\right].
  \end{equation}
  Denoting $\bm{u}\triangleq \bm{u}_t^Y(\bm{y})$ and $\bm{v}\triangleq \bm{v}_{\bm{\theta}^*}^X(\bm{y},t)$, then solving for $\bm{u}$ in the above equation gives
  \begin{equation}
    \bm{u} = \left(\bm{I}+c(2\bm{AA}^*)^{+}\right)^{-1}\left(\bm{Av}-\frac{c}{1-t}(2\bm{AA}^*)^{+}\bm{y}\right),
  \end{equation}
  where $c\triangleq \frac{(1-t)^2}{t}\sigma_0^2$.
  
  Note that
  \begin{equation}
    \left(\bm{I}+c(2\bm{AA}^*)^{+}\right)^{-1} = \bm{I}-c\bm{AA}^+\left(c\bm{I}+2\bm{AA}^*\right)^{-1},
  \end{equation}
  and 
  \begin{equation}
    \left(\bm{I}+c(2\bm{AA}^*)^{+})\right)^{-1}(2\bm{AA}^*)^{+}=\bm{AA}^+\left(c\bm{I}+2\bm{AA}^*\right)^{-1}.
  \end{equation}
  Therefore, 
  \begin{equation}
    \begin{aligned}
      \bm{u} &= \left[\bm{I}-c\bm{AA}^+\left(c\bm{I}+2\bm{AA}^*\right)^{-1}\right]\bm{Av} - \frac{c}{1-t} \bm{AA}^+\left(c\bm{I}+2\bm{AA}^*\right)^{-1}\bm{y} \\
      &= \bm{Av} - c\bm{AA}^+\left(c\bm{I}+2\bm{AA}^*\right)^{-1}\left[\bm{Av}+\frac{1}{1-t}\bm{y}\right] \\
      &= \bm{Av} - \frac{c}{1-t}\bm{AA}^+\left(c\bm{I}+2\bm{AA}^*\right)^{-1}\left[(1-t)\bm{Av}+\bm{y}\right],
    \end{aligned}
  \end{equation}
  and by the identity $\bm{A}^*\bm{AA}^+(c\bm{I}_{Cd}+2\bm{AA}^*)^{-1}=(c\bm{I}_D+2\bm{A}^*\bm{A})^{-1}\bm{A}^*$,
  \begin{equation}
    \bm{A}^*\bm{u} = \bm{A}^*\bm{Av} - \frac{c}{1-t}(c\bm{I}_D+2\bm{A}^*\bm{A})^{-1}\bm{A}^*\left[(1-t)\bm{Av}+\bm{y}\right].
  \end{equation}
\end{proof}

\subsection{Proof of Lemma 4}\label[appendix]{app:lemma4}
\begin{lemma}
  Assuming that $p(\bm{x}_1)=\mathcal{CN}(\bm{0}, 2\bm{I}_D)$ and $\bm{y}_1=\bm{Ax}_1$, then we have
  \begin{equation}
    p(\bm{x}_1\mid\bm{y}_1) = \mathcal{CN}\left(\bm{x}_1\mid\bm{A}^+\bm{y}_1, 2(\bm{I}_D-\bm{A}^+\bm{A})\right).
  \end{equation}
\end{lemma}
\begin{proof}
  We need to calculate the covariance matrices $\bm{\Sigma}_{xx}$, $\bm{\Sigma}_{xy}$, $\bm{\Sigma}_{yx}$, $\bm{\Sigma}_{yy}$.
  The prior covariance is given by 
  \begin{equation}
    \bm{\Sigma}_{xx} = 2\bm{I}_D.
  \end{equation}
  The cross covariance is given by 
  \begin{equation}
    \begin{aligned}
     \bm{\Sigma}_{xy} &= \mathbb{E}[\bm{xy}^*] = \mathbb{E}[\bm{xx}^*]\bm{A}^* =2\bm{A}^*, \\
     \bm{\Sigma}_{yx}&= \bm{\Sigma}_{xy}^* = 2\bm{A}.
    \end{aligned}
  \end{equation}
  The observation covariance is given by 
  \begin{equation}
    \bm{\Sigma}_{yy} = \mathbb{E}[\bm{yy}^*] = \bm{A}\mathbb{E}[\bm{xx}^*]\bm{A}^* = 2\bm{AA}^*.
  \end{equation}
  Thus, the joint distribution is 
  \begin{equation}
    \mqty[\bm{x}_1\\\bm{y}_1] \sim \mathcal{CN}\left(\mqty[\bm{0}\\\bm{0}], 2\mqty[\bm{I} & \bm{A}^* \\ \bm{A} & \bm{AA}^*]\right).
  \end{equation}
  The conditional mean and covariance of the distribution $p(\bm{x}_1\mid\bm{y}_1)$ is given by
  \begin{equation}
    \begin{aligned}
      \bm{\mu}_{x\mid y} &= \bm{\mu}_x +\bm{\Sigma}_{xy}\bm{\Sigma}_{yy}^{-1}(\bm{y}_1-\bm{\mu}_y) = \bm{A}^+\bm{y}_1, \\
      \bm{\Sigma}_{x\mid y} &= \bm{\Sigma}_{xx} - \bm{\Sigma}_{xy}\bm{\Sigma}_{yy}^{-1}\bm{\Sigma}_{yx} = 2\bm{I}_D - (2\bm{A}^*)(2\bm{AA}^*)^{-1}(2\bm{A}) = 2(\bm{I}_D-\bm{A}^+\bm{A}).
    \end{aligned}
  \end{equation}
\end{proof}

\subsection{Proof of Proposition 4}\label[appendix]{app:proposition4}
\begin{proposition}[\textbf{Measurement consistency}]
  Given the sample 
  \begin{equation}
    \bm{x}_1 = \bm{A}^+\bm{y}_1 + (\bm{I}_D-\bm{A}^+\bm{A})\bm{z}_1,\quad \bm{z}_1\sim\mathcal{CN}(\bm{0},2\bm{I}_D),
  \end{equation}
  the forward process
  \begin{equation}
    \frac{\dd\bm{y}(t)}{\dd t} = \bm{u}_t^Y(\bm{y}).
  \end{equation}
  and the backward process 
  \begin{equation}
    \frac{\dd\bm{x}(t)}{\dd t} = \bm{v}_{\bm{\theta}^*}^X(\bm{y}(t),t),
  \end{equation}
  then as $\sigma_0\rightarrow 0$, we have
  \begin{equation}
    \bm{Ax}(t) = \bm{y}(t), \quad\forall\, t\in [0,1]
  \end{equation}
  which states that the backward process is consistent with the forward process.
  In particular, we have $\bm{Ax}_0=\bm{y}_0$.
\end{proposition}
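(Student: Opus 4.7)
The plan is to show that $\bm{A}\bm{x}(t) - \bm{y}(t)$ has identically zero derivative on $[0,1]$ and vanishes at $t=1$, so it vanishes throughout. This reduces the claim to (i) matching the derivatives of $\bm{A}\bm{x}(t)$ and $\bm{y}(t)$ in the limit $\sigma_0 \to 0$, and (ii) verifying $\bm{A}\bm{x}(1) = \bm{y}(1)$. The subtle point is that the Proposition~3 formula for $\bm{u}_t^Y$ contains the prefactor $c_t[c_t\bm{I}_{Cd}+2\bm{AA}^*]^{-1}$, which is singular as $c_t\to 0$ along $\mathcal{R}(\bm{A})^{\perp}$, so the $\sigma_0 \to 0$ limit must be taken with care.

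First, I would determine the limiting form of $\bm{u}_t^Y$. A spectral decomposition of $2\bm{AA}^*$ shows that $c_t[c_t\bm{I}_{Cd}+2\bm{AA}^*]^{-1}\to \bm{I}_{Cd}-\bm{AA}^+$, the orthogonal projector onto $\mathcal{R}(\bm{A})^{\perp}$, since the diagonal factors $c_t/(c_t+\lambda_i)$ vanish on the positive-eigenvalue modes and equal $1$ on the zero modes. Because $(\bm{I}_{Cd}-\bm{AA}^+)\bm{A}\bm{v}_{\bm{\theta}^*}^X=\bm{0}$, the bracket in Proposition~3 simplifies and one obtains $\bm{u}_t^Y(\bm{y}) = \bm{A}\bm{v}_{\bm{\theta}^*}^X(\bm{y},t) - \tfrac{1}{1-t}(\bm{I}_{Cd}-\bm{AA}^+)\bm{y}$ in the limit.

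Second, I would show $\bm{y}(t)\in\mathcal{R}(\bm{A})$ for every $t$, so the residual term vanishes along the trajectory. Setting $\bm{w}(t)\triangleq (\bm{I}_{Cd}-\bm{AA}^+)\bm{y}(t)$ and using the forward ODE together with idempotence of the projector yields $\dot{\bm{w}}(t) = -\tfrac{1}{1-t}\bm{w}(t)$, an elementary linear ODE with solution $\bm{w}(t) = (1-t)\bm{w}(0)$. Since $\bm{y}(0)=\bm{A}\bm{x}_0\in\mathcal{R}(\bm{A})$ as $\sigma_0 \to 0$, $\bm{w}(0)=\bm{0}$, so $\bm{y}(t)\in\mathcal{R}(\bm{A})$ throughout and consequently $\dot{\bm{y}}(t) = \bm{A}\bm{v}_{\bm{\theta}^*}^X(\bm{y}(t),t)$.

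Third, I verify the boundary condition and conclude. Using $\bm{x}_1 = \bm{A}^+\bm{y}_1 + (\bm{I}_D-\bm{A}^+\bm{A})\bm{z}_1$ together with $\bm{A}(\bm{I}_D-\bm{A}^+\bm{A})=\bm{0}$, one gets $\bm{A}\bm{x}_1 = \bm{A}\bm{A}^+\bm{y}_1 = \bm{y}_1$, the last equality because $\bm{y}_1\in\mathcal{R}(\bm{A})$ by step two. Applying $\bm{A}$ to the backward ODE gives $\tfrac{d}{dt}(\bm{A}\bm{x}(t)) = \bm{A}\bm{v}_{\bm{\theta}^*}^X(\bm{y}(t),t) = \dot{\bm{y}}(t)$, so $\bm{A}\bm{x}(t)-\bm{y}(t)$ has zero derivative and vanishes at $t=1$; hence it vanishes on all of $[0,1]$, and in particular $\bm{A}\bm{x}_0 = \bm{y}_0$. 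The main obstacle will be step one: the $\sigma_0 \to 0$ limit is singular along $\mathcal{R}(\bm{A})^{\perp}$, and correctly identifying $c_t[c_t\bm{I}+2\bm{AA}^*]^{-1}$ with the projector onto the null space of $\bm{AA}^*$ requires the spectral argument rather than naive substitution of $\sigma_0 = 0$.
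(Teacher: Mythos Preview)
Your proof is correct but takes a genuinely different route from the paper. The paper bypasses Proposition~3 entirely: it goes back to the conditional-expectation characterization of $\bm{v}_{\bm{\theta}^*}^X$ in Proposition~1, writes
\[
\bm{A}\bm{v}_{\bm{\theta}^*}^X(\bm{y},t)=\bm{A}\,\mathbb{E}[a_t'\bm{x}_0+b_t'\bm{x}_1\mid\bm{y}]=\mathbb{E}[a_t'\bm{y}_0+b_t'\bm{y}_1\mid\bm{y}]-a_t'\mathbb{E}[\bm{e}\mid\bm{y}],
\]
identifies the first expectation with $\bm{u}_t^Y(\bm{y})$ (as in the proof of Lemma~3), and argues that the residual $a_t'\mathbb{E}[\bm{e}\mid\bm{y}]$ vanishes as $\sigma_0\to 0$. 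It then integrates from $t=1$ exactly as you do. Your approach instead works from the explicit formula in Proposition~3, takes a spectral limit of $c_t[c_t\bm{I}+2\bm{AA}^*]^{-1}$ to obtain the projector $\bm{I}-\bm{AA}^+$, and uses a separate ODE argument to show $\bm{y}(t)\in\mathcal{R}(\bm{A})$ so that the projector term disappears along the trajectory. The paper's argument is shorter and avoids the spectral analysis and the auxiliary ODE for $\bm{w}(t)$; your argument is more explicit about exactly where the $\sigma_0\to 0$ limit is singular and why the singular part is killed by the range constraint, which the paper leaves implicit in the step $\mathbb{E}[\bm{e}\mid\bm{y}]\to 0$.
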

\begin{proof}
  According to \cref{proposition1}, we have 
  \begin{equation}
    \begin{aligned}
      \bm{A}\bm{v}_{\bm{\theta}^*}^X(\bm{y},t) &= \bm{A}\mathbb{E}\left[a_t'\bm{x}_0+b_t'\bm{x}_1\mid\bm{y}\right] \\ 
      &=\mathbb{E}\left[a_t'\bm{y}_0+b_t'\bm{y}_1-a_t'\bm{e}\mid\bm{y}\right] \\
      &=\mathbb{E}\left[a_t'\bm{y}_0+b_t'\bm{y}_1\mid\bm{y}\right] - \mathbb{E}[a_t'\bm{e}\mid\bm{y}] \\
      &\rightarrow \bm{u}_t^Y(\bm{y}),
    \end{aligned}
  \end{equation}
  as $\sigma_0\rightarrow 0$.
  Therefore, we have
  \begin{equation}
    \begin{aligned}
      \bm{A}\bm{x}(t) = \bm{A}\bm{x}_1 + \bm{A}\int_1^t\bm{v}_{\bm{\theta}^*}^X(\bm{y}({\tau}),\tau)\dd\tau \rightarrow \bm{y}_1 + \int_1^t \bm{u}_{\tau}^Y(\bm{y})\dd\tau = \bm{y}(t).
    \end{aligned}
  \end{equation}
\end{proof}

\section{Numerical Methods}\label[appendix]{app:numeric}
\subsection{Conjugate gradient}\label[appendix]{app:cg}
Conjugate gradient (CG) is a method for solving a linear system of equations $\bm{Ax}=\bm{b}$, when the matrix $\bm{A}$ is symmetric (or Hermitian) positive-definite (SPD) and very large (often sparse).
Direct methods like Gaussian elimination are impractical due to computational cost and memory requirements.
CG reframes the problem of solving a linear system as an optimization problem.
The solution to $\bm{Ax}=\bm{b}$ is precisely the vector $\bm{x}$ that minimizes the quadratic form:
\begin{equation}
  \phi(\bm{x}) = \frac{1}{2}\bm{x}^*\bm{Ax} - \bm{x}^*\bm{b}.
\end{equation}
An intuitive way to find this minimum is to take the steepest descent, where one repeatedly steps in the direction of the negative gradient $-\nabla\phi(\bm{x})=\bm{b}-\bm{Ax}$.
However, the steepest descent often performs poorly, taking many small zigzagging steps to reach the minimum.

CG dramatically improves upon this by choosing a sequence of search directions that are ``smarter''.
Instead of using the residual at each step, it generates a set of search directions $\{\bm{p}_0,\bm{p}_1,\dots,\bm{p}_{k-1}\}$ that are mutually $\bm{A}$-orthogonal, i.e., $\bm{p}_i^*\bm{A}\bm{p}_j=0$ for $i\neq j$.
This property is crucial: minimizing the quadratic function along a new search direction $\bm{p}_k$ does not compromise the minimization that has already been achieved in the previous directions.
At each iteration $k$, CG finds the optimal solution $\bm{x}_k$ within the affine Krylov subspace $\bm{x}_0+\mathcal{K}_k(\bm{A},\bm{r}_0)$, where $\bm{r}_0\triangleq\bm{b}-\bm{Ax}_0$ is the initial residual.
This means that after $k$ steps, CG has found the best possible solution that can be formed by a linear combination of $\{\bm{r}_0,\bm{Ar}_0,\dots,\bm{A}^{k-1}\bm{r}_0\}$.
This guarantees convergence for an $N\times N$ matrix in at most $N$ steps, though in practice, a good approximation is often found in far fewer iterations.
\cref{alg:cg} provides the CG algorithm in detail.

\begin{algorithm}[t]
  \caption{Conjugate Gradient (CG)}
  \label{alg:cg}
  \KwIn{A symmetric (or Hermitian) matrix $\bm{A}$, a vector $\bm{b}$, an intial guess $\bm{x}_0$, a maximum number of iterations $k$, and a tolerance $\epsilon$.}
  \KwOut{An approximate solution $\bm{x}$ to $\bm{Ax}=\bm{b}$.}
  Set $\bm{r}_0\gets\bm{b}-\bm{Ax}_0$ and $\bm{p}_0\gets\bm{r}_0$. \algorithmiccomment{Initialization} \\
  \For{$j=1,\dots,k$}{
    $\bm{v}_j\gets\bm{Ap}_j$ \\
    $\alpha_j \gets \frac{\bm{r}_j^*\bm{r}_j}{\bm{p}_j^*\bm{v}_j}$ \algorithmiccomment{Compute the step size} \\
    $\bm{x}_{j+1}\gets\bm{x}_j+\alpha_j\bm{p}_j$ \algorithmiccomment{Update solution} \\
    $\bm{r}_{j+1}\gets\bm{r}_j-\alpha_j\bm{v}_j$ \algorithmiccomment{Update residual} \\
    \If{$\norm{\bm{r}_{j+1}}_2<\epsilon$}{
      \textbf{break}
    }
    $\beta_{j}\gets\frac{\bm{r}_{j+1}^*\bm{r}_{j+1}}{\bm{r}_{j}^*\bm{r}_{j}}$ \algorithmiccomment{Calculate the improvement factor} \\ 
    $\bm{p}_{j+1}\gets\bm{r}_{j+1}+\beta_j\bm{p}_j$ \algorithmiccomment{Update search direction}
  }
  \Return{$\bm{x}_{j+1}$.}
\end{algorithm}

\begin{scriptsize}
  \bibliographystyle{IEEEtranN}
  \bibliography{biblio}
\end{scriptsize}

\end{document}